\newcommand{\nwc}{\newcommand}
\nwc{\xx}{\xi}
\nwc{\fix}[1]{\textcolor{red}{[#1]}}
\nwc{\note}[1]{\textcolor{blue}{#1}}
\nwc{\ip}[1]{\langle{#1}\rangle}
\nwc{\iph}[1]{\langle{#1}\rangle_{H^1}}
\nwc{\pow}{\alpha}
\nwc{\vp}{\varphi}
\nwc{\intR}{\int_{-\infty}^\infty}
\nwc{\fpl}{\frac{\pi}{L}}
\nwc{\R}{{\mathbb{R}}}
\nwc{\N}{{\mathbb{N}}}
\nwc{\Z}{{\mathbb{Z}}}
\nwc{\C}{{\mathbb{C}}}
\nwc{\D}{\partial}
\nwc{\eps}{\epsilon}
\nwc{\calA}{{\mathcal A}}
\nwc{\calB}{{\mathcal B}}
\nwc{\calF}{{\mathcal F}}
\nwc{\calG}{{\mathcal G}}
\nwc{\calH}{{\mathcal H}}
\nwc{\calI}{{\mathcal I}}
\nwc{\calJ}{{\mathcal J}}
\nwc{\calK}{{\mathcal K}}
\nwc{\calL}{{\mathcal L}}
\nwc{\calM}{{\mathcal M}}
\nwc{\calN}{{\mathcal N}}
\nwc{\calP}{{\mathcal P}}
\nwc{\calQ}{{\mathcal Q}}
\nwc{\calR}{{\mathcal R}}
\nwc{\calS}{{\mathcal S}}
\nwc{\calT}{{\mathcal T}}
\nwc{\calX}{{\mathcal X}}
\nwc{\calZ}{{\mathcal Z}}
\DeclareMathOperator{\sgn}{sgn}
\DeclareMathOperator{\sinc}{sinc}
\DeclareMathOperator{\re}{Re}
\nwc{\coloneq}{\colon=}
\nwc{\inv}{^{-1}}
\nwc{\dds}[1]{\frac{d#1}{ds}}
\nwc{\qua}{{\ \ }}
\nwc{\Heven}{H_{\rm even}}
\nwc{\Honeeven}{H_{\rm even}^1}
\nwc{\Ltwoeven}{L_{\rm even}^2}
\newtheorem{theorem}{Theorem}[section]
\newtheorem{corollary}[theorem]{Corollary}
\newtheorem{lemma}[theorem]{Lemma}
\newtheorem{proposition}[theorem]{Proposition}
\theoremstyle{remark}
\numberwithin{equation}{section}
\begin{document}

\title{Existence of solitary waves in particle lattices with power-law forces}
\author{Benjamin Ingimarson\footnote{Email address: ingimars@usc.edu} }
\author{Robert L. Pego\footnote{Email address: rpego@cmu.edu} }

\affil{%
Department of Mathematical Sciences\\ 
Carnegie Mellon University\\
Pittsburgh, PA 15213.}
%}

%\date{\today}
%\date{June 9, 2024}
\date{October 6, 2024}

\maketitle
\begin{abstract}
  We prove the existence of small solitary waves for one-dimensional lattices of particles
  that each repel every other particle with a force that decays as a power of distance.
 For force exponents $\alpha+1$ with $\frac43<\alpha<3$,
 we employ fixed-point arguments to find near-sonic solitary waves having scaled velocity profiles close to 
 non-degenerate solitary-wave profiles of fractional KdV or generalized Benjamin-Ono equations.
 These equations were recently found to approximately govern unidirectional long-wave motions in these lattices.
\end{abstract}
 \medskip
\noindent
{\it Keywords: } Solitons, Hamiltonian lattices, particle chains, forces of infinite range, traveling waves

\smallskip
\noindent
{\it Mathematics Subject Classification:}  Primary 37K40, 70F45; Secondary 37K60, 70H09, 35R11
%need % 37K60, 37K40, 70F45, 35Q51

\section{Introduction}

%[Basic intro]
In this work we prove an existence theorem for solitary waves of small amplitude
in an infinite lattice of particles which all interact with each other 
through long-range power-law forces.  
The equations of evolution that govern the particle positions $x_j$ 
as a function of time $t$ are
\begin{equation}\label{e:sys1}
    \ddot x_j = 
    -\pow \sum_{m=1}^\infty  
    \Bigl(  (x_{j+m}-x_j)^{-\pow-1} -(x_j-x_{j-m})^{-\pow-1}  \Bigr)\,,
    \quad j\in\Z.
\end{equation}
We require $x_j$ to strictly increase with $j$. 
E.g., $x_j = jh$  represents 
a rest state for a lattice of particles with uniform spacing $h>0$. 
The equations in \eqref{e:sys1} remain invariant under the scaling 
$x_j\mapsto h x_j$, $t\mapsto h^{(\pow+1)/2}t$ for $h>0$, as well as the Galilean group of translations and uniform motions.

When $1<\alpha<3$, we showed in a previous work \cite{IP24} that  
the unidirectional propagation of long-wave solutions of \eqref{e:sys1} 
is formally governed by  the nonlocal dispersive PDE 
\begin{equation}\label{e:BOalpha}
    \D_t u + u\D_x u + H|D|^\alpha u = 0,
\end{equation}
where $H$ is the Hilbert transform and the dispersion term
$f=H|D|^\alpha u$ has Fourier transform $\hat f(k)=(-i\sgn k)|k|^\alpha\hat u(k)$.
Subsequently, Wright~\cite{Wright.24} has rigorously proved that
long-wave solutions of \eqref{e:sys1} are close to solutions of
\eqref{e:BOalpha} over a suitable long time-scale provided $\alpha_*<\alpha<3$,
where $\alpha_*\approx 1.48$.

It is our aim in the present paper to prove that the system \eqref{e:sys1}
admits exact solitary wave solutions for speeds slightly exceeding 
the `sound speed' $c_\alpha$, 
which is the maximum speed of linear waves and is given by 
\begin{equation}\label{d:calpha}
 c_\alpha = \sqrt{\alpha(\alpha+1)\zeta_\alpha}\,,
\end{equation}
where $\zeta_s=\sum_{n=1}^\infty n^{-s}$ denotes the Riemann zeta function.
We will find such waves by approximation to solitary waves of \eqref{e:BOalpha}, 
which were first proved to exist by Weinstein~\cite{Weinstein.87} and 
Benjamin {\it et al.}~\cite{Benjamin.Bona.ea.90}.
The waves that we approximate need to have a non-degeneracy property
proved for a class of solutions including ground states
by Frank and Lenzmann \cite{Frank.Lenzmann.13}. 

In the case $\alpha=2$, the system \eqref{e:sys1} is an infinite
Calogero-Moser system. For this case, in \cite{IP24} we also established 
explicit formulas providing solitary waves 
having any supersonic speed $c>c_2=\pi$. 
The proof exploited some of the well-known completely integrable
structure of finite Calogero-Sutherland systems
to find periodic waves. 

Our present study builds instead on the formulation and methods
devised by Herrmann and Mikikits-Leitner in \cite{HML16}
in order to find solitary waves that approximate KdV solitons
for particle lattices with forces of any finite range.
The work \cite{HML16} in turn improved and simplified the method 
earlier employed by Friesecke and Pego in \cite{Friesecke.Pego.99} 
to obtain such a result for Fermi-Pasta-Ulam-Tsingou (FPUT) lattices,
which are lattices with nearest-neighbor forces.
Vainchtein~\cite{Vainchtein.22} has recently reviewed the literature 
concerning  solitary waves in particle lattices of various kinds.

We seek solitary waves having the following form:
\begin{equation}\label{d:xj}
x_j(t) = j- \eps^{\nu} U(\xx) , 
\quad \xx = \eps(j-c t), \quad c^2=c_\alpha^2+\eps^\mu.
\end{equation}
Consistent with the formal long-wave scaling found in \cite[Thm.~2.1]{IP24}
we take 
\[
\mu=\alpha-1, \qquad  \nu=\alpha-2. 
\]
For waves of the form in \eqref{d:xj}, the particle velocity $\dot x_j(t)=c\eps^\mu U'(\xx)$.
As we discuss in Section~\ref{s:equations} below, 
the scaled velocity profile $W=U'$ needs to satisfy a nonlocal, nonlinear eigenvalue problem, 
which formally reduces in the limit $\eps\to0$ 
to a nonlocal quadratic equation, namely
\begin{equation}\label{e:W0eq-intro}
    W + \kappa_3 |D|^\mu W = \tfrac12\kappa_2 W^2 \,,
\end{equation}
where $\kappa_3$ and $\kappa_2$ are positive constants as found in \cite{IP24};
see \eqref{e:betaalpha2} and \eqref{d:Q0} below.
Solutions of \eqref{e:W0eq-intro} provide solitary waves of \eqref{e:BOalpha}
after appropriate scaling.
A profile $W$ satisfying \eqref{e:W0eq-intro} is called {\em non-degenerate} if the linearized operator
\begin{equation}
    L_+ = I + \kappa_3|D|^\mu - \kappa_2 W \,,
\end{equation}
acting in $L^2(\R)$, has one-dimensional kernel spanned by the derivative $W'$.

As stated by Frank and Lenzmann \cite[p.~262]{Frank.Lenzmann.13},
for \eqref{e:W0eq-intro} to admit any solution having finite energy 
(i.e., in $H^{\mu/2}(\R)\cap L^3(\R)$), it is necessary that
\begin{equation}\label{c:alpha}
    \tfrac43<\alpha<3 \,,
\end{equation}
due to Pohozaev identities. 
(See \cite[sec.~3.5]{Ambrosio.book} for the key to the nontrivial proof of these identities.)
Consequently our results for $1<\alpha<3$ will be restricted to the smaller range in \eqref{c:alpha}.
For all $\alpha$ in this smaller range, however, ground-state solutions 
(positive, even, energy-minimizers) 
exist and are proved in \cite{Frank.Lenzmann.13} to be non-degenerate.
Moreover, any solution of \eqref{e:W0eq-intro} must be positive, as discussed in Section~\ref{s:positive}
below.

To find profiles of solitary waves of \eqref{e:sys1},
similar to \cite{Friesecke.Pego.99} and \cite{HML16} 
we formulate a fixed-point equation and regard it as a perturbation of
a corresponding fixed-point equation 
for solutions of \eqref{e:W0eq-intro}. 
We analyze the fixed-point equations, however, 
in the space of even functions in $H^1(\R)$, rather than in $L^2(\R)$ as was done in \cite{HML16}.
This has the natural advantage of working in a Banach algebra of functions,
and we obtain further simplification by initially seeking less precise control
over the size of the correction.
In principle, spectral analysis of the linearization of the fixed-point equation 
could have become more complicated in $H^1(\R)$ instead of $L^2(\R)$. 
But we were able to substantially simplify spectral analysis in $H^1(\R)$ 
by extracting  from \cite{HML16}  a key compactness argument 
and casting it into an abstract form; see Lemma~\ref{lem:strongnorm} below.

The plan of this paper is as follows. In Section~\ref{s:prelim} we develop preliminaries.
We derive the fixed-point equations governing solitary wave profiles for \eqref{e:sys1} and its formal limit \eqref{e:W0eq-intro}
in Section~\ref{s:equations} and precisely state the main theorem. 
In Section~\ref{s:fixedpt} we prove the existence of solitary wave profiles for \eqref{e:sys1}. 
For $\frac43<\alpha<3$, given any non-degenerate even solution $W_0\in H^1(\R)$ of \eqref{e:W0eq-intro}, 
when $c-c_\alpha$ is positive and small we find a (locally unique) scaled profile $W_\eps=U'$ that is even, positive, and close to $W_0$ in $H^1$,
providing a solitary wave for \eqref{e:sys1} as in \eqref{d:xj}.
We carry out a fixed-point analysis based on a quantitative fixed-point lemma from \cite{Friesecke.Pego.99}.
Control over the deviation $\|W_\eps-W_0\|_{H^1}$ comes by adapting the rigorous residual estimates of Wright~\cite{Wright.24}.
When $\alpha=2$ and $\eps$ is sufficiently small, the waves we find here agree with the ones 
provided by the implicit formulas in \cite{IP24}; see Subsection~\ref{ss:CM}.

We establish positivity and smoothness of the profiles $W_\eps$ in Section~\ref{s:positive}. There we also show that
the unscaled velocity profiles, given by $v_c(z)=c\eps^\mu W_\eps(\eps z)$, are analytic as functions of wave speed.
We study a Hamiltonian energy $\calH$ for the solitary waves of~\eqref{e:sys1} in Section~\ref{s:hamiltonian}. 
For $\alpha=2$ we find explicit formulas by using results from \cite{IP24}.
The sign of $d\calH/dc$ agrees with the sign of $\alpha-\frac32$ when the latter is non-zero, 
for sufficiently small $\eps$ depending on $\alpha$. 
In a variety of lattice wave and other Hamiltonian wave stability problems, 
a change in the sign of $d\calH/dc$ has been associated with transitions to  instability; 
e.g.,  see~\cite{GSSI.1987,GSSII.1990,Friesecke.Pego.2002,Friesecke.Pego.2004a,Vainchtein.22,Cuevas-MaraverEA.2017,DuranEA2022}.
Whether this may be the case for systems such as \eqref{e:sys1} remains an open problem. 

The value $\alpha=\frac32$ is $L^2$-critical for \eqref{e:W0eq-intro}.
In this regard it is curious that Wright's result in \cite{Wright.24}, showing that solutions 
of \eqref{e:BOalpha} approximate long-wave solutions of \eqref{e:sys1} over long times,
is valid for all $\alpha$ in a neighborhood of $\frac32$.

In the interest of brevity, we do not address the range $\alpha\ge3$ in the present paper.  In that range naturally one expects a KdV limit, 
but also one should be able to treat a much more general family of interparticle forces. 
In particular, the case of alternating signs studied formally in \cite{IP24} seems particularly challenging and deserves a separate study.

\section{Preliminaries}\label{s:prelim}
The Maclaurin series for 
$Z(r):=\alpha(1-r)^{-\alpha-1}$ takes the form
\begin{equation} \label{e:Zseries}
    Z(r) =  \sum_{k=0}^\infty \alpha_k r^k\,,
    \qquad
    \alpha_k = 
     \frac{\alpha(\alpha+1)\cdots(\alpha+k)}{k!} \,.
\end{equation}
The coefficients $\alpha_k$ are defined differently than in \cite{IP24}
for present convenience.

In the standard Sobolev space $H^s=H^s(\R)$, $s\ge0$, we use the inner product
given in terms of the Fourier transform 
$\hat f(k)=\frac1{2\pi}\int_\R f(\xx)e^{-ik\xx}\,d\xx$ 
by 
\begin{equation}\label{e:Hs_ip}
 \ip{f,g}_{H^s} = \int_{\R} (1+|k|^2)^s \hat f(k)\overline{\hat g(k)}\,dk\,.
\end{equation}
We recall that  for each $s>\frac12$,  there is a constant $C_{H^s}\ge1$ such that 
\begin{equation}\label{e:Balg}
   \|fg\|_{H^s}  \le C_{H^s} \|f\|_{H^s}\|g\|_{H^s} \quad\text{for all $f,g\in H^s$.}
\end{equation}
We take the inner product in $L^2=L^2(\R)$ identical to that for $s=0$  above.

We let $\Ltwoeven$ denote the subspace of even elements of $L^2$,
elements $f$ for which $f(-\xx)=f(\xx)$ for a.e.~$\xx\in\R$ 
(or equivalently $\hat f(-k)=\overline{\hat f(k)}$ for a.e.~$k\in\R$),
and we let $\Heven^s=H^s\cap \Ltwoeven$.
The space $\calL(H^s)$ is the space of bounded linear operators on $H^s$,
equipped with the operator norm.

Following \cite{HML16}, we will make heavy use of the 
symmetric averaging operators $\calA_\eta$ defined for $\eta>0$ on $H^s$
for $s\ge0$ by 
\begin{equation}
    \calA_\eta f(\xx) = \frac1\eta\int_{-\eta/2}^{\eta/2} f(\xx+z)\,dz,
\end{equation}
which satisfy
\begin{equation}\label{e:Aetadiff}
   f(\xx+\tfrac12\eta) - f(\xx-\tfrac12\eta)  = 
   \eta \calA_\eta (\D_\xx f)(\xx)  =  \eta \D_\xx (\calA_\eta f)(\xx)  \, ,
\end{equation}
\begin{equation}\label{e:AetaFourier}
    \widehat{\calA_\eta f}(k) = \sinc(\tfrac12 \eta k) \hat f(k)\, .
\end{equation}
Here $\sinc(z)=(\sin z)/z$.  
From the Fourier representation it is clear that
the operators $\calA_\eta$ map $H^s$ into $H^{s+1}$ continuously.
Moreover since $\sinc(\frac12\eta k)$ lies in $[-1,1]$ and converges to $1$
as $\eta\to 0$ for any $k$, it is clear that 
$\calA_\eta$ is nonexpansive on $H^s$ and 
converges to the identity {\em strongly} (but not in operator norm).
That is, for any $f\in H^s$ we have
\begin{align}
& \|\calA_\eta f\|_{H^s} \le \|f\|_{H^s} \,,
\label{e:Aeta_bound}
\end{align}
and
\begin{equation}
\|\calA_\eta f-f\|^2_{H^s}  = 
    \int_\mathbb{R} \left(1+ k^{2}\right)^s \left| \sinc (\tfrac12\eta k) - 1\right|^{2} |\hat{f}(k)|^2 \, \mathrm dk
\to 0 
\label{e:Aeta_slim}
\end{equation}
as $\eta\to0$. Because $|1-\sinc z |\le \frac16 z^2$ for all $z$ it also follows 
\begin{equation}\label{e:Aeta_est}
\|\calA_\eta f-f\|_{H^s}  \le \frac{\eta^2}{24} \|f\|_{H^{s+2}}  
\end{equation}
for all $f\in H^{s+2}$. Note further that $\calA_\eta f$ is even if and only if $f$ is even.
Moreover, if $f$ is even and unimodal (i.e., even, and decreasing on $(0,\infty)$) then $\calA_\eta f$ is also,
since for $f$ smooth we have $\D_\xx(\calA_\eta f)\le0$ for $\xx\ge0$ by \eqref{e:Aetadiff}.

\section{Equations for solitary-wave profiles}\label{s:equations}
In this section, we follow the approach of Herrmann and Mikikits-Leitner in \cite{HML16} to formulate a fixed point equation 
whose solution provides velocity profiles of solitary waves for \eqref{e:sys1}.

\subsection{Equations for profiles on lattices}
Due to the ansatz~\eqref{d:xj}, by \eqref{e:Aetadiff} and since $\mu=\alpha-1=\nu+1$ we can write 
\begin{align*}
 x_{j+m}-x_j = m - \eps^\nu(U(\xx+\eps m)-U(\xx)) 
  = m(1-\eps^\mu \calA_{m\eps}W(\xx+\tfrac12m\eps)),\\
 x_{j}-x_{j-m} = m - \eps^\nu(U(\xx)-U(\xx-\eps m)) 
  = m(1-\eps^\mu \calA_{m\eps}W(\xx-\tfrac12m\eps)),
\end{align*}
where 
\[
W = U'.
\]
By consequence,
\begin{align*}
    \alpha(x_{j+m}-x_j)^{-\alpha-1} &= Z(\eps^\mu\calA_{m\eps}W(\xx+\tfrac12m\eps)) \,m^{-\alpha-1},\\
    \alpha(x_j-x_{j-m})^{-\alpha-1} &=  Z(\eps^\mu\calA_{m\eps}W(\xx-\tfrac12m\eps))\,m^{-\alpha-1},
\end{align*}
and after taking the difference and using formula \eqref{e:Aetadiff} again,
we find that for system \eqref{e:sys1} to be satisfied 
it is necessary and sufficient that
\begin{align} \label{e:nevp0}
c^2\eps^{\alpha} \,\D_\xx W(\xx) 
&= \sum_{m=1}^\infty \frac{m\eps}{m^{\alpha+1}} 
  \D_\xx\calA_{m\eps} Z(\eps^\mu\calA_{m\eps}W)
\,.
\end{align}
We seek (weak) solutions of this equation in $H^1$. By requiring that 
\begin{equation}\label{c:eWlt1}
\eps^\mu C_{H^1}\|W\|_{H^1}<1 \,, 
\end{equation}
we ensure that the MacLaurin series for $Z(\eps^\mu\calA_{m\eps}W)-Z(0)$ converges in $H^1$,
avoiding the singularity of $Z(r)$ at $r=1$.
Since $Z(0)=\alpha$, 
we thus find it necessary and sufficient that $W$ should satisfy 
the nonlocal nonlinear eigenvalue problem 
\begin{equation}\label{e:nevp}
c^2 W = \sum_{m=1}^\infty \frac{\eps^{-\mu}}{m^\alpha} 
\calA_{m\eps} (Z(\eps^\mu\calA_{m\eps}W)-\alpha) .
\end{equation}
As in \cite{HML16}, we recast this equation by collecting linear terms
on the left-hand side and separating the quadratic terms. 
Recall that $c^2=\eps^\mu+\alpha_1\sum_{m\ge1}m^{-\alpha}$ 
from \eqref{d:xj} and \eqref{d:calpha},
and define
\begin{align}
Z_3(r) &= \alpha(1-r)^{-\alpha-1}-\alpha-\alpha_1 r -\alpha_2r^2,
\label{d:Z3}
\\
\calB_\eps W &= W + \alpha_1\sum_{m=1}^\infty \frac{\eps^{-\mu}}{m^\alpha}(W - \calA_{m\eps}^2 W),
  \label{d:Beps}
\\
\calQ_\eps(W) &= \alpha_2\sum_{m=1}^\infty \frac{1}{m^\alpha}\calA_{m\eps}(\calA_{m\eps}W)^2,
  \label{d:Qeps}
\\
\calZ_\eps(W) &= 
  \sum_{m=1}^\infty \frac{\eps^{-2\mu}}{m^\alpha}\calA_{m\eps} Z_3(\eps^\mu\calA_{m\eps}W).
  \label{d:Zeps}
\end{align}
After substitution and further dividing by $\eps^\mu$ we find \eqref{e:nevp} equivalent to 
\begin{equation}\label{e:evp3}
%\boxed
{\calB_\eps W = \calQ_\eps(W) + \calZ_\eps(W).}
\end{equation}

\subsection{Formal limit equations}

The operator $\calB_\eps$ is a Fourier multiplier,
with $\widehat{\calB_\eps W}(k) = b_\eps(k)\hat W(k)$ 
where the symbol $b_\eps$ is given by 
\begin{equation} \label{d:beps}
    b_\eps(k) = 1+ \alpha_1 \eps \sum_{m=1}^\infty  
    \frac{1-\sinc^2(\frac12 km\eps)}{(m\eps)^\alpha} \,.
\end{equation}
We have 
\begin{equation}\label{e:beps_bound1}
1\le b_\eps(k)\le \eps^{-\mu}\alpha_1\zeta_\alpha \quad\text{ for all $k$},
\end{equation} 
so $\calB_\eps$ is bounded on $H^s$ for any fixed $\eps>0$, with nonexpansive inverse $\calB_\eps\inv$.

For $1<\alpha<3$, similar to what was noted in \cite{IP24},
the sum in \eqref{d:beps} approximates a convergent integral. Indeed, as $h\to0^+$,
\begin{equation}\label{e:Slim}
S_\alpha(h):= \alpha_1 h \sum_{m=1}^\infty \frac{1-\sinc^2(\frac12 mh)}{(mh)^\alpha}
\to \kappa_3 \,,
\end{equation}
where, with notation consistent with \cite[Thm.~2.1]{IP24},
\begin{equation}\label{d:kappa3}
\kappa_3
:= \alpha_1\int_0^\infty \frac{1-\sinc^2(z/2)}{z^\alpha}\,dz\,.
\end{equation}
Since $b_\eps(k)=b_\eps(|k|)=1+ |k|^{\alpha-1}S_\alpha(\eps |k|)$,  we have that for each fixed $k\in\R$,
\begin{equation}\label{e:bepslim}
    b_\eps(k) \to b_0(k) :=  1+\kappa_3 |k|^{\alpha-1}  \quad\text{as $\eps\to0$}.
    \end{equation}
We let $\calB_0$ denote the Fourier multiplier with symbol $b_0(k)$, writing
\begin{equation}\label{d:B0}
    \calB_0 W = W + \kappa_3|D|^{\alpha-1}W\,.
\end{equation}
We remark that due to the formula for the integral in \cite[Remark 3]{IP24}, we have
\begin{equation}\label{e:betaalpha2}
\kappa_3=
\begin{cases}
    \pi, & \alpha=2,\\
    -2\sin(\frac12\pi\alpha)\Gamma(1-\alpha), & \alpha\in(1,2)\cup(2,3).
\end{cases}
\end{equation}

For the quadratic term in \eqref{e:evp3}, we find 
%$\calA_{m\eps}(\calA_{m\eps}f)^2 \to f^2$ in $H^s$ 
$\|\calA_{m\eps}(\calA_{m\eps}f)^2 - f^2\|_{H^s}\to0$ 
as $\eps\to0$ for any $f\in H^s$ with $s>\frac12$, 
by using the strong convergence property \eqref{e:Aeta_slim}
and the fact that $H^s$ is a Banach algebra.
Defining 
\begin{equation}\label{d:Q0}
 \calQ_0(f) := \tfrac12 \kappa_2 f^2, \qquad \kappa_2 = 2\alpha_2\zeta_\alpha\,,
\end{equation}
it follows that as $\eps\to0$,
\begin{equation}\label{e:Qeps_slim}
\calQ_\eps(f) - \calQ_0(f) = \alpha_2\sum_{m=1}^\infty m^{-\alpha}
\left(\calA_{m\eps}(\calA_{m\eps}f)^2 - f^2\right)
%in $H^s$ for any $f\in H^s$ with $s>\frac12$, 
\to 0 %\quad\text{as $\eps\to0$} , 
\end{equation}
in $H^s$ norm,
by applying the dominated convergence theorem to the sum after taking norms term by term.

We will establish a rigorous bound on the higher-order term $\calZ_\eps(W)$ later. 
For now, we note that it is formally $O(\eps^\mu)$ since $Z_3(r)=O(r^3)$.
Thus we expect that as $\eps\to0$, \eqref{e:evp3} should approximate equation \eqref{e:W0eq-intro},
which we can recast in the form
\begin{equation}\label{e:W0eq}
%\boxed
{\calB_0 W = \calQ_0(W)\,.}
\end{equation}
This equation determines the profile of solitary waves of speed $\tilde c=1/\kappa_1$ 
of the nonlocal dispersive equation
\begin{equation}
    \kappa_1\D_t u + \kappa_2 u\D_x u +\kappa_3 H|D|^\alpha u = 0\,.
\end{equation}
According to \cite[Thm.~2.1]{IP24} and the rigorous results of Wright~\cite{Wright.24}, 
this equation, with $\kappa_1=2c_\alpha$, 
is the correctly scaled formal limit of \eqref{e:sys1} 
consistent with the long-wave ansatz 
\[
 x_j=j+\eps^\nu v(\eps(j-c_\alpha t),\eps^\alpha t)\,,  \quad u = -\D_\xx v.
 \]
\subsection{Fixed-point formulation and main result}
Similar to \cite{HML16}, our approach to find solutions of \eqref{e:evp3}
is to fix a known {\em even} solution $W_0$ to \eqref{e:W0eq}, meaning
an even solution of the fixed-point equation
\begin{equation}\label{d:G0}
    W_0 = \calF(W_0):= \calB_0\inv \calQ_0(W_0)\,,
\end{equation}
and  solve the fixed-point corresponding to \eqref{e:evp3},
which is 
\begin{equation}\label{d:Geps}
    W = \calG_\eps(W):= \calB_\eps\inv (\calQ_\eps(W)+\calZ_\eps(W))\,,
\end{equation}
through a perturbation analysis. 
We will suppose that $W_0\in\Heven^1$ is a 
given solution of \eqref{e:W0eq} that is non-degenerate. 
Recall this means that  the linearized operator 
\begin{equation}
\calL_+=\calB_0-D\calQ_0(W_0)  \,,
\end{equation}
acting in $L^2$, has one-dimensional kernel spanned by the odd function $W_0'$.
By a bootstrapping argument, it follows that $W_0 = \calB_0\inv (\frac12\kappa_2 W_0^2)$
belongs to $H^s_{\rm even}$ for all $s>0$, hence is smooth. 
Moreover, $W_0$ is positive, since $W_0^2$ is positive and the Green's function for the operator 
$\calB_0$ is positive (see Section~\ref{s:positive} below).

Our main results are stated precisely as follows. 

\begin{theorem}\label{t.main} 
Assume $\frac43<\alpha<3$ and $W_0\in H^1$ is an even solution of \eqref{e:W0eq}
that is non-degenerate. Then there exist positive constants $\eps_0$, $\delta$,
and $C$ such that the following hold, for each $\eps\in(0,\eps_0)$: 
\begin{itemize}
    \item[(i)]  $\calG_\eps$
    has a unique fixed point $W_\eps\in \Heven^1$ satisfying $\|W_\eps-W_0\|_{H^1}\le\delta$.
\item[(ii)] $\|W_\eps-W_0\|_{H^1}\le C\eps^\gamma, 
\quad\text{where}\quad
\gamma = \begin{cases}
\alpha-1 &\alpha\in(1,2],\\ 
3-\alpha&\alpha\in (2,3).\end{cases}$
    \item[(iii)] $W_\eps$ is everywhere positive.
    \item[(iv)] $W$ is smooth, with $W\in H^\infty$.
\end{itemize}
Furthermore, the map  $c\mapsto v_c\in\Heven^1$,
from wave speed $c$ to the 
unscaled velocity profile $v_c$ given by 
\begin{equation}\label{d:vc}
   v_c(z) := c\eps^\mu W_\eps(\eps z), \qquad c^2 = c_\alpha^2+\eps^\mu,
\end{equation}
is analytic. 
%\fix{Hmm, is there better regularity of $W_\eps$ in $\xx$?}
\end{theorem}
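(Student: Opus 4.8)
The plan is to prove Theorem~\ref{t.main} by combining a quantitative fixed-point lemma applied uniformly in $\eps$ with the residual estimates alluded to in the introduction, and then to upgrade regularity and analyticity by bootstrapping and by viewing the fixed-point equation as depending analytically on a parameter. I would organize the argument as follows.

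\textbf{Step 1: Spectral setup in $H^1$.} First I would transfer the non-degeneracy of $\calL_+=\calB_0-D\calQ_0(W_0)$ from $L^2$ to $\Honeeven$. Restricting to even functions already removes the kernel direction $W_0'$ (which is odd), so the issue is invertibility of $I-D\calF(W_0)=\calB_0\inv\calL_+$ on $\Honeeven$. Since $\calB_0\inv$ maps $H^s$ to $H^{s+\mu}$ and $D\calQ_0(W_0)f=\kappa_2 W_0 f$ is a smooth-coefficient multiplication operator (using the Banach-algebra property \eqref{e:Balg} and smoothness of $W_0$), $D\calF(W_0)$ is compact on $\Honeeven$; so $I-D\calF(W_0)$ is Fredholm of index zero, and its kernel on $\Honeeven$ equals the even part of $\ker\calL_+$, which is trivial. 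Hence $(I-D\calF(W_0))\inv$ is bounded on $\Honeeven$, with some norm $\Lambda<\infty$. This is essentially where the abstract compactness lemma referred to as Lemma~\ref{lem:strongnorm} enters: it lets one conclude invertibility on $\Honeeven$ of $I-D\calG_\eps(W_0)$ for small $\eps$ from the strong (not norm) convergence $\calB_\eps\inv\to\calB_0\inv$, $D\calQ_\eps(W_0)\to D\calQ_0(W_0)$ established via \eqref{e:Aeta_slim}, \eqref{e:Qeps_slim}; I would quote it to get a uniform bound $\|(I-D\calG_\eps(W_0))\inv\|\le 2\Lambda$ for $\eps<\eps_0$.

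\textbf{Step 2: Fixed point via the quantitative lemma.} With $A:=I-D\calG_\eps(W_0)$ invertible with bound $\le 2\Lambda$ uniformly, I would apply the quantitative fixed-point lemma of Friesecke--Pego: if the residual $\|\calG_\eps(W_0)-W_0\|_{H^1}$ is small enough relative to $\Lambda$ and the Lipschitz constant of $D\calG_\eps$ on a ball around $W_0$, then $\calG_\eps$ has a unique fixed point $W_\eps$ in that ball, with $\|W_\eps-W_0\|_{H^1}\le 2\cdot 2\Lambda\|\calG_\eps(W_0)-W_0\|_{H^1}$. The Lipschitz bound on $D\calG_\eps$ on a fixed ball comes from the Banach-algebra estimate \eqref{e:Balg}, nonexpansiveness of $\calB_\eps\inv$ \eqref{e:beps_bound1} and of $\calA_{m\eps}$ \eqref{e:Aeta_bound}, and the convergence of the $m$-sum with weight $m^{-\alpha}$, all uniformly in $\eps<\eps_0$ once \eqref{c:eWlt1} holds on the ball; this yields (i).

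\textbf{Step 3: Residual estimate and the rate $\gamma$.} For (ii) I would estimate $\|\calG_\eps(W_0)-W_0\|_{H^1}=\|\calB_\eps\inv(\calQ_\eps(W_0)+\calZ_\eps(W_0))-\calB_0\inv\calQ_0(W_0)\|_{H^1}$. Writing this as $\calB_\eps\inv[(\calQ_\eps(W_0)-\calQ_0(W_0)) + (\calB_0-\calB_\eps)\calB_0\inv\calQ_0(W_0) + \calZ_\eps(W_0)]$ and using nonexpansiveness of $\calB_\eps\inv$, it suffices to bound three pieces: the quadratic-term error $\calQ_\eps(W_0)-\calQ_0(W_0)$, the symbol error $(b_0-b_\eps)\hat W_0$, and the higher-order term $\calZ_\eps(W_0)$. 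Each is $O(\eps^\gamma)$ with $\gamma=\alpha-1$ for $\alpha\le2$ and $\gamma=3-\alpha$ for $\alpha>2$; the cleanest route, as the introduction indicates, is to adapt Wright's rigorous residual estimates \cite{Wright.24}, splitting each $m$-sum into $m\lesssim 1/\eps$ (where the quadratic Taylor error in $\sinc$ or $Z$ contributes, using $|1-\sinc z|\le\frac16 z^2$ as in \eqref{e:Aeta_est} and $Z_3(r)=O(r^3)$) and $m\gtrsim 1/\eps$ (tail bounded by $\sum m^{-\alpha}$ or $\int z^{-\alpha}\,dz$). The exponent split at $\alpha=2$ reflects whether the quadratic remainder or the tail dominates. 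I expect this residual/rate estimate to be the main technical obstacle: it requires careful uniform-in-$\eps$ bookkeeping of the infinite sums in \eqref{d:Qeps}--\eqref{d:Zeps}, control of $H^1$ (not just $L^2$) norms through the Banach algebra, and getting the sharp power of $\eps$ rather than a lossy one.

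\textbf{Step 4: Positivity, smoothness, analyticity.} For (iii), $W_\eps$ solves $\calB_\eps W_\eps = \calQ_\eps(W_\eps)+\calZ_\eps(W_\eps)$; I would show the right side is nonnegative for $\eps$ small — $\calQ_\eps$ is manifestly nonnegative, and $\calZ_\eps(W_\eps)$ is $O(\eps^\gamma)$ and controlled — and that $\calB_\eps\inv$ has a positive kernel (its symbol $1/b_\eps(k)$ is a positive-definite-type function, or one compares with the $\calB_0$ Green's function of Section~\ref{s:positive}), then combine with $W_\eps$ being close in $H^1\hookrightarrow C_0$ to the strictly positive $W_0$ to rule out zeros; details are deferred to Section~\ref{s:positive} as the excerpt promises. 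For (iv), bootstrap: if $W_\eps\in H^s_{\rm even}$ then $\calQ_\eps(W_\eps),\calZ_\eps(W_\eps)\in H^s$ by the Banach algebra (the $\calA_{m\eps}$ only help, and the series converge), and $\calB_\eps\inv$ gains $\mu>0$ derivatives, so $W_\eps\in H^{s+\mu}$; iterating gives $W_\eps\in H^\infty$. For the analyticity of $c\mapsto v_c$, note $v_c=c\eps^\mu W_\eps(\eps\,\cdot)$ with $\eps^\mu=c^2-c_\alpha^2$, so it suffices that $\eps\mapsto W_\eps\in\Honeeven$ is real-analytic for $\eps\in(0,\eps_0)$; this follows from the analytic implicit function theorem applied to $\Phi(\eps,W):=W-\calG_\eps(W)$, since $\calB_\eps\inv$, $\calQ_\eps$, $\calZ_\eps$ depend analytically on $\eps$ (the symbols $b_\eps(k)$, $\sinc(\tfrac12 m\eps k)$, and $Z_3$ are analytic in $\eps$, and the $m$-series converge uniformly on complex neighborhoods by the $m^{-\alpha}$ weights), and $D_W\Phi(\eps,W_\eps)=I-D\calG_\eps(W_\eps)$ is invertible on $\Honeeven$ by Step~1–2 together with $\|W_\eps-W_0\|_{H^1}\to0$.
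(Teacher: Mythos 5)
Your Steps 1--3 track the paper's argument closely and are essentially correct: the paper verifies the hypotheses of the Friesecke--Pego quantitative lemma on $\Honeeven$, proves the residual bound by exactly the decomposition you describe (splitting the $m$-sums at $m\sim 1/\eps$ and invoking Wright's Lemma~3 for the symbol error), and uses compactness of $\calN_0 f=\calB_0\inv(W_0f)$ together with strong convergence of $\calA_\eta\to I$ to upgrade to operator-norm closeness of $D\calG_\eps$ to $D\calF$ near $W_0$. The genuine gaps are all in your Step 4.

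First, positivity. Pointwise nonnegativity of $\calQ_\eps(W_\eps)+\calZ_\eps(W_\eps)$ does not follow from ``$\calQ_\eps\ge0$ and $\calZ_\eps$ is $O(\eps^\gamma)$'': the $O(\eps^\gamma)$ bound is a norm bound, while $\calQ_\eps(W_\eps)$ itself decays to zero at infinity, so a small but possibly negative $\calZ_\eps(W_\eps)$ (note $Z_3(r)\approx\alpha_3 r^3<0$ for small $r<0$, and you do not yet know $W_\eps\ge0$) can dominate in the tails. For the same reason, closeness in $C_0$ to the strictly positive $W_0$ cannot rule out zeros, since $W_0\to0$ at infinity. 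The paper's fix is to recombine the quadratic and higher pieces: $\calQ_\eps(f)+\calZ_\eps(f)=\eps^{-2\mu}\sum_m m^{-\alpha}\calA_{m\eps}Z_2(\eps^\mu\calA_{m\eps}f)$ with $Z_2(r)=Z(r)-\alpha-\alpha_1 r$ strictly convex and vanishing to second order at $0$, hence $Z_2\ge0$ pointwise; positivity then follows because $\calB_\eps\inv$ preserves positivity (Neumann series in the positivity-preserving multiplier $\calJ_\eps$). Second, your bootstrap rests on ``$\calB_\eps\inv$ gains $\mu$ derivatives,'' which is false for fixed $\eps>0$: the symbol $b_\eps(k)$ is bounded above (see \eqref{e:beps_bound1}), so $\calB_\eps\inv$ gains nothing; the smoothing must come from the outer $\calA_{m\eps}$, which maps $H^s\to H^{s+1}$ with bound uniform in $m$ (though $\eps$-dependent), and one then needs uniform-in-$m$ control of $(Z\circ a_m)^{(n)}$, which the paper obtains via the Fa\`a di Bruno formula. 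Third, your route to analyticity of $c\mapsto v_c$ through analyticity of $\eps\mapsto\calG_\eps$ fails at the outset: $\eps\mapsto\calA_{m\eps}$ is not analytic (not even $C^2$) into $\calL(H^1)$, because each further $\eps$-derivative of the symbol $\sinc(\tfrac12 m\eps k)$ introduces a factor of $k$ that is not uniformly bounded --- the paper explicitly cautions that even $\eps\mapsto f(\eps\cdot)$ need not be differentiable in $H^1$. The paper instead fixes $\eps$ in all operators, writes $c^2=c_\alpha^2+\beta\eps^\mu$, relates $W_{\eta\eps}$ by an exact scaling to solutions of $\calB_{\eps,\beta}V=\calQ_\eps(V)+\calZ_\eps(V)$ with $\calB_{\eps,\beta}=\beta I+\alpha_1\sum_m\eps^{-\mu}m^{-\alpha}(I-\calA_{m\eps}^2)$, and applies the analytic implicit function theorem in the single scalar parameter $\beta$, which enters only as a multiple of the identity.
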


Note that the unscaled velocity profile function $v_c$ determines the 
particle velocities according to 
$\dot x_j(t) = v_c(j-ct)$, cf.~\eqref{d:xj}.
Note as well that given $f\in \Heven^1$, 
the dilation map $\eps\mapsto f(\eps\cdot)$ 
may not be analytic, or even differentiable;
thus we do not discuss the regularity of the map $\eps\mapsto W_\eps$.

\section{Fixed-point analysis}\label{s:fixedpt}

In order to prove the existence of wave profiles as fixed points in equation~\eqref{d:Geps},
we make use of the quantitative version of the standard inverse function theorem 
stated as Lemma~A.1 in \cite{Friesecke.Pego.99} and proved there.
Restated for clarity, it takes the following form, in which $\|\cdot\|$ denotes
the norm in $E$ or the operator norm on $\calL(E)$ as appropriate.
\begin{lemma}\label{lem:IFT}
    Let $F$ and $G$ be $C^1$ maps from a ball $B$ in a Banach space $E$ to $E$.
    Suppose $u_0=F(u_0)$ and that $L=I-DF(u_0)$ is invertible with operator norm
    $\|L\inv\|\le C_0<\infty$.
   Assume that positive constants $C_1$, $C_2$, $\theta$ and $\delta$ satisfy
   \begin{equation}\label{c:012}
   C_0(C_1+ C_2)\le \theta<1  \,,
   \end{equation}
\begin{equation}
      \|F(u_0)-G(u_0)\|\le \delta(1-\theta)/C_0 \,,
      \label{c:DF1}
\end{equation}
and that whenever $\|u-u_0\|\le\delta$, $u$ is in the ball $B$ and
  \begin{align}
      \|DF(u)-DF(u_0)\| &\le C_1\,, 
      \label{c:DF2}
      \\   \|DF(u)-DG(u)\| &\le C_2\,.
      \label{c:DF3}
  \end{align} 
  Then $u=G(u)$ for some unique $u\in B$ satisfying $\|u-u_0\|\le\delta$, and moreover
  \[
  \|u-u_0\| \le C_0(1-\theta)\inv  \|F(u_0)-G(u_0)\| \,.
  \]
\end{lemma}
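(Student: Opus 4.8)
The final statement to prove is the abstract quantitative inverse function theorem, Lemma~\ref{lem:IFT}. Here is my plan.

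\textbf{Overall strategy.} The plan is to set up the fixed-point equation $u = G(u)$ as a perturbation of $u=F(u)$ and solve it by the contraction mapping principle on the closed ball $\bar B_\delta = \{u : \|u-u_0\|\le\delta\}$. Since $u_0$ is a fixed point of $F$ with $L=I-DF(u_0)$ invertible, the natural move is to recast $u=G(u)$ as a fixed-point problem for a map $\Phi$ built using $L\inv$, chosen so that its derivative at $u_0$ vanishes to leading order. Concretely, I would define
\begin{equation}\label{e:Phidef}
 \Phi(u) = u - L\inv\bigl(u - G(u)\bigr),
\end{equation}
whose fixed points are exactly the solutions of $u=G(u)$ (because $L\inv$ is injective). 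The whole proof then reduces to showing (a) $\Phi$ maps $\bar B_\delta$ into itself and (b) $\Phi$ is a contraction there, after which the unique fixed point $u$ and the quantitative bound follow from the standard estimate for contractions.

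\textbf{Key steps.} First I would compute $D\Phi(u) = I - L\inv\bigl(I - DG(u)\bigr)$ and rewrite the bracket using $L=I-DF(u_0)$: adding and subtracting $DF(u_0)$ and $DF(u)$ gives
\begin{equation}\label{e:DPhi}
 D\Phi(u) = L\inv\bigl[\,(DF(u)-DF(u_0)) + (DG(u)-DF(u))\,\bigr].
\end{equation}
Taking operator norms and using $\|L\inv\|\le C_0$ together with the hypotheses \eqref{c:DF2} and \eqref{c:DF3}, I get $\|D\Phi(u)\|\le C_0(C_1+C_2)\le\theta<1$ for all $u\in\bar B_\delta$, which is the Lipschitz constant of $\Phi$ on the (convex) ball via the mean value inequality. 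Second, for self-mapping I estimate $\|\Phi(u)-u_0\|$ by splitting $\Phi(u)-u_0 = (\Phi(u)-\Phi(u_0)) + (\Phi(u_0)-u_0)$. The first piece is bounded by $\theta\|u-u_0\|\le\theta\delta$ using the contraction estimate just obtained. For the second piece, since $u_0=F(u_0)$ I have $\Phi(u_0)-u_0 = -L\inv(u_0-G(u_0)) = L\inv(G(u_0)-F(u_0))$, so $\|\Phi(u_0)-u_0\|\le C_0\|F(u_0)-G(u_0)\|\le \delta(1-\theta)$ by \eqref{c:DF1}. Adding, $\|\Phi(u)-u_0\|\le\theta\delta + (1-\theta)\delta = \delta$, so $\Phi$ maps $\bar B_\delta$ into itself. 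The contraction mapping theorem then yields a unique fixed point $u\in\bar B_\delta$, i.e.\ a unique solution of $u=G(u)$ with $\|u-u_0\|\le\delta$.

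\textbf{Quantitative bound and main obstacle.} Finally, the displayed inequality $\|u-u_0\|\le C_0(1-\theta)\inv\|F(u_0)-G(u_0)\|$ comes from the standard contraction estimate $\|u-u_0\|\le(1-\theta)\inv\|\Phi(u_0)-u_0\|$ combined with the bound $\|\Phi(u_0)-u_0\|\le C_0\|F(u_0)-G(u_0)\|$ derived above. I should double check two technical points rather than treat them as routine: that $\bar B_\delta\subset B$ so that the $C^1$ regularity and the derivative bounds \eqref{c:DF2}–\eqref{c:DF3} actually apply on the closed ball (this is guaranteed by the hypothesis that every $u$ with $\|u-u_0\|\le\delta$ lies in $B$), and that the mean value inequality in the Banach-space setting legitimately converts the pointwise derivative bound \eqref{e:DPhi} into the Lipschitz estimate for $\Phi$. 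The latter is the only place requiring care: I would invoke the fundamental theorem of calculus along the segment $t\mapsto u_0 + t(u-u_0)$, which stays in the convex ball, so that $\Phi(u)-\Phi(v)=\int_0^1 D\Phi(v+t(u-v))(u-v)\,dt$ and hence $\|\Phi(u)-\Phi(v)\|\le\theta\|u-v\|$. Since this is exactly Lemma~A.1 of \cite{Friesecke.Pego.99}, the argument is standard and the main ``obstacle'' is merely organizational, namely choosing the conditioning map \eqref{e:Phidef} so that the three hypotheses \eqref{c:012}–\eqref{c:DF3} slot cleanly into the self-mapping and contraction estimates.
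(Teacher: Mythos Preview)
Your proof is correct and is the standard contraction-mapping argument for this lemma. The paper itself does not supply a proof; it merely quotes the statement and cites Lemma~A.1 of \cite{Friesecke.Pego.99}, where exactly the argument you outline (define $\Phi(u)=u-L\inv(u-G(u))$, bound $\|D\Phi\|\le\theta$ via \eqref{c:DF2}--\eqref{c:DF3}, and verify self-mapping via \eqref{c:DF1}) is carried out.
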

We will apply this lemma to the functions $F=\calF$ and $G=\calG_\eps$ on
$E=\Heven^1$, for $\eps>0$ sufficiently small.
The functions $\calF$ and $\calG_\eps$ will be shown to be analytic on a suitable ball in $H^1$.
The operator 
\begin{equation}
    \calL_0 = I - D\calF(W_0)  = I - \calB_0\inv D\calQ_0(W_0)
\end{equation}
will be shown to be Fredholm, and is invertible because $W_0$ is non-degenerate.
Establishing \eqref{c:DF2} will be easy. To obtain the residual estimate \eqref{c:DF1} we will use 
rigorous residual bounds established by Wright \cite{Wright.24}.
Our proof of \eqref{c:DF3} involves a contradiction argument based on 
a key compactness property. This is essentially a distillation of 
Herrmann \& Mikikits-Leitner's  proof in \cite{HML16} 
of invertibility in $\Ltwoeven$ for an  operator analogous to $I-D\calG_\eps(W_0)$,
uniformly for all small enough $\eps>0$.
In the present context, uniform invertibility in $\Heven^1$ follows from 
conditions \eqref{c:012}--\eqref{c:DF3} together with a Neumann series expansion.

\subsection{Analyticity and symmetry}
We first establish the analyticity of various maps on $H^1$,
referring to \cite[Ch.~2.3]{Berger} for the basic theory of analytic maps on Banach spaces.
Note the maps $\calQ_0$ and $\calQ_\eps$ are continuous quadratic maps on $H^1$, hence are analytic.
Any monomial map $f \mapsto f^k$ is analytic, and compositions and uniform limits of analytic functions 
are analytic.  Regarding $\calZ_\eps$ we have the following. 

\begin{lemma}\label{lem:Zeps}
   Let $\eps,\rho\in(0,1)$, and for $R>0$ let 
   \[
   \tilde B_R = \{f\in H^1: C_{H^1}\|f\|_{H^1}\le R\}
   \]
   be the closed ball of radius $R/C_{H^1}$ in $H^1$. Then $\calZ_\eps\colon\tilde B_R\to H^1$ is analytic
   provided $\eps^\mu R\le\rho<1$, and the following bounds hold for all $f\in \tilde B_R$:
   \[
   \|\calZ_\eps(f)\|_{H^1} \le \eps^\mu\zeta_\alpha Z_3(\rho)
   \left(\frac{R}\rho\right)^{3}, \qquad 
   \|D\calZ_\eps(f)\|_{\calL(H^1)}  \le  \eps^\mu \zeta_\alpha Z_3'(\rho)
   \left(\frac{R}\rho\right)^{2}.
   \]
\end{lemma}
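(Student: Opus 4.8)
The plan is to estimate $\calZ_\eps(f)$ term by term in the sum \eqref{d:Zeps}, using that each averaging operator $\calA_{m\eps}$ is nonexpansive on $H^1$ and that $H^1$ is a Banach algebra. First I would record the key scalar bounds on the remainder $Z_3$: since $Z_3(r)=\sum_{k\ge3}\alpha_k r^k$ with $\alpha_k>0$, the functions $Z_3(r)/r^3$ and $Z_3'(r)/r^2$ are increasing on $[0,1)$, so that $Z_3(r)\le Z_3(\rho)(r/\rho)^3$ and $|Z_3'(r)|\le Z_3'(\rho)(r/\rho)^2$ for $0\le r\le\rho$. The convergence of the Maclaurin series of $Z_3$ on $\{|r|<1\}$, together with the Banach-algebra estimate \eqref{e:Balg}, shows that $g\mapsto Z_3(g)=\sum_{k\ge3}\alpha_k g^k$ is analytic from the ball $\{g\in H^1: C_{H^1}\|g\|_{H^1}<1\}$ into $H^1$; composing with the (bounded, linear, hence analytic) maps $f\mapsto \eps^\mu\calA_{m\eps}f$ and then with the bounded linear $\calA_{m\eps}$ gives analyticity of each summand, and the uniform bound below makes the series converge locally uniformly, so $\calZ_\eps$ itself is analytic on $\tilde B_R$ provided $\eps^\mu R\le\rho<1$.

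For the norm bound, fix $f\in\tilde B_R$ and set $g_m=\eps^\mu\calA_{m\eps}f$. By nonexpansiveness of $\calA_{m\eps}$ on $H^1$ and the definition of $\tilde B_R$, we have $C_{H^1}\|g_m\|_{H^1}\le\eps^\mu C_{H^1}\|f\|_{H^1}\le\eps^\mu R\le\rho$. Applying the Banach-algebra inequality iteratively, $\|g_m^k\|_{H^1}\le C_{H^1}^{k-1}\|g_m\|_{H^1}^k=(C_{H^1}\|g_m\|_{H^1})^k/C_{H^1}$, so
\[
\|Z_3(g_m)\|_{H^1}\le\sum_{k\ge3}\alpha_k\,\frac{(C_{H^1}\|g_m\|_{H^1})^k}{C_{H^1}}\le\frac{1}{C_{H^1}}\,Z_3\bigl(C_{H^1}\|g_m\|_{H^1}\bigr)\le\frac{1}{C_{H^1}}\,Z_3(\rho)\Bigl(\frac{\eps^\mu R}{\rho}\Bigr)^3,
\]
using monotonicity of $Z_3(r)/r^3$ in the last step. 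Hence $\|\eps^{-2\mu}\calA_{m\eps}Z_3(g_m)\|_{H^1}\le \eps^{-2\mu}\eps^{3\mu}\zeta_\alpha$-free bound $\le C_{H^1}^{-1}\eps^\mu Z_3(\rho)(R/\rho)^3$ (after cancelling the explicit $R^3$ and writing things cleanly — I would not belabor this), and summing against $m^{-\alpha}$ contributes the factor $\sum_{m\ge1}m^{-\alpha}=\zeta_\alpha$, giving $\|\calZ_\eps(f)\|_{H^1}\le\eps^\mu\zeta_\alpha Z_3(\rho)(R/\rho)^3$ as claimed. (One must be slightly careful about whether the stated bound carries an explicit $C_{H^1}$; I would reconcile this with the authors' normalization, but the structure is exactly as above.)

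For the derivative bound, differentiate termwise: $D\calZ_\eps(f)h=\sum_{m\ge1}\eps^{-2\mu}m^{-\alpha}\calA_{m\eps}\bigl(Z_3'(g_m)\cdot\eps^\mu\calA_{m\eps}h\bigr)$. Since $Z_3'(r)=\sum_{k\ge3}k\alpha_k r^{k-1}$ has nonnegative coefficients, the same Banach-algebra argument gives $\|Z_3'(g_m)\|_{H^1}\le C_{H^1}^{-1}Z_3'(C_{H^1}\|g_m\|_{H^1})\le C_{H^1}^{-1}Z_3'(\rho)(\eps^\mu R/\rho)^2$; then one more application of \eqref{e:Balg} to multiply by $\eps^\mu\calA_{m\eps}h$, followed by nonexpansiveness of the outer $\calA_{m\eps}$ and summation of $m^{-\alpha}$, yields $\|D\calZ_\eps(f)\|_{\calL(H^1)}\le\eps^\mu\zeta_\alpha Z_3'(\rho)(R/\rho)^2$. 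The only real subtlety — and the step I would be most careful with — is the bookkeeping of the powers of $\eps^\mu$ and of $C_{H^1}$: the explicit prefactor $\eps^{-2\mu}$ in \eqref{d:Zeps} must be seen to combine with the three (resp. two-plus-one) factors of $\eps^\mu$ coming from the cubic (resp. quadratic-times-linear) structure of $Z_3$, leaving the single overall factor $\eps^\mu$; everything else is a routine consequence of the Banach-algebra property and dominated convergence for the $m$-sum.
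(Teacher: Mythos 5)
Your proposal is correct and follows essentially the same route as the paper's own proof: positivity of the Maclaurin coefficients of $Z_3$, the Banach-algebra property of $H^1$, nonexpansiveness of $\calA_{m\eps}$, the monotonicity bounds $Z_3(r)\le Z_3(\rho)(r/\rho)^3$ and $Z_3'(r)\le Z_3'(\rho)(r/\rho)^2$, and summation of $m^{-\alpha}$ to produce $\zeta_\alpha$. Your slightly sharper bookkeeping of powers of $C_{H^1}$ (via $\|g^k\|_{H^1}\le C_{H^1}^{k-1}\|g\|_{H^1}^k$) only strengthens the stated bounds since $C_{H^1}\ge1$, so there is no discrepancy to reconcile.
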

\begin{proof}
Recall the series expansion for $Z_3(r)=\sum_{k=3}^\infty \alpha_k r^k$ converges for $|r|<1$.
Since $\|f^k\|_{H^1}\le (C_{H^1}\|f\|_{H^1})^k$ for all $k$, it follows that the 
Nemytskii operator $f\mapsto Z_3\circ f$ is analytic on the ball $\tilde B_\rho$ provided $\rho<1$,
with  $\|Z_3\circ f\|_{H^1} \le Z_3(\rho)$ for all $f\in \tilde B_\rho$.

Thus, for any $R>0$ and each $m\ge0$, the map $W\mapsto\calA_{m\eps}Z_3(\eps^\mu\calA_{m\eps}W)$ 
is analytic on  the ball $\tilde B_R$ provided $\eps^\mu R\le \rho$, and 
\[
\| \calA_{m\eps}Z_3(\eps^\mu\calA_{m\eps}W) \|_{H^1}  \le Z_3(\eps^\mu R) \le Z_3(\rho)\left(\frac{\eps^{\mu}R}{\rho}\right)^3  .
\]
It follows that the series expansion for $\calZ_\eps(W)$ in \eqref{d:Zeps} then converges uniformly in $H^1$ 
on $\tilde B_R$ under the same condition, with the stated bound on the $H^1$ norm.  

It is then straightforward to show in a similar way that for all $W\in\tilde B_R$ and $V\in H^1$,
since $Z_3'(\eps^\mu R)\le Z_3'(\rho)(\eps^\mu R/\rho)^2$,
\begin{equation}
    D\calZ_\eps(W)V = \eps^{-2\mu}\sum_{m\ge1} m^{-\alpha}\calA_{m\eps} (Z_3'(\eps^\mu\calA_{m\eps}W)(\eps^\mu \calA_{m\eps} V))\,,
\end{equation}
and 
\[
\|D\calZ_\eps(W)V\|_{H^1} \le 
\eps^\mu \zeta_\alpha Z_3'(\rho)\left(\frac R\rho\right)^2\|V\|_{H^1}\,.
\]
This finishes the proof.
\end{proof}

Regarding symmetry, we note that since the symbols of the operators $\calA_\eta$, $\calB_\eps$
and $\calB_0$ are real, even, and bounded, these operators map even functions in $H^s$ to even functions in $H^s$. 
For $s>\frac12$ the monomial maps $f\mapsto f^k$ also have the same property. From this and the lemma above we infer
the following.
\begin{proposition}\label{p:Fanalytic}
    The map $\calF$ in \eqref{d:G0} is an analytic map from $\Heven^1$ into itself.   
    For any $\eps\in(0,1)$ and $R>0$ such that $\eps^\mu R<1$, the map $\calG_\eps$ in \eqref{d:Geps}
    is analytic from $\tilde B_R\cap\Heven^1$ into $\Heven^1$.
\end{proposition}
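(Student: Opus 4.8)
The plan is to express both $\calF$ and $\calG_\eps$ as sums and compositions of maps whose analyticity has already been established, and then invoke the standard closure properties of analytic maps on Banach spaces from \cite[Ch.~2.3]{Berger}: sums, compositions, and uniform limits of analytic maps are analytic, and bounded linear operators are analytic (each coinciding with its own degree-one power series, with Fr\'echet derivative equal to the operator itself at every point). In particular, since the symbols satisfy $b_0(k)=1+\kappa_3|k|^{\alpha-1}\ge1$ and $b_\eps(k)\ge1$, the Fourier multipliers $\calB_0\inv$ and $\calB_\eps\inv$ are bounded (indeed nonexpansive) on $H^1$, hence analytic from $H^1$ into $H^1$.

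For $\calF$, I would write $\calF=\calB_0\inv\circ\calQ_0$. The map $\calQ_0(f)=\tfrac12\kappa_2 f^2$ is a continuous quadratic map on $H^1$ (using that $H^1$ is a Banach algebra, \eqref{e:Balg}), hence analytic, as already noted. Composing with the analytic operator $\calB_0\inv$ shows $\calF$ is analytic on all of $H^1$. For $\calG_\eps$ I would write $\calG_\eps=\calB_\eps\inv\circ(\calQ_\eps+\calZ_\eps)$. Here $\calQ_\eps$ is again a continuous quadratic (hence analytic) map on $H^1$, and Lemma~\ref{lem:Zeps} supplies the analyticity of $\calZ_\eps$ on $\tilde B_R$ under the hypothesis $\eps^\mu R<1$ (choosing any $\rho$ with $\eps^\mu R\le\rho<1$). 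Since the sum of two analytic maps is analytic, $\calQ_\eps+\calZ_\eps$ is analytic on $\tilde B_R$, and composing with the analytic operator $\calB_\eps\inv$ yields analyticity of $\calG_\eps$ on $\tilde B_R$.

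The only genuine restriction is the domain constraint for $\calG_\eps$: analyticity is necessarily limited to $\tilde B_R$ because $\calZ_\eps$ inherits from $Z(r)=\alpha(1-r)^{-\alpha-1}$ a singularity at $\eps^\mu\calA_{m\eps}W=1$, and the hypothesis $\eps^\mu R<1$ (equivalently $C_{H^1}\|W\|_{H^1}\le R$ with $\eps^\mu R<1$) is exactly what keeps the argument away from it. This is the place requiring care in the bookkeeping, but the substantive analytic estimate behind it was already carried out in Lemma~\ref{lem:Zeps}, so no new obstacle arises here.

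Finally, for the symmetry claim I would check that every constituent map preserves evenness, so that the claimed mapping properties on $\Heven^1$ follow. As already observed, the Fourier multipliers $\calA_\eta$, $\calB_\eps\inv$, and $\calB_0\inv$ have real even symbols and hence send even functions to even functions, while for $s>\tfrac12$ the monomial maps $f\mapsto f^k$ preserve evenness pointwise. Consequently $\calQ_0$, $\calQ_\eps$, and $\calZ_\eps$, each assembled from these ingredients, map even functions to even functions, and composing with $\calB_0\inv$ or $\calB_\eps\inv$ preserves this. Thus $\calF$ maps $\Heven^1$ into itself, and $\calG_\eps$ maps $\tilde B_R\cap\Heven^1$ into $\Heven^1$, completing the proof. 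The main point to get right is simply tracking the analyticity domain of $\calZ_\eps$ through the composition; everything else is an application of the closure properties of analytic maps together with the evenness of the relevant Fourier symbols.
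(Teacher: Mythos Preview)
Your proposal is correct and follows essentially the same approach as the paper, which simply collects the observations preceding the proposition (quadratic maps are analytic, $\calB_0\inv$ and $\calB_\eps\inv$ are bounded Fourier multipliers, compositions of analytic maps are analytic, and the constituent operators preserve evenness) together with Lemma~\ref{lem:Zeps} for $\calZ_\eps$. The paper states the proposition as an immediate consequence of these facts without writing out a separate proof, and your write-up is just a more explicit rendering of the same argument.
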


\subsection{Residual estimates}
According to the definitions in \eqref{d:G0} and \eqref{d:Geps} we can write 
\begin{equation}
    \calF(W_0)-\calG_\eps(W_0) = \calR_\eps - \calS_\eps\,,
\end{equation}
where
\begin{align}
\calR_\eps &= 
\calB_0\inv\calQ_0(W_0)  -  \calB_\eps\inv\calQ_\eps(W_0)  \,, 
\quad \calS_\eps = \calB_\eps\inv\calZ_\eps(W_0) \,. 
\label{d:calRS}
\end{align}
We have $\|\calS_\eps\|_{H^1}\le \|\calZ_\eps(W_0)\|_{H^1}$
since $\calB_\eps\inv$ is non-expansive on $H^1$, 
so we find the following
by simply applying Lemma~\ref{lem:Zeps} and recalling $\mu=\alpha-1$.
\begin{lemma} For all $\eps>0$ sufficiently small we have 
$\|\calS_\eps\|_{H^1}\le C\eps^{\alpha-1}$.
\end{lemma}

For the term $\calR_\eps$ we claim the following.
\begin{proposition}\label{p:Reps} For all $\eps>0$ sufficiently small we have 
  \[
  \|\calR_\eps\|_{H^1}  \le 
   \begin{cases} 
     C\eps & \alpha\in (1,2],\\
     C \eps^{3-\alpha} & \alpha\in(2,3).
   \end{cases}
  \] 
\end{proposition}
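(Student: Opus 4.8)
The plan is to estimate $\calR_\eps = \calB_0\inv\calQ_0(W_0) - \calB_\eps\inv\calQ_\eps(W_0)$ by inserting and subtracting the mixed term $\calB_\eps\inv\calQ_0(W_0)$, so that
\[
\calR_\eps = (\calB_0\inv - \calB_\eps\inv)\calQ_0(W_0) + \calB_\eps\inv(\calQ_0(W_0)-\calQ_\eps(W_0)).
\]
Since $\calB_\eps\inv$ is nonexpansive on $H^1$ by \eqref{e:beps_bound1}, the second piece is bounded by $\|\calQ_0(W_0)-\calQ_\eps(W_0)\|_{H^1}$, which reduces to an estimate on $\|\calA_{m\eps}(\calA_{m\eps}W_0)^2 - W_0^2\|_{H^1}$ summed against $m^{-\alpha}$. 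Because $W_0\in H^\infty$ is smooth (as noted after Theorem~\ref{t.main}), one can use the quantitative bound \eqref{e:Aeta_est} to control each term by $(m\eps)^2\|W_0\|_{H^3}$-type quantities for small $m$, while using the crude nonexpansive bound \eqref{e:Aeta_bound} for large $m$; splitting the sum at $m\sim1/\eps$ and optimizing yields the claimed power of $\eps$. This is where the case distinction enters: $\sum_m m^{-\alpha}\min(1,(m\eps)^2)$ behaves like $\eps^2\sum_{m\le1/\eps}m^{2-\alpha} + \sum_{m>1/\eps}m^{-\alpha}$, which is $O(\eps^2)+O(\eps^{\alpha-1})$ for $\alpha<3$; the borderline logarithm at $\alpha=3$ is excluded, and for $\alpha\le 2$ the dominant contribution gives $O(\eps)$ (here one must be a little careful — a first-order Taylor expansion of $\calA_{m\eps}$ has an odd symbol and may cancel, which is what actually produces the sharp exponent, so I would track the expansion of $\sinc^2(\tfrac12 m\eps k)-1$ to the needed order rather than just using \eqref{e:Aeta_est} bluntly).

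For the first piece I would use that $\calB_0\inv-\calB_\eps\inv = \calB_\eps\inv(\calB_\eps-\calB_0)\calB_0\inv$, both outer factors nonexpansive, so it suffices to bound $\|(\calB_\eps-\calB_0)g\|_{H^1}$ for $g = \calB_0\inv\calQ_0(W_0)=W_0$, which again lies in $H^\infty$. The operator $\calB_\eps-\calB_0$ is the Fourier multiplier with symbol $|k|^{\alpha-1}(S_\alpha(\eps|k|)-\kappa_3)$ by \eqref{e:Slim}--\eqref{e:bepslim}, so the task is to quantify the rate of convergence $S_\alpha(h)\to\kappa_3$ as $h\to0$, multiplied by $|k|^{\alpha-1}$ and integrated against $(1+k^2)\,|\hat W_0(k)|^2$. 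Here I would write $S_\alpha(h)-\kappa_3$ as the difference between a Riemann-type sum and its limiting integral $\kappa_3=\alpha_1\int_0^\infty (1-\sinc^2(z/2))z^{-\alpha}\,dz$; near $z=0$ the integrand is $O(z^{2-\alpha})$ and near $\infty$ it is $O(z^{-\alpha})$, so Euler--Maclaurin / quadrature-error estimates give $|S_\alpha(h)-\kappa_3|\lesssim h^{\min(2,\,3-\alpha)}$ (the $h^{3-\alpha}$ coming from the endpoint singularity when $\alpha>2$, the $h^2$ from smooth interior behavior when $\alpha<2$), matching the two cases in the statement. The extra factor $|k|^{\alpha-1}$ is harmless since $W_0$ has all moments.

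The main obstacle I expect is obtaining the sharp convergence rate $|S_\alpha(h)-\kappa_3| = O(h^{\min(2,3-\alpha)})$ uniformly, and more precisely tracking how it depends on $k$ when $h=\eps|k|$ is not small — for large $k$ one has $h\sim\eps|k|$ possibly large, so one needs $|S_\alpha(h)-\kappa_3|$ to remain bounded (or grow slowly) as $h\to\infty$, and then rely on the rapid decay of $\hat W_0$ to absorb any polynomial growth. Organizing this sum-versus-integral comparison cleanly, with explicit control of the remainder in terms of $h$, is the technical heart; everything else is routine once the scalar estimate on $S_\alpha$ is in hand. I would likely isolate that scalar estimate as a separate lemma before assembling the $H^1$ bound on $\calR_\eps$.
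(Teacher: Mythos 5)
Your decomposition and treatment of both pieces is essentially the paper's proof: the paper bounds $\|\calR_\eps\|_{H^1}\le\|\calQ_\eps(W_0)-\calQ_0(W_0)\|_{H^1}+\|(\calB_\eps\inv-\calB_0\inv)\calQ_0(W_0)\|_{H^1}$ using nonexpansivity of $\calB_\eps\inv$, handles the first term exactly by your splitting of the sum at $m=\lfloor1/\eps\rfloor$ with \eqref{e:Aeta_est} for small $m$ and a crude bound on the tail (Lemma~\ref{lem:quad_diff}), and handles the second through a scalar rate for $S_\alpha(h)-\kappa_3$. Two differences in execution are worth noting. First, the paper does not redo the quadrature analysis: it cites Wright's Lemma~3 for $|S_\alpha(h)-\kappa_3|\le Ch$ ($\alpha\in(1,2]$) resp.\ $Ch^{3-\alpha}$ ($\alpha\in(2,3)$), valid for \emph{all} $h>0$; your concern about large $h=\eps|k|$ is easily dispatched since $S_\alpha(h)\le h^{1-\alpha}\zeta_\alpha$ by \eqref{e:Salphabd}, so $|S_\alpha(h)-\kappa_3|$ is bounded for $h\ge1$ and the polynomial bound is trivially true there. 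Second, rather than loading the factor $|k|^{\alpha-1}$ onto the decay of $\hat W_0$, the paper absorbs it into the denominator via $|b_\eps\inv(k)-b_0\inv(k)|=|k|^{\alpha-1}|S_\alpha(\eps k)-\kappa_3|/(b_0(k)b_\eps(k))\le|S_\alpha(\eps k)-\kappa_3|/(\kappa_3 b_\eps(k))$, which produces the clean operator estimate of Lemma~\ref{lem:bepsdiff}(ii) costing only finitely many extra derivatives of the argument. Both routes work here because $W_0$ is smooth.

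The one place you should stop digging is the attempt to upgrade the quadratic-difference term to $O(\eps)$ for $\alpha\in(1,2)$ by exploiting cancellation. There is no odd first-order term to cancel: the symbol $\sinc(\tfrac12\eta k)$ of $\calA_\eta$ is real and even, its leading correction is already quadratic, and \eqref{e:Aeta_est} records exactly that. The genuine obstruction is the tail $\sum_{m>\lfloor1/\eps\rfloor}m^{-\alpha}\|\calA_{m\eps}(\calA_{m\eps}W_0)^2-W_0^2\|_{H^1}$, whose summands do not decay in $m$ (for $m\eps$ large, $\calA_{m\eps}(\calA_{m\eps}W_0)^2$ tends to zero while $W_0^2$ does not), so this contribution is genuinely of order $\eps^{\alpha-1}$, which exceeds $\eps$ when $\alpha<2$. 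Consistently, the paper's Lemma~\ref{lem:quad_diff} claims only $C\eps^{\alpha-1}$, and only the exponent $\alpha-1$ survives into Corollary~\ref{cor:res}, where this term is in any case dominated by the $\calS_\eps$ contribution. In other words, the exponent $1$ stated in Proposition~\ref{p:Reps} for $\alpha\in(1,2)$ is not what the paper's own argument delivers and is not needed downstream; you should prove $C\eps^{\alpha-1}$ there and move on rather than hunt for a cancellation that is not present.
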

The proof will be provided presently. 
But the last two results together immediately imply the following residual estimate.
\begin{corollary}\label{cor:res}
    For all sufficiently small $\eps>0$ we have 
    \[ \| \calF(W_0)-\calG_\eps(W_0) \|_{H^1} \le 
   \begin{cases} 
     C\eps^{\alpha-1} & \alpha\in (1,2],\\
     C \eps^{3-\alpha} & \alpha\in(2,3).
   \end{cases}
   \]
\end{corollary}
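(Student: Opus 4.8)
The plan is to obtain the corollary directly by combining the two immediately preceding estimates through the triangle inequality, so no new analytic work is needed. Recall the decomposition $\calF(W_0)-\calG_\eps(W_0)=\calR_\eps-\calS_\eps$, with $\calR_\eps$ and $\calS_\eps$ as in \eqref{d:calRS}. First I would write
\[
\|\calF(W_0)-\calG_\eps(W_0)\|_{H^1} \le \|\calR_\eps\|_{H^1} + \|\calS_\eps\|_{H^1}.
\]
The preceding lemma supplies $\|\calS_\eps\|_{H^1}\le C\eps^{\alpha-1}$, while Proposition~\ref{p:Reps} supplies $\|\calR_\eps\|_{H^1}\le C\eps$ for $\alpha\in(1,2]$ and $\|\calR_\eps\|_{H^1}\le C\eps^{3-\alpha}$ for $\alpha\in(2,3)$. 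It then remains only to identify, in each $\alpha$-range, which of the two powers of $\eps$ decays more slowly as $\eps\to0^+$ and hence controls the sum.

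For $\alpha\in(1,2]$ the two relevant exponents are $\alpha-1$ (from $\calS_\eps$) and $1$ (from $\calR_\eps$); since $\alpha-1\le1$ we have $\eps^{\alpha-1}\ge\eps$ for $\eps\in(0,1)$, so $\eps+\eps^{\alpha-1}\le 2\eps^{\alpha-1}$ and the bound $C\eps^{\alpha-1}$ results. For $\alpha\in(2,3)$ the exponents are $\alpha-1\in(1,2)$ (from $\calS_\eps$) and $3-\alpha\in(0,1)$ (from $\calR_\eps$); now $3-\alpha<\alpha-1$, so $\eps^{3-\alpha}\ge\eps^{\alpha-1}$ and the sum is absorbed into $C\eps^{3-\alpha}$. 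In both cases the resulting exponent is exactly the $\gamma$ of Theorem~\ref{t.main}(ii), namely the smaller of the two exponents entering the sum. At the shared endpoint $\alpha=2$ all of the relevant exponents coincide at $1$, so the two branches of the stated bound agree there.

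There is essentially no obstacle beyond this bookkeeping: the analytic content has already been discharged in Proposition~\ref{p:Reps} and in the $\calS_\eps$ estimate (itself a one-line consequence of Lemma~\ref{lem:Zeps} together with the non-expansiveness of $\calB_\eps\inv$ on $H^1$). The one point I would state explicitly is that the constant $C$ in the corollary may be taken as twice the larger of the two constants furnished by those results, valid for every $\eps$ below the smaller of the two thresholds under which they are asserted; this justifies the uniform phrase ``for all sufficiently small $\eps>0$.''
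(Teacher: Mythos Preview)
Your proof is correct and matches the paper's approach: the paper simply states that the corollary follows immediately from the preceding lemma and Proposition~\ref{p:Reps}, and your triangle-inequality argument with the exponent comparison is exactly the routine bookkeeping that justifies this.
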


To prove Proposition~\ref{p:Reps} we adapt Wright's method of estimating residuals
in the long-wave approximation of \eqref{e:sys1} in \cite{Wright.24}.
We begin with an estimate on the difference of quadratic functions.
\begin{lemma}\label{lem:quad_diff}
    There exists $C>0$ independent of $\eps$ such that 
    \[
    \|\calQ_\eps(W_0)-\calQ_0(W_0)\|_{H^1} \le C\eps^{\alpha-1} \,.
    \]
\end{lemma}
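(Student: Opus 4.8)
The plan is to compare $\calQ_\eps(W_0) = \alpha_2\sum_{m\ge1} m^{-\alpha}\calA_{m\eps}(\calA_{m\eps}W_0)^2$ with $\calQ_0(W_0) = \alpha_2\zeta_\alpha W_0^2$ term by term in the sum, writing
\[
\calQ_\eps(W_0)-\calQ_0(W_0) = \alpha_2\sum_{m=1}^\infty m^{-\alpha}\bigl(\calA_{m\eps}(\calA_{m\eps}W_0)^2 - W_0^2\bigr),
\]
and then bounding each summand $g_m := \calA_{m\eps}(\calA_{m\eps}W_0)^2 - W_0^2$ in $H^1$. First I would split the difference as
\[
g_m = \calA_{m\eps}\bigl((\calA_{m\eps}W_0)^2 - W_0^2\bigr) + (\calA_{m\eps}W_0^2 - W_0^2),
\]
use nonexpansiveness of $\calA_{m\eps}$ on the first piece, factor $(\calA_{m\eps}W_0)^2 - W_0^2 = (\calA_{m\eps}W_0 - W_0)(\calA_{m\eps}W_0 + W_0)$, and apply the Banach-algebra inequality \eqref{e:Balg}. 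This reduces everything to controlling $\|\calA_{m\eps}W_0 - W_0\|_{H^1}$ and $\|\calA_{m\eps}W_0^2 - W_0\|_{H^1}$ by a quantity summable against $m^{-\alpha}$.

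The key point is that $W_0\in H^s$ for all $s$ (by the bootstrapping remark preceding the theorem), so I have two complementary estimates available for $\calA_\eta f - f$ with $f\in\{W_0, W_0^2\}$: the crude bound $\|\calA_\eta f - f\|_{H^1}\le 2\|f\|_{H^1}$ from \eqref{e:Aeta_bound}, valid for all $\eta$, and the refined bound $\|\calA_\eta f - f\|_{H^1}\le \tfrac{\eta^2}{24}\|f\|_{H^3}$ from \eqref{e:Aeta_est}. With $\eta = m\eps$ this gives, for each $m$, a bound $\min\{C, C m^2\eps^2\}$ on $\|g_m\|_{H^1}$ (the constant absorbing $\|W_0\|_{H^3}$, $\|W_0\|_{H^1}$, etc.). Summing against $m^{-\alpha}$:
\[
\sum_{m=1}^\infty m^{-\alpha}\min\{C, C m^2\eps^2\}
\le C\sum_{m\le 1/\eps} m^{2-\alpha}\eps^2 + C\sum_{m>1/\eps} m^{-\alpha}.
\]
For $1<\alpha<3$ the first sum is $O(\eps^2 \cdot \eps^{\alpha-3}) = O(\eps^{\alpha-1})$ and the second is $O(\eps^{\alpha-1})$ as well, giving the claimed $O(\eps^{\alpha-1})$ bound. (For $\alpha$ near $3$ the first sum's $\sum m^{2-\alpha}$ is $O(\log(1/\eps))$ but then multiplied by $\eps^2$, still $o(\eps^{\alpha-1})$; for $\alpha$ near $1$ one checks the exponents directly — in all cases the exponent works out to $\alpha-1$, possibly up to a logarithm that is harmless since $\eps^2$ beats it.)

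The only real obstacle is bookkeeping: making sure the dominated-convergence/splitting argument is carried out so that the exponent $\alpha-1$ (rather than something worse) genuinely emerges from the interpolation between the two regimes $m\lesssim 1/\eps$ and $m\gtrsim 1/\eps$, and handling the borderline logarithmic factor cleanly when $\alpha=3$ is approached (irrelevant here since $\alpha<3$ strictly, but one wants a uniform statement). Everything else — analyticity, symmetry, the algebra property — is already in hand from the preliminaries. I expect the proof to be about half a page.
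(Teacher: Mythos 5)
Your proposal is correct and follows essentially the same route as the paper: term-by-term comparison in the sum, a bound of the form $\min\{C, Cm^2\eps^2\}$ on each summand obtained by telescoping through $\calA_{m\eps}$, the Banach-algebra property, the estimate \eqref{e:Aeta_est} with the $H^3$ regularity of $W_0$, and a split of the sum at $m\approx 1/\eps$ yielding $O(\eps^{\alpha-1})$ from both regimes. (The expression $\|\calA_{m\eps}W_0^2 - W_0\|_{H^1}$ is evidently a typo for $\|\calA_{m\eps}W_0^2 - W_0^2\|_{H^1}$ and does not affect the argument.)
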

\begin{proof}
First observe that 
\begin{equation}
  \| \calQ_\eps (W_0) - \calQ_0(W_0) \|_{H^{1}} \leq \sum_{m\geq 1} \frac{\alpha_2}{m^{\alpha}} 
  \| \calA_{m\eps} (\calA_{m\eps} W_0 )^{2}    - W_0^{2}\|_{H^{1}}
  \,. \label{e:dub_diff}
\end{equation}
We claim that for some constant $C$ independent of $\eps$ and $m$,
\begin{equation}\label{e:smear}
  \| \calA_{m\eps} (\calA_{m\eps} W_0 )^{2}    - W_0^{2}\|_{H^{1}}
  \le C m^2\eps^2.
\end{equation}
Indeed,  by the triangle inequality,
\begin{align*}
\| \calA_{m\eps} (\calA_{m\eps} W_0 )^{2}    - W_0^{2}\|_{H^{1}}
 \le \ &\| (\calA_{m\eps}-I) (\calA_{m\eps} W_0 )^{2}  \|_{H^1}
\\ & + \ \| (\calA_{m\eps} W_0 )^{2}   - W_0^{2}\|_{H^{1}} \,,
\end{align*}
from which one can infer \eqref{e:smear} by using \eqref{e:Aeta_est} and the $H^3$ regularity of $W_0$.

Using the estimate \eqref{e:smear} in \eqref{e:dub_diff} for small $m$, we find
\begin{align*}
  \sum_{m=1}^{\lfloor 1/\eps\rfloor} m^{-\alpha} \|\calA_{m\eps} (\calA_{m\eps} W_0)^{2} - W_0^{2}\|_{H^{1}} &\leq 
  \sum_{m=1}^{\lfloor 1/\eps\rfloor} m^{-\alpha} C(m\eps)^2
 % \\ &
  \leq \frac{C\eps^{\alpha-1}}{3-\alpha} \,. 
\end{align*}
In the last line, we used a simple integral bound $\sum_{m=1}^{\lfloor 1/\eps\rfloor} m^{2-\alpha} \leq \frac{1}{3-\alpha}\eps^{\alpha -3}$ as in \cite{Wright.24}.
For large $m$, we simply bound the norms in \eqref{e:dub_diff} by a constant, and get through a similar integral bound 
\begin{equation*}
  \sum_{m > \lfloor 1/\eps \rfloor} \frac{\alpha_2}{m^{\alpha}} \|\calA_{m\eps} (\calA_{m\eps} W_0)^{2} - W_0^{2}\|_{H^{1}} \leq \frac{C'\eps^{\alpha-1}}{\alpha -1}\,, 
\end{equation*}
where $C'$ is another constant independent of $\eps$ and $m$. 
\end{proof}

Next we deal with estimates on differences of symbols and operators.
\begin{lemma}\label{lem:bepsdiff}
For all $\eps>0$ sufficiently small we have 
(i) For all $k\in\R$,
\[
   |b_\eps(k)\inv - b_0(k)\inv| \le
   \begin{cases} 
     C|k| \eps & \alpha\in (1,2],\\
     C|k|^{3-\alpha} \eps^{3-\alpha} & \alpha\in(2,3).
   \end{cases}
\]
(ii) For each $s\ge0$ and all $f\in H^{s+3}$, 
\[
\|(\calB_\eps\inv-\calB_0\inv)f\|_{H^s} \le 
   \begin{cases} 
     C\eps\, \|f\|_{H^{s+1}} & \alpha\in (1,2],\\
     C \eps^{3-\alpha} \|f\|_{H^{s+3-\alpha}}& \alpha\in(2,3).
   \end{cases}
\]
\end{lemma}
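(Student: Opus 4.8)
The approach is to reduce the whole statement to a quantitative form of the convergence \eqref{e:Slim}, after which (i) and (ii) drop out by a short Fourier-multiplier computation. Recall from \eqref{d:beps} and \eqref{e:bepslim} that
\[
 b_\eps(k) = 1 + |k|^{\alpha-1} S_\alpha(\eps|k|), \qquad b_0(k) = 1 + \kappa_3 |k|^{\alpha-1},
\]
and that $b_\eps(k),b_0(k)\ge 1$ for all $k$. The plan is to prove the core estimate
\begin{equation}\label{e:Splan}
 |S_\alpha(h)-\kappa_3| \le C h^{\beta}\quad\text{for all } h>0,\qquad
 \beta := \begin{cases} 1, & \alpha\in(1,2],\\ 3-\alpha, & \alpha\in(2,3),\end{cases}
\end{equation}
with $C$ depending only on $\alpha$, and then to read off (i) and (ii) --- in fact for every $\eps>0$.

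Granting \eqref{e:Splan}, part (i) is immediate. From $b_\eps(k)b_0(k)\ge b_0(k)\ge\kappa_3|k|^{\alpha-1}$, the identity $b_0(k)-b_\eps(k)=|k|^{\alpha-1}\bigl(\kappa_3-S_\alpha(\eps|k|)\bigr)$, the elementary bound $\tfrac{|k|^{\alpha-1}}{1+\kappa_3|k|^{\alpha-1}}\le\kappa_3\inv$, and \eqref{e:Splan}, we obtain
\[
 |b_\eps(k)\inv-b_0(k)\inv| = \frac{|b_0(k)-b_\eps(k)|}{b_\eps(k)b_0(k)}
 \le \frac{|k|^{\alpha-1}\,|S_\alpha(\eps|k|)-\kappa_3|}{1+\kappa_3|k|^{\alpha-1}}
 \le \frac{C}{\kappa_3}\,(\eps|k|)^{\beta},
\]
which is exactly the claimed bound in the two ranges of $\alpha$. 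For part (ii), since $\calB_\eps\inv-\calB_0\inv$ is the Fourier multiplier with symbol $b_\eps(k)\inv-b_0(k)\inv$, Plancherel together with $|k|^{2\beta}\le(1+k^2)^{\beta}$ gives
\[
 \|(\calB_\eps\inv-\calB_0\inv)f\|_{H^s}^2
 = \int_\R (1+k^2)^s |b_\eps(k)\inv-b_0(k)\inv|^2 |\hat f(k)|^2\,dk
 \le (C/\kappa_3)^2\, \eps^{2\beta}\,\|f\|_{H^{s+\beta}}^2 ,
\]
so $\|(\calB_\eps\inv-\calB_0\inv)f\|_{H^s}\le(C/\kappa_3)\,\eps^{\beta}\|f\|_{H^{s+\beta}}$; as $s+\beta\le s+3$, this is the asserted bound and its right-hand side is finite when $f\in H^{s+3}$.

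It remains to prove \eqref{e:Splan}, which amounts to comparing a Riemann sum with an integral. Write $S_\alpha(h)=\alpha_1 h\sum_{m\ge1}g(mh)$ and $\kappa_3=\alpha_1\int_0^\infty g(z)\,dz$ with $g(z):=z^{-\alpha}\bigl(1-\sinc^2(z/2)\bigr)\ge0$, so that $S_\alpha(h)-\kappa_3=\alpha_1\sum_{m\ge1}\int_{(m-1)h}^{mh}\bigl(g(mh)-g(z)\bigr)\,dz$. From $|1-\sinc z|\le\tfrac16 z^2$ one gets $0\le 1-\sinc^2(z/2)\le\tfrac1{12}z^2$, hence $g(z)\le\tfrac1{12}z^{2-\alpha}$; a short computation also gives $|g'(z)|\le Cz^{1-\alpha}$ for $0<z\le1$ (with a cancellation at $\alpha=2$ making $g'$ bounded there), while the decay of $\sinc$ and $\sinc'$ gives $|g'(z)|\le Cz^{-\alpha-1}$ for $z\ge1$. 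Hence $g'\in L^1(0,\infty)$ exactly when $\alpha\in(1,2]$, and for those $\alpha$ the per-interval bound $|g(mh)-g(z)|\le\int_{(m-1)h}^{mh}|g'|$ yields $|S_\alpha(h)-\kappa_3|\le\alpha_1 h\,\|g'\|_{L^1(0,\infty)}=Ch$ directly. For $\alpha\in(2,3)$, $g'$ is not integrable at $0$, so one isolates the $m=1$ term: for $0<h\le1$ both $hg(h)\le\tfrac1{12}h^{3-\alpha}$ and $\int_0^h g\le\tfrac{h^{3-\alpha}}{12(3-\alpha)}$ are $O(h^{3-\alpha})$, while the $m\ge2$ terms, for which $z\ge h$ throughout $[(m-1)h,mh]$, are bounded by $h\int_h^\infty|g'|\le h\bigl(Ch^{2-\alpha}+C'\bigr)$, which is $O(h^{3-\alpha})$ since $3-\alpha\le1$. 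Finally, for $h\ge1$ the crude bound $0\le S_\alpha(h)\le\alpha_1\zeta_\alpha h^{1-\alpha}$ gives $|S_\alpha(h)-\kappa_3|\le\alpha_1\zeta_\alpha+\kappa_3\le(\alpha_1\zeta_\alpha+\kappa_3)h^{\beta}$. This establishes \eqref{e:Splan}.

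The one substantive step is the singular case $\alpha\in(2,3)$ of \eqref{e:Splan}: there $g(z)\sim\tfrac1{12}z^{2-\alpha}$ blows up at the origin, and one must check that both the discarded $m=1$ term and the quadrature error $h\int_h^\infty|g'|$ sit at the sharp order $h^{3-\alpha}$. This is precisely where the exponent $3-\alpha$ originates, and it then propagates into Proposition~\ref{p:Reps} and part (ii) of Theorem~\ref{t.main}; everything else is the routine Fourier-multiplier bookkeeping carried out above.
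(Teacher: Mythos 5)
Your proof is correct, and its skeleton is exactly the paper's: reduce everything to the quantitative estimate $|S_\alpha(h)-\kappa_3|\le Ch$ for $\alpha\in(1,2]$ and $\le Ch^{3-\alpha}$ for $\alpha\in(2,3)$, then get (i) from the identity $|b_\eps\inv-b_0\inv|=|b_0-b_\eps|/(b_0 b_\eps)$ with $b_\eps,b_0\ge1$, and (ii) from Plancherel. The one genuine difference is that the paper simply cites Lemma~3 of \cite{Wright.24} for the core estimate on $S_\alpha(h)-\kappa_3$, whereas you prove it from scratch as a Riemann-sum-versus-integral comparison for $g(z)=z^{-\alpha}(1-\sinc^2(z/2))$; your case analysis (using $g'\in L^1(0,\infty)$ for $\alpha\in(1,2]$, isolating the $m=1$ term and the non-integrable singularity of $g'$ at the origin for $\alpha\in(2,3)$, and the crude bound $S_\alpha(h)\le\alpha_1\zeta_\alpha h^{1-\alpha}$ for $h\ge1$) is sound and correctly identifies where the exponent $3-\alpha$ comes from. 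What your version buys is self-containedness; what the paper's version buys is brevity and an explicit link to the residual analysis in \cite{Wright.24}, which it reuses elsewhere.
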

\begin{proof}
First we note the estimates
\begin{align}
S_\alpha(h) = h^{1-\alpha}& \sum_{m\geq 1} \frac{1-\sinc^{2}(mh/2)}{m^{\alpha}} 
\leq h^{1-\alpha}\zeta_\alpha \,, 
\label{e:Salphabd}
\\
 \left| b_\eps^{-1}(k) - b_0^{-1}(k)\right| &
 = \frac{|b_0(k)-b_\eps(k)|}
 {b_0(k)b_\eps(k)}
    = \frac{|k|^{\alpha-1} \left|S_\alpha(\eps k) - \kappa_3 \right|}
 {b_0(k)b_\eps(k)}
    \nonumber\\
    &\leq \frac{\left|S_\alpha(\eps k) - \kappa_3\right|}{\kappa_3 b_\eps(k)}\,,
\label{e:bdiffbd}
\end{align}
since by its definition in \eqref{e:bepslim}, $b_0(k)=1+\kappa_3 |k|^{\alpha-1}$.

Now we invoke Lemma 3 of \cite{Wright.24}, which directly implies that for all $h>0$,
\[
|S_\alpha(h)-\kappa_3| \le 
   \begin{cases} 
     Ch & \alpha\in (1,2],\\
     Ch^{3-\alpha} & \alpha\in(2,3).
   \end{cases}
\]
Part (i) follows using this in \eqref{e:bdiffbd}. 
Plancherel's identity yields part (ii).
\end{proof}

\begin{proof}[Proof of Proposition~\ref{p:Reps}]
From \eqref{d:calRS} and the triangle inequality, we get
\begin{equation*}
  \|\calR_\eps\|_{H^{1}} \leq \|\calQ_\eps (W_0) - \calQ_0(W_0) \|_{H^{1}} + \|(\calB_\eps\inv - \calB_0\inv ) \calQ_0(W_0) \|_{H^{1}}\,,
\end{equation*}
which are estimated respectively by Lemmas~\ref{lem:quad_diff} and~\ref{lem:bepsdiff}, using smoothness of $W_0$. 
Considering each case $\alpha \in (1,2]$ and $(2,3)$ gives the desired result. 
\end{proof}

%------
\subsection{Derivative estimates}

Here our goal is to prove derivative estimates which will entail
the conditions \eqref{c:DF2} and \eqref{c:DF3} in Lemma~\ref{lem:IFT}. 
In fact, we seek to prove the following. 

\begin{proposition}\label{p:DF} 
(i) Given any $C_1>0$, if $0<\delta\le C_1/\kappa_2C_{H^1}$ 
then 
\[ \|D\calF(W)-D\calF(W_0)\|_{\calL(H^1)} \le \kappa_2 C_{H^1}\|W-W_0\|_{H^1} \le C_1  \]
for all $W\in H^1$ with $\|W-W_0\|_{H^1}\le\delta$.

\smallskip\noindent
(ii) For any $W\in H^1$, the operator $D\calF(W)$ is compact on $H^1$.
\end{proposition}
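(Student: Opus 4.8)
\textbf{Proof proposal for Proposition~\ref{p:DF}.}
The plan is to compute $D\calF$ explicitly and exploit the fact that it is built from $\calB_0\inv$ composed with multiplication by a fixed smooth function. Since $\calF(W)=\calB_0\inv\calQ_0(W)=\calB_0\inv(\tfrac12\kappa_2 W^2)$, the derivative of the quadratic map is $D\calQ_0(W)V=\kappa_2 WV$, so $D\calF(W)V=\kappa_2\calB_0\inv(WV)$. For part (i), I would write
\[
(D\calF(W)-D\calF(W_0))V = \kappa_2\,\calB_0\inv\bigl((W-W_0)V\bigr),
\]
and then use that $\calB_0\inv$ is nonexpansive on $H^1$ (its symbol $b_0(k)\inv=(1+\kappa_3|k|^{\alpha-1})\inv\in(0,1]$) together with the Banach algebra estimate \eqref{e:Balg} for $s=1$: $\|(W-W_0)V\|_{H^1}\le C_{H^1}\|W-W_0\|_{H^1}\|V\|_{H^1}$. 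This gives $\|D\calF(W)-D\calF(W_0)\|_{\calL(H^1)}\le\kappa_2 C_{H^1}\|W-W_0\|_{H^1}$, and imposing $\|W-W_0\|_{H^1}\le\delta\le C_1/(\kappa_2 C_{H^1})$ yields the bound $C_1$. This part is entirely routine.

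For part (ii), the compactness of $D\calF(W)V=\kappa_2\calB_0\inv(WV)$ on $H^1$, the key point is a gain of regularity combined with spatial decay. Multiplication by the fixed function $W\in H^s_{\rm even}$ for all $s$ (in particular $W\in H^2$) maps $H^1$ boundedly to $H^1$, and then $\calB_0\inv$ maps $H^1$ into $H^{1+(\alpha-1)}=H^\alpha$ boundedly, since the symbol satisfies $|b_0(k)\inv|\le C(1+|k|)^{-(\alpha-1)}$. Because $\alpha>\tfrac43>1$, this is a genuine gain of at least $\alpha-1>0$ derivatives. However, a gain of regularity alone does not give compactness on $\R$ — one also needs decay. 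I would therefore factor $D\calF(W)$ through the fixed multiplier $W$ (or equivalently $W_0$, modulo the bounded perturbation handled in part (i)): the operator $V\mapsto WV$ from $H^1$ to $H^1$ is itself compact when $W$ decays, by a standard argument splitting into a large ball (where Rellich applies, using the $H^\alpha$ gain after applying $\calB_0\inv$) and the exterior (where $W$ is small in $L^\infty$, hence the $H^1\to H^1$ norm of multiplication by $W\mathbf 1_{|x|>R}$ is small, using that $H^1\hookrightarrow L^\infty$ makes multiplication by an $H^1\cap L^\infty$ function bounded with norm controlled by its $W^{1,\infty}$-type size on the relevant region). Since $W_0$ and hence $W$ decays (it lies in $H^s$ for all $s$ and satisfies a dispersive fixed-point equation; in fact polynomial decay of ground states of \eqref{e:W0eq} is known), the tail is negligible.

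Concretely, I would argue: given a bounded sequence $V_n$ in $H^1$, the sequence $\calB_0\inv(WV_n)$ is bounded in $H^\alpha_{\rm loc}$ and, after the ball/exterior split, is a limit in $H^1$ of operators that factor through the compact embedding $H^\alpha(B_R)\hookrightarrow H^1(B_R)$ (Rellich–Kondrachov, valid since $\alpha>1$) with an error that tends to $0$ uniformly as $R\to\infty$. A diagonal extraction then produces a convergent subsequence in $H^1(\R)$, establishing compactness.

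\textbf{Main obstacle.} The one genuinely delicate point is justifying the smallness of the exterior tail, i.e.\ that $\|W\mathbf 1_{|x|>R}\cdot\|_{\calL(H^1)}\to0$ as $R\to\infty$; this requires knowing that $W$ (equivalently $W_0$, up to the already-controlled perturbation) decays at infinity in a strong enough sense — not merely that $W\in L^2$, but that $W\to0$ with some control on $W'$ as well. This follows from the regularity already asserted for $W_0$ ($W_0\in H^s_{\rm even}$ for all $s$, hence $W_0,W_0'\in L^\infty$ and $\to0$ at infinity by Sobolev embedding applied on far-out intervals), so in the write-up I would invoke that together with the analyticity/smoothness of $W$ established in Proposition~\ref{p:Fanalytic} and the bootstrapping remark preceding Theorem~\ref{t.main}. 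Everything else is a direct consequence of the nonexpansiveness of $\calB_0\inv$, the smoothing estimate $|b_0(k)\inv|\lesssim(1+|k|)^{-(\alpha-1)}$, the Banach algebra property \eqref{e:Balg}, and Rellich's theorem.
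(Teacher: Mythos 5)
Your part (i) coincides with the paper's argument: both write $D\calF(W)V=\kappa_2\calB_0\inv(WV)$, so the difference is $\kappa_2\calB_0\inv((W-W_0)V)$, and both conclude via the nonexpansiveness of $\calB_0\inv$ on $H^1$ and the algebra bound \eqref{e:Balg}. For part (ii) you take a genuinely different route. The paper introduces $\calN(W)f=\calB_0\inv(Wf)$, invokes the compactness criterion of \cite{Pego.85} (operators built from a multiplier with symbol vanishing at infinity composed with multiplication by a $C_0$ function are compact on $L^2$) to get compactness for $W\in C^\infty_c$, and then handles arbitrary $W\in H^1$ by approximating $W$ in $H^1$ by $C^\infty_c$ functions and using the Lipschitz estimate of part (i) to pass to an operator-norm limit of compact operators. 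You instead prove the compactness criterion from scratch: the symbol decay $b_0(k)\inv\lesssim(1+|k|)^{-(\alpha-1)}$ gives a gain of $\alpha-1>0$ derivatives, the spatial decay of the multiplier function controls the exterior of a large ball, and Rellich plus a diagonal extraction finishes. This is a correct and more self-contained argument, at the cost of being longer. Two points to repair in your write-up. First, the sentence asserting that ``the operator $V\mapsto WV$ from $H^1$ to $H^1$ is itself compact when $W$ decays'' is false as stated (multiplication by a nonzero $C^\infty_c$ function is not compact on $H^1$: consider $V_n=n\inv e^{inx}\phi$); the compactness belongs to the composite $V\mapsto\calB_0\inv(WV)$, which is what your ball/exterior argument actually treats, so this is a phrasing slip rather than a logical gap. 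Second, the proposition asserts compactness for \emph{every} $W\in H^1$, whereas your tail estimate is justified by appealing to $W\in H^s$ for all $s$ and to decay of ground states. This restriction is unnecessary: for any $W\in H^1$ one has $\|W\|_{L^\infty(|x|>R)}\to0$ (since $H^1(\R)\hookrightarrow C_0(\R)$) and $\|W'\|_{L^2(|x|>R)}\to0$, which together with $\|V\|_{L^\infty}\lesssim\|V\|_{H^1}$ give the required smallness of multiplication by $W\mathbf 1_{|x|>R}$ in $\calL(H^1)$; alternatively, adopt the paper's density-plus-part-(i) argument, which disposes of general $W\in H^1$ in one line.
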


\begin{proposition}\label{p:DFG} 
Given any $C_2>0$ there exist positive constants $\delta$ and $\eps_0$ such that 
whenever $\eps\in(0,\eps_0)$ and $\|W-W_0\|_{H^1}\le\delta$  we have
\[
\|D\calF(W)-D\calG_\eps(W)\|_{\calL(H^1)} \le C_2\,.
\]
\end{proposition}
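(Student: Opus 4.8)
The plan is to prove Proposition~\ref{p:DFG} by comparing the derivatives of the limiting map $\calF=\calB_0\inv\calQ_0$ and the $\eps$-dependent map $\calG_\eps=\calB_\eps\inv(\calQ_\eps+\calZ_\eps)$ term by term. Writing $D\calF(W)=\calB_0\inv D\calQ_0(W)$ and $D\calG_\eps(W)=\calB_\eps\inv(D\calQ_\eps(W)+D\calZ_\eps(W))$, I would split
\[
 D\calF(W)-D\calG_\eps(W) = \calB_0\inv D\calQ_0(W) - \calB_\eps\inv D\calQ_\eps(W) - \calB_\eps\inv D\calZ_\eps(W),
\]
and insert the intermediate term $\calB_\eps\inv D\calQ_0(W)$ to get three pieces: $(\calB_0\inv-\calB_\eps\inv)D\calQ_0(W)$, $\calB_\eps\inv(D\calQ_0(W)-D\calQ_\eps(W))$, and $\calB_\eps\inv D\calZ_\eps(W)$. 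The third piece is already controlled: by Lemma~\ref{lem:Zeps} (with $R$ chosen so that $C_{H^1}\|W\|_{H^1}\le R$ for all $W$ with $\|W-W_0\|_{H^1}\le\delta$, and $\eps_0$ small enough that $\eps_0^\mu R\le\rho<1$) and the non-expansiveness of $\calB_\eps\inv$ on $H^1$, we get $\|\calB_\eps\inv D\calZ_\eps(W)\|_{\calL(H^1)}\le\eps^\mu\zeta_\alpha Z_3'(\rho)(R/\rho)^2$, which is $O(\eps^\mu)\to0$.

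For the second piece, since $\calB_\eps\inv$ is non-expansive it suffices to bound $\|D\calQ_0(W)-D\calQ_\eps(W)\|_{\calL(H^1)}$ uniformly for $\|W-W_0\|_{H^1}\le\delta$. Here $D\calQ_0(W)V=\kappa_2 WV$ and $D\calQ_\eps(W)V=2\alpha_2\sum_{m\ge1}m^{-\alpha}\calA_{m\eps}((\calA_{m\eps}W)(\calA_{m\eps}V))$, so the difference is a sum over $m$ of $\alpha_2 m^{-\alpha}(2\calA_{m\eps}((\calA_{m\eps}W)(\calA_{m\eps}V)) - 2WV)$; using that $H^1$ is a Banach algebra, $\|\calA_{m\eps}\|\le1$, and the pointwise strong convergence $\calA_{m\eps}\to I$, together with $W_0\in H^3$ and $\|\calA_{m\eps}-I\|_{H^1\leftarrow H^3}\le(m\eps)^2/24$, the individual term is $O(\min(1,(m\eps)^2))\|W\|_{H^1}\|V\|_{H^1}$; summing as in Lemma~\ref{lem:quad_diff} (split at $m=\lfloor1/\eps\rfloor$) gives $O(\eps^{\alpha-1})\|V\|_{H^1}$ uniformly in $W$ on the $\delta$-ball, hence $\to0$. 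The first piece is handled by Lemma~\ref{lem:bepsdiff}(ii) applied to $D\calQ_0(W)V=\kappa_2 WV$: since $W\in H^1$ is only as smooth as $W_0\pm(W-W_0)$, and Lemma~\ref{lem:bepsdiff}(ii) requires $H^{s+3}$ regularity with $s=1$, I would instead split $W=W_0+(W-W_0)$ so that $D\calQ_0(W)V=\kappa_2 W_0V+\kappa_2(W-W_0)V$; for the $W_0$ part apply Lemma~\ref{lem:bepsdiff}(ii) using $\|W_0V\|_{H^4}\le C_{H^4}\|W_0\|_{H^4}\|V\|_{H^4}$—but this costs $H^4$ control of $V$, which we do not have in $\calL(H^1)$.

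\textbf{The main obstacle} is exactly this loss of derivatives in the first piece: $(\calB_0\inv-\calB_\eps\inv)$ gains only finitely many derivatives of smallness, so naively one needs extra Sobolev regularity of the argument, which is unavailable since $V$ ranges over all of $H^1$. The resolution is to exploit the smoothing of $\calB_\eps\inv$ and $\calB_0\inv$ themselves rather than of the argument: one writes $(\calB_0\inv-\calB_\eps\inv)D\calQ_0(W_0)$ as a \emph{fixed} compact operator (it maps $H^1$ into $H^{1+(\alpha-1)}$ boundedly and is independent of the $V$-regularity, since $D\calQ_0(W_0)$ maps $H^1\to H^1$ and then multiplication by $b_\eps\inv-b_0\inv$, which decays like $|k|^{-1}$ relative to a constant, buys back a derivative), and bounds its operator norm on $H^1$ directly from the symbol estimate in Lemma~\ref{lem:bepsdiff}(i): $\|(\calB_0\inv-\calB_\eps\inv)g\|_{H^1}^2=\int(1+k^2)|b_0\inv-b_\eps\inv|^2|\hat g|^2\,dk\le C\eps^{2\min(1,3-\alpha)}\int(1+k^2)|k|^{2\min(1,3-\alpha)}|\hat g|^2\,dk$, and since $|k|^{2\min(1,3-\alpha)}\le (1+k^2)$, this is $\le C\eps^{2\min(1,3-\alpha)}\|g\|_{H^2}^2$—so the loss is only from $H^1$ to $H^2$, i.e.\ one derivative, which is precisely what $D\calQ_0(W_0)=\kappa_2W_0\cdot$ \emph{cannot} supply. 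The honest fix is to note $D\calQ_0(W)V=\kappa_2WV$ and that multiplication by $W_0$ does \emph{not} preserve $H^2$ starting from $H^1$; therefore I would instead first establish that on the relevant ball the operator $D\calF(W)$ maps $H^1$ boundedly into $H^{1+(\alpha-1)}$ with norm uniform in $W$ (using $\calB_0\inv\colon H^1\to H^{1+(\alpha-1)}$), and then use interpolation: the difference $(\calB_0\inv-\calB_\eps\inv)$ is bounded on $H^1$ (norm $O(1)$) and, restricted to the range of $D\calQ_0(W)$, small in a weaker topology, so an interpolation/Lebesgue-dominated-convergence argument over the Fourier variable—exactly as in \eqref{e:Qeps_slim}, with the integrand $\le C\min(1,\eps^2|k|^2)\,(1+k^2)|\widehat{\kappa_2WV}|^2$ dominated by the integrable $(1+k^2)|\widehat{\kappa_2WV}|^2$ and $\to0$ pointwise—gives $\|(\calB_0\inv-\calB_\eps\inv)D\calQ_0(W)\|_{\calL(H^1)}\to0$, uniformly for $\|W-W_0\|_{H^1}\le\delta$ because the dominating function depends on $W$ only through $\|W\|_{H^1}\le\|W_0\|_{H^1}+\delta$. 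Collecting the three $o(1)$ bounds, choosing $\eps_0$ small enough that their sum is $\le C_2$ completes the proof. \qed
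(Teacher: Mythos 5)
There is a genuine gap, and it sits in your treatment of the middle piece $\calB_\eps\inv(D\calQ_0(W)-D\calQ_\eps(W))$. You claim each summand $\calA_{m\eps}\bigl((\calA_{m\eps}W)(\calA_{m\eps}V)\bigr)-WV$ is $O\bigl(\min(1,(m\eps)^2)\bigr)\|W\|_{H^1}\|V\|_{H^1}$ and sum as in Lemma~\ref{lem:quad_diff}. That rate is only available when the argument of $\calA_{m\eps}-I$ has two extra derivatives, as in \eqref{e:Aeta_est}; here $V$ ranges over all of $H^1$, so no such rate exists. Worse, the conclusion itself is false: $\calA_{m\eps}\to I$ only strongly, not in operator norm, and if you test against $V_k(\xx)=e^{ik\xx}\phi(\xx)/\sqrt{1+k^2}$ with $k\to\infty$ (for fixed $\eps$), then $\calA_{m\eps}V_k\to0$ in $H^1$ while $\|WV_k\|_{H^1}$ stays bounded below, so $\|D\calQ_\eps(W)-D\calQ_0(W)\|_{\calL(H^1)}$ is bounded below by a positive constant independent of $\eps$. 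This is exactly why Lemma~\ref{lem:quad_diff} is stated for the fixed smooth function $W_0$ and not as an operator estimate. The missing idea is the paper's compactness mechanism: first reduce from general $W$ to $W_0$ by the Lipschitz bounds on $D\calQ_\eps$ and $D\calF$, then compose with the smoothing multiplier $\calB_0\inv$ so that the relevant object is the \emph{compact} operator $\calN_0 f=\calB_0\inv(W_0f)$ (Lemma~\ref{lem:calN}), and finally invoke the abstract Lemma~\ref{lem:strongnorm}, with $\calA_{m\eps}$ self-adjoint and strongly convergent to $I$, to upgrade strong convergence to $\|\calA_{m\eps}\calN_{m\eps}\calA_{m\eps}-\calN_0\|_{\calL(H^1)}\to0$; the damping $|k|^{-(\alpha-1)}$ in $b_0\inv$ is what kills the high-frequency obstruction above.

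Your first piece $(\calB_0\inv-\calB_\eps\inv)D\calQ_0(W)$ also is not closed as written. You correctly observe that Lemma~\ref{lem:bepsdiff} costs derivatives you cannot spend on a general $V\in H^1$, but your proposed repair --- dominated convergence in the Fourier variable with dominating function $(1+k^2)|\widehat{\kappa_2 WV}|^2$ --- proves convergence for each \emph{fixed} $V$, i.e.\ strong convergence, and does not yield a bound uniform over the unit ball of $V$ (the domination depends on $V$, and $\widehat{WV}$ can concentrate at arbitrarily high frequency). The correct statement, $\|\calB_\eps\inv-\calB_0\inv\|_{\calL(H^1)}\to0$, is true but is a separate fact (Proposition~\ref{p:Bepslim}, via Lemma~\ref{lem:bepslim}): it follows from the \emph{uniform-in-$k$} convergence $\sup_k|b_\eps(k)\inv-b_0(k)\inv|\to0$, whose proof needs the lower bounds $b_\eps(k)\ge 1+\nu_1|k|^{\alpha-1}$ and $b_\eps(k)\ge\nu_2\eps^{1-\alpha}$ in the intermediate and large frequency regimes --- not just the small-$h$ rate for $S_\alpha(h)-\kappa_3$. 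Your handling of the $D\calZ_\eps$ term via Lemma~\ref{lem:Zeps} and non-expansivity of $\calB_\eps\inv$ is correct and matches the paper.
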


\begin{proof}[Proof of Proposition~\ref{p:DF}]
Define $\calN\colon H^1\to\calL(H^1)$ by
\begin{equation} \label{d:N0}
\calN(W)f=\calB_0\inv(Wf)\,.
\end{equation}
Then $D\calF = \kappa_2\calN$, and Proposition~\ref{p:DF} follows immediately
from the following lemma.
\end{proof}

\begin{lemma}\label{lem:calN} 
(i)   For any $W_1,W_2\in H^1$ we have 
\[
\|\calN(W_1)-\calN(W_2)\|_{\calL(H^1)}\le C_{H^1}\|W_1-W_2\|_{H^1} \,.
\]
(ii) For any $W\in H^1$, the operator $\calN(W)$ is compact on $H^1$.
\end{lemma}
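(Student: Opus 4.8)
The plan is to prove both parts using only the Banach-algebra property \eqref{e:Balg} of $H^1$, the nonexpansiveness of $\calB_0\inv$, and the smoothing effect of the averaging-type multiplier $\calB_0\inv$ recorded in \eqref{d:B0}. For part (i), I would start from the identity $(\calN(W_1)-\calN(W_2))f = \calB_0\inv((W_1-W_2)f)$, which is immediate from the definition \eqref{d:N0}. Since $\calB_0\inv$ is nonexpansive on $H^1$ (its symbol $b_0(k)\inv$ lies in $(0,1]$, cf.~\eqref{e:beps_bound1} with $\eps\to0$ or directly \eqref{d:B0}), we get $\|(\calN(W_1)-\calN(W_2))f\|_{H^1}\le \|(W_1-W_2)f\|_{H^1}\le C_{H^1}\|W_1-W_2\|_{H^1}\|f\|_{H^1}$ by \eqref{e:Balg}. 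Taking the supremum over $\|f\|_{H^1}\le 1$ gives the stated operator-norm bound.

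For part (ii), the idea is that $\calN(W)$ factors through a more regular space on which the inclusion back into $H^1$ is compact, exploiting that $\calB_0\inv$ gains $\alpha-1>\tfrac13$ derivatives. Concretely, multiplication by $W\in H^1$ maps $H^1$ boundedly into $H^1$ (Banach algebra), and then $\calB_0\inv$ maps $H^1$ boundedly into $H^{\alpha}$, since the symbol satisfies $(1+|k|^2)^{(\alpha-1)/2} b_0(k)\inv = (1+|k|^2)^{(\alpha-1)/2}(1+\kappa_3|k|^{\alpha-1})\inv$, which is bounded on $\R$. Thus $\calN(W)$ maps $H^1$ boundedly into $H^{\alpha}$ with $\alpha>1$. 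However, the embedding $H^{\alpha}(\R)\hookrightarrow H^1(\R)$ is \emph{not} compact on the whole line (no decay at spatial infinity), so one extra step is needed: I would approximate $W$ in $H^1$ by compactly supported smooth $W_n$, note that $\calN(W_n)$ maps $H^1$ into $H^1$-functions supported (after the local multiplication) in a fixed bounded set before the smoothing, and use the Rellich–Kondrachov theorem on that bounded set together with the smoothing of $\calB_0\inv$ to conclude each $\calN(W_n)$ is compact; then part (i) gives $\|\calN(W)-\calN(W_n)\|_{\calL(H^1)}\le C_{H^1}\|W-W_n\|_{H^1}\to0$, so $\calN(W)$ is a norm-limit of compact operators, hence compact.

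The main obstacle is precisely this non-compactness of $H^{\alpha}\hookrightarrow H^1$ over the unbounded domain $\R$: the factorization through a smoother Sobolev space alone does not suffice, and one must also use localization in physical space (that $W$, hence $Wf$, can be taken compactly supported up to small $H^1$ error) to recover compactness via a standard Rellich argument, then pass to the limit using part (i). A cleaner alternative, if available from the cited compactness machinery (Lemma~\ref{lem:strongnorm}), would be to observe that $\calB_0\inv$ composed with multiplication by a fixed $H^1$ function is a classical Hilbert--Schmidt-type or Fréchet--Kolmogorov-compact operator; but the localization-plus-smoothing argument above is elementary and self-contained, so that is the route I would write out in full.
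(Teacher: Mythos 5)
Your part (i) is exactly the paper's argument: the identity $(\calN(W_1)-\calN(W_2))f=\calB_0\inv((W_1-W_2)f)$, nonexpansiveness of $\calB_0\inv$, and the Banach-algebra bound \eqref{e:Balg}. For part (ii) your overall scheme also matches the paper's (approximate $W$ by $W_n\in C^\infty_c(\R)$, show each $\calN(W_n)$ is compact, then use part (i) to pass to the norm limit), but the core compactness step is handled by a genuinely different argument. The paper invokes the $L^2$ compactness criterion of \cite{Pego.85} for operators of the form ``Fourier multiplier with symbol vanishing at infinity composed with multiplication by a function vanishing at infinity,'' applied to both $\calN(W_n)$ and $\calN(W_n')$ so as to lift compactness from $L^2$ to $H^1$. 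You instead localize in physical space and smooth in frequency: for $\|f\|_{H^1}\le1$ the products $W_nf$ form a bounded subset of $H^1$ supported in a fixed compact set, hence (Rellich--Kondrachov) precompact in $H^s$ for any $s<1$; choosing $s\in[2-\alpha,1)$ (possible since $\alpha>1$) and using that $\calB_0\inv$ maps $H^s$ boundedly into $H^{s+\alpha-1}\hookrightarrow H^1$ gives precompactness of the image in $H^1$. This is correct --- and you rightly flagged that the naive factorization through $H^\alpha$ alone fails because $H^\alpha(\R)\hookrightarrow H^1(\R)$ is not compact on the whole line, which is precisely why the localization step is needed. Your route is more elementary and self-contained; the paper's is shorter by citation and avoids choosing the intermediate exponent $s$ by working symmetrically in physical and Fourier variables. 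Both are complete proofs.
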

\begin{proof} For all $V\in H^1$ we have 
    \[
    \calN(W_1)V - \calN(W_2)V  = \calB_0\inv ((W_1-W_2) V)\,.
    \]
Since $\calB_0\inv$ is nonexpansive on $H^1$ the estimate in (i) follows.

For part (ii), assume at first that $W\in C^\infty_c(\R)$.
Then the operators $\calN(W)$ and $\calN(W')$ are compact
on $L^2$ by the compactness criteria in \cite{Pego.85},  since 
the functions $b_0\inv$, $W$ and $W'$ are continuous and vanish at $\infty$.  
It follows easily that $\calN(W)$ is compact on $H^1$.  
For a general $W\in H^1$,  choose a sequence of functions $W_n\in C^\infty_c(\R)$
approximating $W$ in $H^1$.
Then part (i) implies $\calN(W)$ is approximated in $\calL(H^1)$
by the compact operators $\calN(W_n)$, hence is itself compact.
\end{proof}

\begin{proof}[Proof of Proposition~\ref{p:DFG}]
Recall  $\kappa_2=2\alpha_2\zeta_\alpha$ and 
\[
D\calF(W)V = \calB_0\inv (D\calQ_0(W)V) = \kappa_2\calN(W)V = \kappa_2 \calB_0\inv(WV)\,,
\]
\[
D\calG_\eps(W)V = \calB_\eps\inv( D\calQ_\eps(W)V + D\calZ_\eps(W)V) \,,
\]
where 
\[
D\calQ_\eps(W)V = 2\alpha_2 \sum_{m\ge1}m^{-\alpha} \calA_{m\eps}( (\calA_{m\eps}W) (\calA_{m\eps} V)\bigr) \,.
\]
Based on the multiplicative inequality for $H^1$ 
and the non-expansivity of  $\calB_0\inv$, $\calB_\eps\inv$, and  $\calA_{m\eps}$,
the proof of the following lemma is easy and is omitted. 
\begin{lemma} For all $W\in H^1$ we have
\begin{align*}
    \|\calB_\eps\inv (D\calQ_\eps(W)-D\calQ_\eps(W_0))\|_{\calL(H^1)} &\le 
\kappa_2 C_{H^1} \|W-W_0\|_{H^1}.
\end{align*} 
\end{lemma}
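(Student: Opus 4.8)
The plan is to exploit the bilinear structure of $D\calQ_\eps$ together with the submultiplicativity of the $H^1$ norm. Since $\calQ_\eps$ is a quadratic map, its derivative $D\calQ_\eps(W)V = 2\alpha_2\sum_{m\ge1}m^{-\alpha}\calA_{m\eps}\bigl((\calA_{m\eps}W)(\calA_{m\eps}V)\bigr)$ depends bilinearly on $(W,V)$; so, using linearity of each $\calA_{m\eps}$ in its argument, I would first rewrite
\[
\bigl(D\calQ_\eps(W)-D\calQ_\eps(W_0)\bigr)V = 2\alpha_2\sum_{m\ge1}m^{-\alpha}\calA_{m\eps}\bigl((\calA_{m\eps}(W-W_0))(\calA_{m\eps}V)\bigr)\,,
\]
a series that converges absolutely in $H^1$, just as in the definition \eqref{d:Qeps} of $\calQ_\eps$.

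Next I would apply $\calB_\eps\inv$, which is non-expansive on $H^1$ by \eqref{e:beps_bound1}, and then estimate the series term by term: for each $m$, discard the outer $\calA_{m\eps}$ by its non-expansivity \eqref{e:Aeta_bound}, bound the product $(\calA_{m\eps}(W-W_0))(\calA_{m\eps}V)$ via the Banach-algebra inequality \eqref{e:Balg} with constant $C_{H^1}$, and discard the two inner $\calA_{m\eps}$'s by \eqref{e:Aeta_bound} again, so that the $H^1$-norm of the $m$-th summand is at most $2\alpha_2 m^{-\alpha}C_{H^1}\|W-W_0\|_{H^1}\|V\|_{H^1}$. Summing over $m\ge1$ contributes $\sum_{m\ge1}m^{-\alpha}=\zeta_\alpha$, which is finite since $\alpha>1$, and since $\kappa_2=2\alpha_2\zeta_\alpha$ by \eqref{d:Q0} the total equals $\kappa_2 C_{H^1}\|W-W_0\|_{H^1}\|V\|_{H^1}$. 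Taking the supremum over $V$ with $\|V\|_{H^1}\le1$ completes the argument.

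I do not expect a genuine obstacle here; the only points requiring a word of care are the interchange of $\calB_\eps\inv$ with the infinite sum and the term-by-term triangle inequality, both legitimate because the majorant series $\sum_m m^{-\alpha}C_{H^1}\|W-W_0\|_{H^1}\|V\|_{H^1}$ converges. It is worth noting that the Lipschitz constant produced, $\kappa_2 C_{H^1}$, coincides with the one obtained for $D\calF$ in Proposition~\ref{p:DF}(i) via Lemma~\ref{lem:calN}(i); this matching is precisely what lets the $\calQ_\eps$-part of $\|D\calF(W)-D\calG_\eps(W)\|_{\calL(H^1)}$ be absorbed, alongside the $O(\eps^{\alpha-1})$ contribution of $D\calZ_\eps$ from Lemma~\ref{lem:Zeps} and the operator-norm smallness of $D\calF(W_0)-D\calG_\eps(W_0)$, into the bound $C_2$ required by condition \eqref{c:DF3}.
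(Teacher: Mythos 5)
Your proof is correct and is exactly the argument the paper has in mind: the paper omits the proof, remarking only that it follows from the multiplicative inequality for $H^1$ and the non-expansivity of $\calB_\eps\inv$ and $\calA_{m\eps}$, which is precisely the term-by-term estimate you carry out. The constant bookkeeping ($2\alpha_2\zeta_\alpha=\kappa_2$) and the justification for summing the series are both right.
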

By this result and the bounds on $D\calZ_\eps$ in Lemma~\ref{lem:Zeps},
to prove Proposition~\ref{p:DFG} it suffices to prove that 
\begin{equation}\label{e:BQdiff1}
    \| \calB_0\inv D\calQ_0(W_0) - \calB_\eps\inv D\calQ_\eps(W_0)\|_{\calL(H^1)} 
    \to 0 \quad\text{as $\eps\to0$}.
\end{equation}

Key to our approach is the following result on operator norm convergence
of Fourier multipliers. It will be proved in the subsection to follow, 
by use of Plancherel's identity and
a proof that $b_\eps(k)\inv\to b_0(k)\inv$ uniformly in $k$.
\begin{proposition}\label{p:Bepslim}
    As $\eps\to0$ we have $\|\calB_\eps\inv - \calB_0\inv\|_{\calL(H^1)}\to 0$.
\end{proposition}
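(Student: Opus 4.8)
The plan is to prove Proposition~\ref{p:Bepslim} directly via Plancherel's identity: since $\calB_\eps\inv$ and $\calB_0\inv$ are Fourier multipliers with symbols $b_\eps(k)\inv$ and $b_0(k)\inv$, their difference is the Fourier multiplier with symbol $b_\eps(k)\inv - b_0(k)\inv$, and the operator norm on $H^1$ (indeed on any $H^s$) of a Fourier multiplier equals the $L^\infty$ norm of its symbol. Hence it suffices to show
\[
\|b_\eps\inv - b_0\inv\|_{L^\infty(\R)} \to 0 \quad\text{as }\eps\to0.
\]
So the whole content reduces to a uniform-in-$k$ convergence statement for the scalar symbols.

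First I would record, as in \eqref{e:bdiffbd}, the bound
\[
|b_\eps(k)\inv - b_0(k)\inv| = \frac{|k|^{\alpha-1}\,|S_\alpha(\eps k) - \kappa_3|}{b_0(k)\,b_\eps(k)} \le \frac{|S_\alpha(\eps k) - \kappa_3|}{\kappa_3\,b_\eps(k)} \le \frac{|S_\alpha(\eps k) - \kappa_3|}{\kappa_3}\,,
\]
using $b_0(k)\ge \kappa_3|k|^{\alpha-1}$ and $b_\eps(k)\ge1$ from \eqref{e:beps_bound1}. Then I would split into the regime $\eps|k|\le 1$ and $\eps|k|>1$. For $\eps|k|\le 1$, writing $h=\eps|k|\in(0,1]$, Lemma~3 of \cite{Wright.24} (already quoted inside the proof of Lemma~\ref{lem:bepsdiff}) gives $|S_\alpha(h)-\kappa_3|\le Ch^{3-\alpha}$ for $\alpha\in(2,3)$ and $\le Ch$ for $\alpha\in(1,2]$; in either case $|S_\alpha(h)-\kappa_3|\le C h^{\min(1,3-\alpha)}\le C\eps^{\min(1,3-\alpha)}$, which tends to $0$ uniformly over such $k$. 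For $\eps|k|>1$ I would instead keep one extra power of $|k|$: from the first displayed bound,
\[
|b_\eps(k)\inv - b_0(k)\inv| \le \frac{|k|^{\alpha-1}\,(S_\alpha(\eps k)+\kappa_3)}{b_0(k)b_\eps(k)} \le \frac{S_\alpha(\eps k)+\kappa_3}{\kappa_3\,b_\eps(k)}\,,
\]
and here $S_\alpha(\eps k)\le (\eps k)^{1-\alpha}\zeta_\alpha \le \zeta_\alpha$ by \eqref{e:Salphabd} since $\eps|k|>1$ and $\alpha>1$; moreover in this regime $b_\eps(k) = 1+|k|^{\alpha-1}S_\alpha(\eps k)$, and one checks $|k|^{\alpha-1}S_\alpha(\eps k) = |k|^{\alpha-1}(\eps|k|)^{1-\alpha}\sum_{m\ge1}\alpha_1 m^{-\alpha}(1-\sinc^2(\tfrac12 m\eps k)) = \eps^{1-\alpha}\sum_{m\ge1}\alpha_1 m^{-\alpha}(1-\sinc^2(\cdot))$, which blows up like $\eps^{1-\alpha}$ as $\eps\to0$ provided the sum is bounded below by a positive constant uniformly over $\eps|k|>1$. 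That last point — a uniform positive lower bound on $\sum_{m\ge1}m^{-\alpha}(1-\sinc^2(\tfrac12 m\eps k))$ for $\eps|k|\ge1$ — is the one genuinely nontrivial estimate, and I expect it to be the main obstacle; it follows because $1-\sinc^2(z/2)$ is bounded below by a positive constant on, say, $[\pi/2, \pi]$, and for $\eps|k|\ge1$ at least one value $m\eps|k|$ with $m\in\N$ lands near that range (using $\eps|k|\le$ the spacing), or more robustly by comparison with the integral $S_\alpha(\eps k)$ itself, which converges to $\kappa_3>0$ and is continuous and positive on $[1,\infty)$, hence bounded below there. Either way one gets $b_\eps(k)\ge c\,\eps^{1-\alpha}$ uniformly for $\eps|k|>1$, so the bound $(S_\alpha(\eps k)+\kappa_3)/(\kappa_3 b_\eps(k)) \le C\eps^{\alpha-1}\to0$.

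Combining the two regimes gives $\|b_\eps\inv-b_0\inv\|_{L^\infty}\le C\eps^{\min(1,3-\alpha,\alpha-1)}\to0$, which by Plancherel yields $\|\calB_\eps\inv-\calB_0\inv\|_{\calL(H^1)}\to0$ and completes the proof. I would present the Fourier-multiplier/Plancherel reduction first, then the two-regime symbol estimate, flagging the uniform lower bound on $b_\eps$ in the far regime as the only delicate point; everything else is a direct consequence of \eqref{e:beps_bound1}, \eqref{e:Salphabd}, \eqref{e:bdiffbd}, and the already-invoked Lemma~3 of \cite{Wright.24}.
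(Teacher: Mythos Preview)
Your reduction via Plancherel to $\sup_k|b_\eps(k)\inv-b_0(k)\inv|\to0$ is exactly the paper's starting point, but your two-regime estimate has a genuine gap in each regime.

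In the regime $\eps|k|\le1$ you write $|S_\alpha(h)-\kappa_3|\le Ch^{\min(1,3-\alpha)}\le C\eps^{\min(1,3-\alpha)}$ with $h=\eps|k|$. The last inequality is false: $h$ ranges over all of $(0,1]$, and at $|k|=1/\eps$ you get $h=1$, not $h\le\eps$. So your bound in this regime does not tend to zero uniformly in $k$. In the regime $\eps|k|>1$ your ``more robust'' alternative---that $S_\alpha$ is continuous, positive on $[1,\infty)$, hence bounded below there---is also false: from \eqref{e:Salphabd} one has $S_\alpha(h)\le\alpha_1 h^{1-\alpha}\zeta_\alpha\to0$ as $h\to\infty$, so $S_\alpha$ is \emph{not} bounded below by a positive constant on $[1,\infty)$. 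Your first alternative for the far regime (a lower bound on $\sum_m m^{-\alpha}(1-\sinc^2(\tfrac12 m\eps k))$) can be made to work, but not via ``some $m\eps|k|$ lands in $[\pi/2,\pi]$'' (this fails once $\eps|k|>\pi$); rather, the $m=1$ term alone is bounded below since $1-\sinc^2(h/2)$ is continuous, positive on $[1,\infty)$, and tends to $1$.

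The paper fixes the near-regime problem by using a $\delta$-dependent three-way split: first $0<\eps|k|\le h_\delta$, where $|S_\alpha-\kappa_3|<\delta$ by qualitative convergence and $b_\eps\ge1$; then $h_\delta\le\eps|k|\le h_0$, where one uses the lower bound $b_\eps(k)\ge1+\nu_1|k|^{\alpha-1}\ge\nu_1(h_\delta/\eps)^{\alpha-1}$ (coming from $\inf_{[0,h_0]}S_\alpha>0$); and finally $\eps|k|>h_0$ with the lower bound $b_\eps(k)\ge\nu_2\eps^{1-\alpha}$ as you intended. Your quantitative route via Wright's rate can also be repaired in the near regime, but you must keep \emph{both} lower bounds $b_0(k)\ge\kappa_3|k|^{\alpha-1}$ and $b_\eps(k)\ge\nu_1|k|^{\alpha-1}$ (the latter from $\inf_{[0,1]}S_\alpha>0$) rather than dropping to $b_\eps\ge1$; this yields $|b_\eps\inv-b_0\inv|\le C\eps^\gamma|k|^{\gamma-(\alpha-1)}/(\kappa_3\nu_1)$ for $|k|\ge1$, which is uniformly $O(\eps^{\min(\alpha-1,3-\alpha)})$ on $1\le|k|\le1/\eps$.
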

Taking this for granted at present,
since $D\calQ_\eps(W_0)$ is uniformly bounded we infer that 
to prove Proposition~\ref{p:DFG} it suffices to 
replace $\calB_\eps\inv$ by $\calB_0\inv$ in \eqref{e:BQdiff1}, i.e.,  to prove
\begin{equation}\label{e:BQdiff2}
    \| \calB_0\inv D\calQ_0(W_0) - \calB_0\inv D\calQ_\eps(W_0) \|_{\calL(H^1)} 
    \to 0 \quad\text{as $\eps\to0$}.
\end{equation}
To proceed we define operators $\calN_{\eta}$ and $\calN_0$ by 
\begin{equation}\label{d:calN}
\calN_{\eta} f = \calB_0\inv( (\calA_{\eta}W_0) f) \,,
\quad
\calN_0 f= \calB_0\inv(W_0 f).
\end{equation}
Evidently by Lemma~\ref{lem:calN} we have that 
\begin{equation}
    \|\calN_{\eta}  - \calN_0  \|_{\calL(H^1)} \le 
    C_{H^1} \|(\calA_{\eta}-I)W_0\|_{H^1} \to 0 \quad\text{as $\eta\to0$.}
\end{equation}
And we may write 
\begin{equation}
\calB_0\inv D\calQ_\eps(W_0)V = 
2\alpha_2\sum_{m\ge1} m^{-\alpha} 
\calA_{m\eps} \calN_{m\eps} \calA_{m\eps} V \,.
\end{equation}

Since $\calA_\eta$ is self-adjoint and $\calA_\eta\to I$ strongly as $\eta\to0$,
we can use the fact that $\calN_0$ is compact and 
the abstract Lemma~\ref{lem:strongnorm} below  to conclude that 
\begin{equation}
    \| \calA_\eta \calN_\eta \calA_\eta - \calN_0\|_{\calL(H^1)} \to 0 
    \quad\text{as $\eta\to0$.}
\end{equation}
Then \eqref{e:BQdiff2} follows 
from the fact that
\[
    \calB_0\inv D\calQ_0(W_0) - \calB_0\inv D\calQ_\eps(W_0) = 
    2\alpha_2 \sum_{m\ge1} m^{-\alpha} (\calN_0-\calA_{m\eps}\calN_{m\eps}\calA_{m\eps}),
\]
by using the dominated convergence theorem on the sum after taking norms.
Modulo the proofs of Proposition~\ref{p:Bepslim} and Lemma~\ref{lem:strongnorm} to come,
this completes  the proof of Proposition~\ref{p:DFG}. 
\end{proof}

\subsection{Lemmas on compactness and Fourier multipliers}
In this subsection we complete the proof of Proposition~\ref{p:DFG},
by proving the proofs of the two just-mentioned results deferred 
from the previous subsection.

\subsubsection{Compactness and operator convergence}
\begin{lemma}\label{lem:strongnorm}
   Let $\calX$ be a Banach space. Let $S,T\in \calL(\calX)$, 
   and assume $T$ is compact.
  Let $(S_n)_n$ and $(T_n)_n$ be sequences in $\calL(\calX)$, 
   and assume $\|T_n-T\|\to0$ as $n\to\infty$.  Then:
\begin{itemize}
\item[(i)] If $S_n \to S$ strongly, then $\|S_nT_n-ST\|\to0$.
\item[(ii)] If the adjoints $S_n^* \to S^*$ strongly, then $\|T_nS_n-TS\|\to0$.
\end{itemize}
\end{lemma}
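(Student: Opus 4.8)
The plan is to reduce both assertions to one elementary observation: \emph{if $(A_n)_n$ is a uniformly bounded sequence in $\calL(\calX)$ with $A_n x\to0$ for every $x\in\calX$, then $\sup_{y\in K}\|A_n y\|\to0$ for every compact set $K\subset\calX$.} I would prove this first. Put $M=\sup_n\|A_n\|<\infty$ and fix $\epsilon>0$; cover $K$ by finitely many balls of radius $\epsilon$ centered at points $y_1,\dots,y_N\in K$. For $y\in K$ pick $i$ with $\|y-y_i\|<\epsilon$, so that $\|A_n y\|\le M\epsilon+\|A_n y_i\|$; choosing $n$ large enough that $\|A_n y_i\|<\epsilon$ for all $i\le N$ yields $\|A_n y\|\le(M+1)\epsilon$ uniformly on $K$, and since $\epsilon$ was arbitrary the claim follows. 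I will also use three standard facts: a strongly convergent sequence of operators is uniformly bounded (uniform boundedness principle); $\|A\|=\|A^*\|$ for $A\in\calL(\calX)$; and Schauder's theorem, that the adjoint of a compact operator is compact.

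For part (i), I would write $S_n T_n-ST=S_n(T_n-T)+(S_n-S)T$. The first term satisfies $\|S_n(T_n-T)\|\le(\sup_m\|S_m\|)\,\|T_n-T\|\to0$, using $\sup_m\|S_m\|<\infty$ (strong convergence of $S_n$) and the hypothesis $\|T_n-T\|\to0$. For the second term, let $K$ be the closure of $\{Tx:\|x\|\le1\}$, which is compact since $T$ is; then $\|(S_n-S)T\|\le\sup_{y\in K}\|(S_n-S)y\|$, and the observation above applied to $A_n=S_n-S$ (uniformly bounded, and $\to0$ strongly since $S_n\to S$ strongly and both are bounded) shows this tends to $0$. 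Hence $\|S_n T_n-ST\|\to0$.

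For part (ii), I would write $T_n S_n-TS=(T_n-T)S_n+T(S_n-S)$. Since $S_n^*\to S^*$ strongly, the uniform boundedness principle gives $\sup_m\|S_m^*\|<\infty$, hence $\sup_m\|S_m\|=\sup_m\|S_m^*\|<\infty$, so $\|(T_n-T)S_n\|\le\|T_n-T\|\,(\sup_m\|S_m\|)\to0$. For the remaining term I pass to adjoints: $\|T(S_n-S)\|=\|(T(S_n-S))^*\|=\|(S_n^*-S^*)T^*\|$. By Schauder's theorem $T^*\in\calL(\calX^*)$ is compact, so with $K$ the closure of $\{T^*\phi:\|\phi\|\le1\}\subset\calX^*$ we get $\|(S_n^*-S^*)T^*\|\le\sup_{\psi\in K}\|(S_n^*-S^*)\psi\|\to0$, again by the elementary observation applied to $A_n=S_n^*-S^*$. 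Therefore $\|T_n S_n-TS\|\to0$.

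There is essentially no obstacle: the only content beyond routine splitting-and-estimating is the finite-net argument showing that a uniformly bounded, strongly null sequence of operators converges uniformly on compact sets, and — for part (ii) — the duality facts $\|A\|=\|A^*\|$ and Schauder's theorem, which transfer the ``left compact factor'' case to the already-handled ``right compact factor'' case.
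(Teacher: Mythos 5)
Your proof is correct, and it takes a mildly but genuinely different route from the paper's. The paper argues by contradiction: assuming $\|(S_nT_n-ST)x_n\|\ge c>0$ on unit vectors $x_n$, it uses sequential compactness of $T$ to extract $Tx_n\to y$ along a subsequence and then splits
$\|(S_nT_n-ST)x_n\|\le \|S_n(T_nx_n-y)\|+\|(S_n-S)y\|+\|S(y-Tx_n)\|$,
reaching a contradiction via uniform boundedness of $\|S_n\|$. You instead prove directly the standard fact that a uniformly bounded, strongly null sequence of operators converges to zero uniformly on compact sets (via a finite $\epsilon$-net), and then use the operator-level decomposition $S_nT_n-ST=S_n(T_n-T)+(S_n-S)T$, applying that fact to $(S_n-S)$ on the compact closure of $T(B_1)$. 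The two arguments exploit exactly the same ingredients (compactness of $T$, uniform boundedness from the Banach--Steinhaus theorem, and for part (ii) the identities $\|A\|=\|A^*\|$ together with Schauder's theorem to transfer to the adjoints, which is identical in both proofs). What your version buys is a clean, reusable quantitative lemma and a direct estimate with no subsequence extraction; what the paper's version buys is brevity, since sequential compactness replaces the explicit net construction. There is no gap in your argument.
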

\begin{proof}
To prove (i), 
suppose the claimed convergence fails.  Then there must exist a constant $c>0$ and
a sequence $(x_n)_n$ in $\calX$ such that $\|x_n\|=1$ and 
$c\le \|(S_nT_n-ST)x_n\|$
for all $n$. However, since $T$ is compact we may pass to a subsequence
(denoted the same) such that  $Tx_n\to y$ for some $y\in \calX$. Then $T_nx_n\to y$ also, while
\begin{align*}
\|(S_nT_n-ST)x_n\| &\le 
\|S_n(T_nx_n-y)\|+\|(S_n-S)y\|+\|S(y-Tx_n)\| \,.
\end{align*}
But the hypotheses ensure this tends to $0$, since $\|S_n\|$ must be uniformly bounded. This contradiction proves (i).

For (ii) we note that the compactness of $T$ on $\calX$ implies the compactness of its adjoint $T^*$ on $\calX^*$, and that
\[
\| T_nS_n-TS\| = \|S_n^*T_n^*-S^*T^*\|. 
\]
Then applying part (i) to the adjoints yields part (ii).
\end{proof}

\subsubsection{Convergence of Fourier multipliers}

Due to Plancherel's identity it is evident that 
\begin{equation}\label{d:omegab}
    \|\calB_\eps\inv - \calB_0\inv\|_{\calL(H^1)} \le 
\omega_b(\eps):= \sup_{k\in\R} |b_\eps(k)\inv - b_0(k)\inv| \,.
\end{equation}
Then Proposition~\ref{p:Bepslim} is implied by the following.
\begin{lemma}\label{lem:bepslim}
   As $\eps\to0$ we have $\omega_b(\eps)\to0$.
\end{lemma}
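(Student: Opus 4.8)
The plan is to show that $\omega_b(\eps)=\sup_{k\in\R}|b_\eps(k)\inv-b_0(k)\inv|\to0$ by exploiting the bound \eqref{e:bdiffbd}, namely
\[
|b_\eps(k)\inv-b_0(k)\inv| \le \frac{|S_\alpha(\eps k)-\kappa_3|}{\kappa_3\, b_\eps(k)} \le \frac{|S_\alpha(\eps|k|)-\kappa_3|}{\kappa_3}\,,
\]
using $b_\eps(k)\ge1$. So it suffices to prove $\sup_{k}|S_\alpha(\eps|k|)-\kappa_3|\to0$, i.e.\ $\sup_{h>0}|S_\alpha(h)-\kappa_3|\to0$ is false in general (the quantity does not depend on $\eps$ at all once we take the sup over all $k$!), so I must instead split the supremum over $k$ into a small-frequency regime and a large-frequency regime, rather than bounding uniformly in $h$.

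First I would treat large $|k|$: here $b_0(k)=1+\kappa_3|k|^{\alpha-1}\to\infty$ and, from \eqref{e:Salphabd}, $S_\alpha(\eps k)\le(\eps|k|)^{1-\alpha}\zeta_\alpha$, so $b_\eps(k)\inv = (1+|k|^{\alpha-1}S_\alpha(\eps k))\inv$ and $b_0(k)\inv$ are both small; more precisely, using the exact bound \eqref{e:bdiffbd} together with $|S_\alpha(\eps k)-\kappa_3|\le S_\alpha(\eps k)+\kappa_3\le (\eps|k|)^{1-\alpha}\zeta_\alpha+\kappa_3$ and $b_\eps(k)\ge \kappa_3|k|^{\alpha-1}S_\alpha$-type lower bounds, I get that for $|k|\ge M$ the difference is $\le C/M^{\alpha-1}$ uniformly in $\eps$, which is made small by choosing $M$ large. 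Second, for $|k|\le M$: here $\eps k\to0$ uniformly on $[-M,M]$, and by the limit \eqref{e:Slim}, $S_\alpha(h)\to\kappa_3$ as $h\to0^+$, so $S_\alpha(\eps k)\to\kappa_3$ uniformly for $|k|\le M$ as $\eps\to0$; since $b_\eps(k)\to b_0(k)\ge1$ pointwise and the map $k\mapsto b_0(k)$ is bounded below by $1$, the quotient in \eqref{e:bdiffbd} tends to $0$ uniformly on $[-M,M]$. Combining: given $\delta>0$, pick $M$ so the large-$k$ tail is $<\delta/2$ uniformly in $\eps$, then pick $\eps_0$ so the $[-M,M]$ part is $<\delta/2$ for $\eps<\eps_0$; hence $\omega_b(\eps)<\delta$.

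The one technical point to get right is the uniform convergence $S_\alpha(h)\to\kappa_3$ as $h\to0^+$ on a bounded interval of $h$-values, and the uniform tail bound $S_\alpha(h)-\kappa_3$ for large $h$; both are in fact already supplied quantitatively by Lemma~3 of \cite{Wright.24}, which is invoked in the proof of Lemma~\ref{lem:bepsdiff} and gives $|S_\alpha(h)-\kappa_3|\le Ch$ for $\alpha\in(1,2]$ and $\le Ch^{3-\alpha}$ for $\alpha\in(2,3)$ — valid for all $h>0$. With that in hand the argument is cleaner: for $|k|\le M$ we get $|S_\alpha(\eps k)-\kappa_3|\le C(\eps M)^{\min(1,3-\alpha)}\to0$, while the large-$k$ contribution is controlled by \eqref{e:bdiffbd} together with the explicit decay of $b_0(k)\inv$ and $b_\eps(k)\inv$. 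So the main obstacle is purely organizational — splitting the frequency axis correctly and verifying the $b_\eps(k)$ denominator is bounded below — rather than any deep estimate, since the decisive inequality $|S_\alpha(h)-\kappa_3|\le Ch^{\min(1,3-\alpha)}$ is already available.
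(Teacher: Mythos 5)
Your overall strategy coincides with the paper's: both rest on the quotient bound \eqref{e:bdiffbd} and on the same two ingredients, namely the convergence $S_\alpha(h)\to\kappa_3$ as $h\to0^+$ (quantified by Lemma~3 of \cite{Wright.24}) and lower bounds on $b_\eps(k)$. The only structural difference is that you split the frequency axis at a fixed $|k|=M$, whereas the paper splits at fixed values of $h=\eps|k|$ (three zones $h\le h_\delta$, $h_\delta\le h\le h_0$, $h\ge h_0$); either decomposition can be made to work.

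However, your treatment of the large-$|k|$ regime has a real soft spot. The inequality $S_\alpha(\eps k)\le(\eps|k|)^{1-\alpha}\zeta_\alpha$ from \eqref{e:Salphabd} is an \emph{upper} bound on $S_\alpha$, hence an upper bound on $b_\eps(k)$; it cannot show that $b_\eps(k)\inv$ is small. What you need is a \emph{lower} bound on $b_\eps(k)=1+|k|^{\alpha-1}S_\alpha(\eps|k|)$ uniform in $\eps$, and this is not automatic because $\inf_{h>0}S_\alpha(h)=0$: indeed $S_\alpha(h)\sim\alpha_1\zeta_\alpha h^{1-\alpha}\to0$ as $h\to\infty$, so for fixed $\eps$ one has $b_\eps(k)\to 1+\alpha_1\zeta_\alpha\eps^{1-\alpha}$ as $|k|\to\infty$ rather than $b_\eps(k)\to\infty$. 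Consequently your claim that the tail $|k|\ge M$ contributes at most $C/M^{\alpha-1}$ \emph{uniformly in} $\eps$ is false in the sub-regime $\eps|k|>h_0$: there one only gets $b_\eps(k)\inv\le C\eps^{\alpha-1}$, so the correct tail estimate is $C(M^{1-\alpha}+\eps^{\alpha-1})$. This still suffices for the lemma, but it forces you to shrink $\eps$ as well as enlarge $M$, so the quantifier order in your closing paragraph needs adjusting. The uniform lower bounds $b_\eps(k)\ge1+\nu_1|k|^{\alpha-1}$ for $\eps|k|\le h_0$ and $b_\eps(k)\ge\nu_2\eps^{1-\alpha}$ for $\eps|k|>h_0$ are precisely the content of the lemma the paper proves immediately before this one (via $\sum_{m\ge1}4\sin^2(\tfrac12mh)/(h^2m^{\alpha+2})\le4\zeta_{\alpha+2}/h^2$); your plan needs that lemma, or an equivalent, spelled out rather than gestured at with ``bounded below.''
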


We prepare for the proof with some lower bounds on $b_\eps(k)$.
\begin{lemma}
Fix $h_0>3\sqrt{\zeta_{\alpha+2}/\zeta_\alpha}.$ 
Then there exist positive constants $\nu_1,\, \nu_2$ such that 
\begin{equation}
    b_\eps(k) \geq \begin{cases} 1+ \nu_1 |k|^{\alpha -1}, & |k| \leq h_0 /\eps\,, \\
    \nu_2 \eps^{1-\alpha}, & |k| > h_0 /\eps \,. \end{cases}
\end{equation}
\end{lemma}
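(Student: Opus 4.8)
The plan is to prove the two-sided lower bound on $b_\eps(k)$ by exploiting the structure $b_\eps(k) = 1 + |k|^{\alpha-1}S_\alpha(\eps|k|)$ from \eqref{e:bepslim}, so that everything reduces to showing $S_\alpha(h)$ is bounded below by a positive constant on a bounded interval $h\le h_0$ and bounded below by a multiple of $h^{1-\alpha}\cdot h^{\alpha-1}=$ (something workable) for $h>h_0$. More precisely, for $|k|\le h_0/\eps$ we have $h=\eps|k|\le h_0$, and since $S_\alpha(h)\to\kappa_3>0$ as $h\to0$ by \eqref{e:Slim} and $S_\alpha$ is continuous and strictly positive on $(0,\infty)$ (each term $1-\sinc^2(mh/2)\ge0$, and the first term is strictly positive for $h\in(0,2\pi)$, etc.), the infimum $\nu_1:=\inf_{0<h\le h_0}S_\alpha(h)$ is attained and positive; then $b_\eps(k)=1+|k|^{\alpha-1}S_\alpha(\eps|k|)\ge 1+\nu_1|k|^{\alpha-1}$.

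For the regime $|k|>h_0/\eps$, i.e. $h=\eps|k|>h_0$, I would get a lower bound on $S_\alpha(h)$ that decays no faster than $h^{1-\alpha}$ times a constant; concretely, one keeps only finitely many terms of the sum defining $S_\alpha$. Using $\sinc^2 z\le 1$ termwise is not enough (it gives $S_\alpha(h)\le h^{1-\alpha}\zeta_\alpha$, the wrong direction), so instead I would bound $S_\alpha(h)=h^{1-\alpha}\sum_{m\ge1}m^{-\alpha}(1-\sinc^2(mh/2))$ from below by dropping all but a carefully chosen range of $m$ for which $\sinc^2(mh/2)$ is bounded away from $1$. The choice of $h_0>3\sqrt{\zeta_{\alpha+2}/\zeta_\alpha}$ is exactly what makes a cleaner argument work: since $1-\sinc^2 z\ge \frac13 z^2\cdot(\text{something})$ fails globally, one instead uses the elementary bound $1-\sinc^2 z\ge \frac{z^2}{3}-\frac{z^4}{?}$ near $0$ or, more robustly, notes $\sum_m m^{-\alpha}(1-\sinc^2(mh/2)) = \zeta_\alpha - \sum_m m^{-\alpha}\sinc^2(mh/2)$ and $|\sum_m m^{-\alpha}\sinc^2(mh/2)|\le \sum_m m^{-\alpha}\cdot\frac{4}{m^2h^2}=\frac{4\zeta_{\alpha+2}}{h^2}$, which is $<\frac{4\zeta_{\alpha+2}}{h_0^2}<\frac{4\zeta_{\alpha+2}}{9\zeta_{\alpha+2}/\zeta_\alpha}=\frac{4}{9}\zeta_\alpha<\zeta_\alpha$ when $h>h_0$; hence $\sum_m m^{-\alpha}(1-\sinc^2(mh/2))\ge \zeta_\alpha(1-\tfrac49)=\tfrac59\zeta_\alpha$ for $h>h_0$. (One should double-check the constant: $h_0>3\sqrt{\zeta_{\alpha+2}/\zeta_\alpha}$ gives $h_0^2>9\zeta_{\alpha+2}/\zeta_\alpha$, so $4\zeta_{\alpha+2}/h_0^2 < \tfrac49\zeta_\alpha$, exactly as needed.) Therefore $S_\alpha(h)\ge \tfrac59\zeta_\alpha\, h^{1-\alpha}$ for $h>h_0$, so with $h=\eps|k|$ we get $b_\eps(k)\ge |k|^{\alpha-1}S_\alpha(\eps|k|)\ge \tfrac59\zeta_\alpha(\eps|k|)^{1-\alpha}|k|^{\alpha-1}=\tfrac59\zeta_\alpha\,\eps^{1-\alpha}$, and we take $\nu_2=\tfrac59\zeta_\alpha$ (or any smaller positive constant).

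The main obstacle is the second regime: the naive termwise bound $\sinc^2\le1$ goes the wrong way, so one has to notice that $|\sinc z|\le 2/|z|$ (hence $\sinc^2(mh/2)\le 4/(mh)^2$) is the right termwise bound to control the \emph{subtracted} sum $\sum_m m^{-\alpha}\sinc^2(mh/2)$, and then the hypothesis on $h_0$ is precisely calibrated to make this subtracted sum less than $\zeta_\alpha$. Once that observation is in place the rest is routine: continuity and positivity of $S_\alpha$ on the compact-after-removing-zero interval, together with its positive limit at $0$, handle the first regime, and assembling $b_\eps(k)=1+|k|^{\alpha-1}S_\alpha(\eps|k|)$ gives both stated inequalities. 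I would also remark that in the first case one may simply take $\nu_1=\min(\nu_1',$ some convenient constant$)$ and in the second drop the additive $1$, so the constants can be chosen uniformly in $\eps$ as required.
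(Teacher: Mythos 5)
Your proof is correct and follows essentially the same route as the paper: continuity and positivity of $S_\alpha$ on $[0,h_0]$ (with the limit $\kappa_3$ at $0$) for the first regime, and the termwise bound $\sinc^2(\tfrac12 mh)\le 4/(mh)^2$ giving $\sum_m m^{-\alpha}\sinc^2(\tfrac12 mh)\le 4\zeta_{\alpha+2}/h^2<\tfrac49\zeta_\alpha$ for $h>h_0$ in the second. The only cosmetic slips are the intermediate claim $|\sinc z|\le 2/|z|$ (the bound you actually use, $\sinc^2(\tfrac12 mh)\le 4/(mh)^2$, follows from $|\sin|\le1$) and a dropped factor of $\alpha_1$ in $\nu_2$, neither of which affects validity.
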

\begin{proof} 
1. Suppose $h := \eps |k| \leq h_0$. Recall from \eqref{e:Slim} that $S_\alpha(h) \to \kappa_3$ as $h \to 0^+$. 
Since $S_\alpha$ is continuous and positive, it
attains a positive minimum on $[0,h_0]$. That is, there exists $\nu_1 > 0$ such that 
\begin{equation}
    S_\alpha(h) \geq \nu_1 \quad \text{for $0 \leq h \leq h_0$}\,, 
\end{equation}
and hence 
\begin{equation}
    b_\eps(k) \geq 1 + \nu_1 |k|^{\alpha -1} \quad\text{for $\eps|k|\le h_0$.}
\end{equation}
2. Now suppose $h=\eps|k| > h_0$. Then 
\begin{align*}
    \sum_{m=1}^{\infty}\frac{1- \sinc^2(\frac 12 mh)}{m^{\alpha}} &= 
    \zeta_\alpha - \sum_{m=1}^{\infty}\frac{4\sin^2 ( \frac 12 mh)}{h^{2}m^{\alpha+2}}
    %\\ %&
    \geq\zeta_\alpha - \frac{4}{h^2} \zeta_{\alpha+2}\,.
    \end{align*}
But since $h > h_0$, we get  
%\begin{equation}
$    S_\alpha(h) \geq \alpha_1 \zeta_{\alpha} \left(1 - \frac{4}{9} \right).$
%\end{equation}
Then 
\begin{equation}
    b_\eps(k) = 1+|k|^{\alpha-1}S_\alpha(h) 
    \geq 1 + \nu_2\eps^{1-\alpha} \,, 
\end{equation}
where $\nu_2=\frac59\alpha_1\zeta_\alpha h_0^{\alpha-1}$. The lemma follows.
\end{proof}
\begin{proof}[Proof of Lemma~\ref{lem:bepslim}]
Since $b_\eps$ and $b_0$ are even, it suffices to confine attention to $k>0$.
Recall $b_\eps(0)=1=b_0(0)$, and recall from  \eqref{e:bdiffbd} that 
\begin{equation}
 \left| b_\eps^{-1}(k) - b_0^{-1}(k)\right| 
    \leq \frac{\left|S_\alpha(\eps k) - \kappa_3\right|}{\kappa_3 b_\eps(k)}\,.
\end{equation}
Let $\delta > 0$. We proceed in three steps.
1. Choose $h_\delta > 0$ such that
whenever $0 < h\leq h_\delta$, 
\begin{equation}
\left|S_\alpha(h) - \kappa_3 \right| < C_0^{-1}\delta\,.
\end{equation}
Assuming $0<\eps k\le h_\delta$,  since $b_\eps(k)\ge1$ we find from \eqref{e:bdiffbd} that 
\begin{equation}
 \left| b_\eps^{-1}(k) - b_0^{-1}(k)\right|
 \leq
    C_0 {\left|S_\alpha(\eps k) - \kappa_3\right|}
< \delta \,.
\end{equation}
2. Next, assume $h_\delta \leq \eps k \leq h_0$, where $h_0$ was introduced in the previous lemma. 
By \eqref{e:Salphabd} we get 
\[
\left|S_\alpha(\eps k) - \kappa_3 \right| < 
h_\delta^{1-\alpha}\zeta_\alpha+\kappa_3 =: C_1 .
\]
Then from the previous lemma, it follows
\begin{align}
    C_0 \frac{\left|S_\alpha (\eps k) - \kappa_3\right|}{b_\eps(k)}
    &\leq \frac{C_0C_1}{1+ \nu_1 k^{\alpha-1}}  \leq C_0C_1 h_\delta^{1-\alpha}\eps^{\alpha -1}\,.
\end{align}
3. Lastly, assume $h_0 \leq \eps k < \infty$. 
With the second bound from the previous lemma, we get
\begin{align}
    C_0 \frac{\left|S_\alpha (\eps k) - \kappa_3\right|}{b_\eps(k)}
    \leq  \frac{C_0C_1}{\nu_2} \eps^{\alpha-1}\,.
\end{align}

Using the inequalities above, we see there exists $\eps_0>0$ (depending on $\delta$) such that 
for all $\eps \in (0,\eps_0)$ and all $k \in (0,\infty)$,
\begin{equation}
    \left|b_\eps^{-1}(k) - b_0^{-1}(k)\right| < \delta.
\end{equation}
This finishes the proof of the lemma.
\end{proof}

\subsection{Existence proof}\label{ss:exist}
We are now in a position to prove the part of Theorem~\ref{t.main} 
concerning the existence and local uniqueness of solitary wave profiles, 
by invoking Lemma~\ref{lem:IFT} to obtain fixed points of \eqref{d:Geps} for small $\eps>0$.

\begin{proof}[Proof of Theorem \ref{t.main} (existence and local uniqueness)] 
Let $E=\Heven^1$ and suppose $W_0\in E$ is a non-degenerate solution of \eqref{e:W0eq}.
Then $u_0=W_0$ is a fixed point of $F=\calF$ in $E$. The operator $\calL_0=I-D\calF(W_0)$ on $E$ is
Fredholm due to Proposition~\ref{p:DF}(ii) and has trivial kernel in $E$, hence is invertible.
Let $C_0=\|\calL_0\inv\|_{\calL(E)}$ and choose positive constants $\theta$, $C_1$ and $C_2$
such that \eqref{c:012} holds, i.e., $C_0(C_1+C_2)<\theta<1$.  

Let $R>C_{H^1}\|W_0\|_{H^1}$ and let 
\[
B=\{f\in E: C_{H^1}\|f\|_{H^1}\le R\}.
\]
By applying Corollary~\ref{cor:res} and Propositions~\ref{p:Fanalytic}, \ref{p:DF} and \ref{p:DFG},
we can find positive constants $\delta$ and $\eps_0$ sufficiently small, 
such that whenever $0<\eps<\eps_0$,
then:
\begin{itemize} \renewcommand{\itemsep}{0pt}
\item[(i)] $\calF$ and $\calG_\eps$ are analytic on $B$,
\item[(ii)] the residual bound \eqref{c:DF1} holds, and 
\item[(iii)] whenever $\|u-u_0\|_E\le\delta$ we have 
$u\in B$ and estimates \eqref{c:DF2} and \eqref{c:DF3} hold.
\end{itemize}
Then with $G=\calG_\eps$, Lemma~\ref{lem:IFT} applies and we conclude that
for every $\eps\in(0,\eps_0)$, $\calG_\eps$ has a unique fixed point 
$W=W_\eps\in \Heven^1$ satisfying $\|W-W_0\|_{H^1}\le\delta$.
Moreover there is a constant $C$ independent of $\eps$ such that 
\begin{equation}\label{e:final_bound}
\|W_\eps-W_0\|_{H^1} \le C\|\calF(W_0)-\calG_\eps(W_0)\|_{H^1}  \le
   \begin{cases} 
     C\eps^{\alpha-1} & \alpha\in (1,2],\\
     C \eps^{3-\alpha} & \alpha\in(2,3).
   \end{cases}
\end{equation}
the last bound being due to Corollary~\ref{cor:res}.
\end{proof}

\subsection{The Calogero-Moser case}\label{ss:CM}
In the case $\alpha=2$ that corresponds to an infinite Calogero-Moser lattice, we recall from \cite[Theorem~1.1]{IP24} that 
traveling waves in the form $x_j(t) = j-\vp(j-ct)$ exist for any $c>c_\alpha=\pi$, where the function $\vp=\vp(z)$ takes values in $(-\frac12,\frac12)$ and is determined for all $z\in \R$
by the implicit equation
\begin{equation}
(c^2 - \pi^2)(z- \varphi) = \pi \tan{\pi \varphi}\,.
\label{e:explicit_eq}
\end{equation}
We seek to relate the velocity profile $v_c(z)=c\vp'(z)$ to the fixed point $W_\eps$ 
provided by Theorem~\ref{t.main}
with $W_0$ taken to be the ground state solution of \eqref{e:W0eq}
(known to be non-degenerate by \cite{Frank.Lenzmann.13}).  
Here, equation \eqref{e:W0eq} takes the form
\begin{equation}\label{e:W0_CM}
     W + \pi |D| W = \frac{1}{2} (2\pi W)^2 \,,
\end{equation}
since  $\kappa_2=4\pi^2$ and $\kappa_3=\pi$ when $\alpha=2$. Using that $f(z)=i/(z+i\pi)$ satisfies $f'=if^2$, one can check that
a solution of \eqref{e:W0_CM} is given by $\re f(\xx)/\pi.$
By the classical uniqueness result of Amick and Toland~\cite{AmickToland1991b}, this is the only solution of \eqref{e:W0_CM} in $\Heven^1$.
Therefore,
\[
W_0(\xx) = \frac{1}{\xx^2+\pi^2} \,.
\]

\begin{corollary}
If $\alpha =2$ and $\eps > 0$ is sufficiently small, then the fixed point $W_\eps$ from Theorem~\ref{t.main} with 
$c^2 = \pi^2 + \eps$ 
precisely satisfies $\vp'(z) = \eps W_\eps(\eps z)$, where $\vp$ satisfies \eqref{e:explicit_eq}.
\end{corollary}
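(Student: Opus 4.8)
The plan is to extract from the implicit formula \eqref{e:explicit_eq} an explicit function $W$, show that for all sufficiently small $\eps$ it is an even $H^1$ fixed point of $\calG_\eps$ lying within the local uniqueness radius $\delta$ of $W_0$, and then invoke Theorem~\ref{t.main}(i) to conclude $W=W_\eps$; unwinding the scaling then yields $\vp'(z)=\eps W_\eps(\eps z)$. First I would record the scaling dictionary. For $\alpha=2$ we have $c_\alpha=\pi$, $\mu=\alpha-1=1$ and $\nu=\alpha-2=0$, so with $c^2=\pi^2+\eps$ the ansatz \eqref{d:xj} reads $x_j(t)=j-U(\xi)$, $\xi=\eps(j-ct)$. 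Matching this to the exact traveling wave $x_j(t)=j-\vp(j-ct)$ of \cite[Thm.~1.1]{IP24} forces $U(\xi)=\vp(\xi/\eps)$, hence $W:=U'$ satisfies $W(\xi)=\eps^{-1}\vp'(\xi/\eps)$, i.e.\ $\vp'(z)=\eps W(\eps z)$, which is exactly the claimed identity once $W=W_\eps$. Differentiating \eqref{e:explicit_eq} gives the closed form
\[
W(\xi)=\bigl(\eps+\pi^2\sec^2(\pi\vp(\xi/\eps))\bigr)^{-1},
\]
from which I read off that $W$ is positive and bounded by $\pi^{-2}$, that $W$ is even (since \eqref{e:explicit_eq} is invariant under $(z,\vp)\mapsto(-z,-\vp)$ and determines $\vp$ uniquely, $\vp$ is odd and $\vp'$ even), and, using $\pi\tan\pi\vp(\xi/\eps)=\xi-\eps\vp(\xi/\eps)$, that $W(\xi)=O(\xi^{-2})$ with $W',W''$ decaying correspondingly as $|\xi|\to\infty$. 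In particular $W\in\Heven^\infty$ and $\|W\|_{H^1}$ is bounded uniformly for $\eps\in(0,1)$.

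Next I would verify that $W$ solves the fixed-point equation $W=\calG_\eps(W)$ of \eqref{d:Geps}. Because $\|W\|_\infty\le\pi^{-2}$, the spacings satisfy $x_{j+m}-x_j=m\bigl(1-\eps\calA_{m\eps}W(\xi\pm\tfrac12 m\eps)\bigr)\ge m(1-\eps/\pi^2)>0$ for $\eps<\pi^2$, so the configuration is admissible; and for $\eps$ small enough that $\eps C_{H^1}\|W\|_{H^1}<1$ the condition \eqref{c:eWlt1} holds. The chain of equivalences derived in Section~\ref{s:equations}, namely \eqref{e:sys1} $\Leftrightarrow$ \eqref{e:nevp0} $\Leftrightarrow$ \eqref{e:nevp} $\Leftrightarrow$ \eqref{e:evp3} $\Leftrightarrow$ \eqref{d:Geps}, then applies to the exact solution $x_j$, so $W$ is a fixed point of $\calG_\eps$ for all small $\eps>0$.

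It remains to show $\|W-W_0\|_{H^1}\to0$ as $\eps\to0$, so that for $\eps$ small the deviation falls below $\delta$. Writing $\psi_\eps(\xi):=\vp(\xi/\eps)$, the defining relation is $\xi-\eps\psi_\eps=\pi\tan\pi\psi_\eps$; comparing with the limiting relation $\xi=\pi\tan\pi\psi_0$, using $\tfrac{d}{d\vp}\tan\pi\vp\ge\pi$ and $|\psi_\eps|<\tfrac12$, one gets $\|\psi_\eps-\psi_0\|_\infty\le\eps/(2\pi^2)$ with $\psi_0(\xi)=\tfrac1\pi\arctan(\xi/\pi)$. Hence $W(\xi)=(\eps+\pi^2\sec^2\pi\psi_\eps)^{-1}\to(\pi^2\sec^2\pi\psi_0)^{-1}=(\pi^2+\xi^2)^{-1}=W_0(\xi)$ pointwise, and likewise $W'\to W_0'$ pointwise. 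Combined with the uniform-in-$\eps$ decay bounds $0<W(\xi)\le C(1+\xi^2)^{-1}$ and $|W'(\xi)|\le C(1+|\xi|)^{-3}$ read off from the closed form (as in the regularity step above), the dominated convergence theorem gives $W\to W_0$ and $W'\to W_0'$ in $L^2$, i.e.\ $\|W-W_0\|_{H^1}\to0$. Applying the local uniqueness clause of Theorem~\ref{t.main}(i) for small $\eps$ then identifies $W$ with $W_\eps$, and the scaling dictionary gives $\vp'(z)=\eps W_\eps(\eps z)$.

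I expect the genuinely load-bearing step to be the verification in the second paragraph: confirming that the explicit wave actually lies in the regime where the fixed-point reformulation of Section~\ref{s:equations} is valid, i.e.\ checking \eqref{c:eWlt1} and admissibility via the uniform bound on $\|W\|_{H^1}$. The $H^1$ convergence $W\to W_0$ is conceptually routine — a pointwise limit plus domination read off from the explicit formula — but it carries most of the elementary estimation; the remaining steps are bookkeeping of the scaling.
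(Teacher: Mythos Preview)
Your proof is correct and follows essentially the same approach as the paper: extract the scaled profile from the implicit relation, verify it satisfies the fixed-point equation \eqref{d:Geps}, and show it converges to $W_0$ in $H^1$ so that local uniqueness identifies it with $W_\eps$. Your closed form $(\eps+\pi^2\sec^2\pi\vp)^{-1}$ is identical (via $\pi\tan\pi\vp=\xi-\eps\vp$) to the paper's rational expression $((\xi-\eps\vp)^2+\pi^2+\eps)^{-1}$, and if anything you are slightly more careful than the paper in explicitly checking condition \eqref{c:eWlt1} needed for the equivalence chain in Section~\ref{s:equations}.
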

\begin{proof}
Let $c^2 = \pi^2 + \eps$. Define $\psi_\eps$ by $\vp'(z)=\eps\psi_\eps(\eps z)$  where $\varphi_\eps$ satisfies \eqref{e:explicit_eq} with $z = j - ct$. 
Since $\vp$ determines a solitary wave for \eqref{e:sys1} by \cite[Theorem~1.1]{IP24}, 
and by the discussion in Section~\ref{s:equations} above, 
$\psi_\eps$ must satisfy the fixed point equation~\eqref{d:Geps}. 
From Theorem~\ref{t.main}, $W_\eps$ is the unique fixed point of \eqref{d:Geps} satisfying $\|W-W_0\|_{H^1}\le\delta$.
Thus to show $\psi_\eps=W_\eps$ it remains to show $\|\psi_\eps-W_0\|_{H^1}\le \delta$ for small enough $\eps>0$.

Now, by differentiation of \eqref{e:explicit_eq}, one derives that 
\[
\psi_\eps(\xx) = \frac{1}{(\xx-\eps\vp)^2+\pi^2+\eps} \,.
\]
 Then it is straightforward to check that $\psi_\eps$ converges to $W_0$ in $H^{1}$ as $\eps \to 0$ due to the boundedness of $\psi_\eps'$ and $\psi_\eps''$. 
 By the local uniqueness in Theorem~\ref{t.main}, the proof is complete.
 \end{proof}

\section{Positivity and regularity}\label{s:positive}

In this section we establish the positivity and regularity properties 
of the velocity profiles that were
stated in Theorem~\ref{t.main}.

\subsection{Positivity} 
First we remark on reasons why any solution $W_0\in H^1$ of \eqref{e:W0eq} is positive.
As we have pointed out, the Green's function for $\calB_0=I+\kappa_3 |D|^\mu$ is positive.
This follows by scaling from \cite[Lemma~A.4]{Frank.Lenzmann.13}. Alternatively, it can be proved 
by invoking Kato's formula \cite{Kato.60} to show that for any $\lambda>0$ and $s\in(0,1)$,
\begin{equation}
(\lambda + |D|^{2s})\inv = \frac{\sin \pi s}\pi \int_0^\infty 
\frac{t^s}{\lambda^2+ 2\lambda t^s\cos(\pi s)+t^{2s}} (t I-\Delta)\inv\,dt \,,
\end{equation}
and using the positivity of the Green's function for $tI-\Delta$,
which in dimension one 
is $e^{-\sqrt t|x|}/2\sqrt{t}$. 
Curiously, we can get a third proof by taking the limit $\eps\to0$ in the next lemma,
which we will use to study \eqref{d:Geps}.

\begin{lemma}\label{lem:positive1} Let $f\in \Heven^1$.
If $f$ is positive (resp. unimodal) then $\calB_\eps\inv f$ is positive (resp. unimodal).
\end{lemma}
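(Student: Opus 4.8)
The plan is to realize $\calB_\eps\inv$ as an operator-norm convergent series whose terms manifestly preserve both positivity and even-unimodality, and then to pass these properties to the limit.

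First I would rewrite $\calB_\eps$ from \eqref{d:Beps} as
\[
\calB_\eps = (1+\sigma_\eps)I - \calK_\eps\,, \qquad
\sigma_\eps = \alpha_1\eps^{-\mu}\zeta_\alpha\,, \qquad
\calK_\eps = \alpha_1\eps^{-\mu}\sum_{m=1}^\infty m^{-\alpha}\calA_{m\eps}^2\,.
\]
The operator $\calK_\eps$ is a Fourier multiplier with symbol $\alpha_1\eps^{-\mu}\sum_{m\ge1}m^{-\alpha}\sinc^2(\tfrac12 m\eps k)$, which takes values in $[0,\sigma_\eps]$ with maximum $\sigma_\eps$ attained at $k=0$; hence $\|\calK_\eps\|_{\calL(H^s)}=\sigma_\eps$ for every $s\ge0$, so $\|(1+\sigma_\eps)\inv\calK_\eps\|_{\calL(H^1)}=\sigma_\eps/(1+\sigma_\eps)<1$ and the Neumann series
\[
\calB_\eps\inv = \frac1{1+\sigma_\eps}\sum_{n=0}^\infty\Bigl(\frac{\calK_\eps}{1+\sigma_\eps}\Bigr)^{n}
\]
converges in $\calL(H^1)$ (and in $\calL(L^2)$). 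Write $S_N$ for its $N$-th partial sum.

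Next I would check that $\calK_\eps$, hence each $S_N$, preserves the two relevant cones in $\Heven^1$. Each $\calA_{m\eps}$ is convolution against the nonnegative kernel $(m\eps)\inv\mathbf 1_{[-m\eps/2,\,m\eps/2]}$, so $\calA_{m\eps}$, $\calA_{m\eps}^2$, their compositions, and nonnegative linear combinations of all these are positivity-preserving; in particular so is $S_N$. For unimodality, recall from Section~\ref{s:prelim} that $\calA_\eta$ maps even unimodal functions to even unimodal functions; since a nonnegative linear combination of even unimodal functions is again even unimodal, and this structure is preserved under composition of the corresponding operators, each $S_N$ maps the cone of even unimodal functions in $\Heven^1$ into itself.

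Finally I would pass to the limit. Given $f\in\Heven^1$, the operator-norm convergence gives $S_N f\to\calB_\eps\inv f$ in $H^1$, hence uniformly on $\R$ by the embedding $H^1(\R)\hookrightarrow C_0(\R)$. If $f$ is positive, then already the $n=0$ term gives $\calB_\eps\inv f\ge(1+\sigma_\eps)\inv f>0$ pointwise, the remaining terms being nonnegative (more generally $\calB_\eps\inv f\ge0$ whenever $f\ge0$, a uniform limit of nonnegative continuous functions being nonnegative). If $f$ is even and unimodal, then each $S_Nf$ is even and nonincreasing on $(0,\infty)$, and these properties survive the uniform limit, so $\calB_\eps\inv f$ is even and unimodal. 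No serious obstacle is expected; the points needing care are the operator-norm convergence of the Neumann series, handled by the symbol bound above, and the (elementary but slightly fussy) verification that the cone of even unimodal functions is closed under nonnegative combinations, compositions with $\calA_\eta$, and uniform limits. One could equally well replace the Neumann series by $\calB_\eps\inv=\int_0^\infty e^{-(1+\sigma_\eps)t}e^{t\calK_\eps}\,dt$ with $e^{t\calK_\eps}=\sum_{n\ge0}t^n\calK_\eps^n/n!$, reaching the same conclusion.
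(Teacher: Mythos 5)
Your proof is correct and is essentially the paper's own argument: the decomposition $\calB_\eps=(1+\sigma_\eps)I-\calK_\eps$ with $\calK_\eps=\alpha_1\eps^{-\mu}\sum_m m^{-\alpha}\calA_{m\eps}^2$ is exactly the paper's splitting (there written with the normalized operator $\calJ_\eps=\sigma_\eps\inv\calK_\eps$), and the same Neumann series, the same cone-preservation of $\calA_{m\eps}$, and the same passage to the limit are used. Your extra care about strict positivity via the $n=0$ term and about uniform convergence through $H^1\hookrightarrow C_0(\R)$ is a welcome refinement of details the paper leaves implicit.
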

\begin{proof} The proof is essentially similar to one provided in \cite[Cor.~2.7]{HML16} 
for the corresponding operator in the case of finite-range interactions.
    From \eqref{d:beps} we may write 
        $b_\eps(k) = 1 + \alpha_1\zeta_\alpha \eps^{-\mu}(1-j_\eps(k))$, where
    \begin{equation}
        j_\eps(k) = \zeta_\alpha\inv \sum_{m\ge1} m^{-\alpha}\sinc^2(\tfrac12km\eps).
    \end{equation}
Then $j_\eps$ is even, takes values in $[0,1]$, and is the symbol of the Fourier multiplier
    \begin{equation}
        \calJ_\eps = 
         \zeta_\alpha\inv \sum_{m\ge1} m^{-\alpha}\calA_{m\eps}^2 \,.
    \end{equation}
The operator norm $\|\calJ_\eps\|_{\calL(H^1)}\le1$, hence by Neumann series expansion,
\begin{equation}
    \calB_\eps\inv = \frac{\eps^\mu}{\alpha_1\zeta_\alpha} 
    \sum_{n=0}^\infty \frac{\calJ_\eps^n}{(1+\eps^\mu/\alpha_1\zeta_\alpha)^{n+1}} \,,
\end{equation}
and the series converges in operator norm. Suppose $f\in\Heven^1$ and $f$ is positive
(resp. unimodal). Since the same is true for $\calA_{m\eps}f$, for all $f$, we infer
that $\calJ_\eps f$ is positive (resp. unimodal).  By induction, the same is true for
$\calJ_\eps^n f$, for all $n\ge1$. It follows that $\calB_\eps\inv f$ is positive (resp. unimodal) as well.
\end{proof}

\begin{lemma}\label{lem:positive2}
    Let $f\in\Heven^1$ with $C_{H^1}\|f\|_{H^1}<1.$
Then $\calG_\eps(f)$ is positive.
Moreover, if $f$ is unimodal, then $\calG_\eps(f)$ is unimodal.
\end{lemma}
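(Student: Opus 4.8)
The plan is to reduce both claims to the positivity/unimodality preservation of $\calB_\eps\inv$ recorded in Lemma~\ref{lem:positive1}, after undoing the decomposition of \eqref{e:evp3} into a quadratic part and a higher-order part. Set $Z_2(r):=\alpha(1-r)^{-\alpha-1}-\alpha-\alpha_1 r$, so that $Z_2(r)=\alpha_2 r^2+Z_3(r)$ with $Z_3$ as in \eqref{d:Z3}, and therefore $Z_2(r)=\sum_{k\ge2}\alpha_k r^k$ for $|r|<1$. Recombining \eqref{d:Qeps} and \eqref{d:Zeps} then gives
\[
\calQ_\eps(f)+\calZ_\eps(f)=\eps^{-2\mu}\sum_{m\ge1}\frac{1}{m^\alpha}\,\calA_{m\eps}\,Z_2\bigl(\eps^\mu\calA_{m\eps}f\bigr)\,.
\]
Since $C_{H^1}\|f\|_{H^1}<1$ while $\eps\in(0,1)$ and $\mu=\alpha-1>0$, we have $\|\eps^\mu\calA_{m\eps}f\|_{L^\infty}\le\eps^\mu C_{H^1}\|f\|_{H^1}<1$, so each composition is well defined and the series converges in $H^1$, exactly as in the proof of Lemma~\ref{lem:Zeps}.

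The key elementary fact is that $Z_2(r)\ge0$ for all $r<1$ and $Z_2$ is nondecreasing on $[0,1)$: indeed $Z_2(0)=0$, $Z_2'(r)=\alpha(\alpha+1)\bigl[(1-r)^{-\alpha-2}-1\bigr]$ (so $Z_2'(0)=0$ and $Z_2'(r)\ge0$ exactly for $r\in[0,1)$), while $Z_2''(r)=\alpha(\alpha+1)(\alpha+2)(1-r)^{-\alpha-3}>0$, so $Z_2$ is strictly convex on $(-\infty,1)$ with global minimum value $0$ at $0$. For the positivity claim, since $\eps^\mu\calA_{m\eps}f$ takes values in $(-1,1)$ we get $Z_2(\eps^\mu\calA_{m\eps}f)\ge0$ pointwise; because $\calA_{m\eps}$ preserves nonnegativity and the weights $\eps^{-2\mu}m^{-\alpha}$ are positive, $\calQ_\eps(f)+\calZ_\eps(f)$ is a nonnegative even element of $H^1$, not identically zero when $f\ne0$. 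By the Neumann-series representation of $\calB_\eps\inv$ used in the proof of Lemma~\ref{lem:positive1}, whose $\calJ_\eps$-term is convolution against a strictly positive kernel, $\calB_\eps\inv$ maps any nonnegative nonzero function to an everywhere-positive one; hence $\calG_\eps(f)=\calB_\eps\inv(\calQ_\eps(f)+\calZ_\eps(f))$ is everywhere positive.

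For the unimodality claim, suppose moreover that $f$ is unimodal (hence, as the term is used here, nonnegative). Then each $\calA_{m\eps}f$ is even, nonnegative, and nonincreasing on $(0,\infty)$ by the preliminary remarks on $\calA_\eta$. Consequently $Z_2(\eps^\mu\calA_{m\eps}f)$ is even and, on $(0,\infty)$, has derivative $Z_2'(\eps^\mu\calA_{m\eps}f)\cdot\eps^\mu(\calA_{m\eps}f)'\le0$, since $Z_2'\ge0$ on $[0,1)$ while $(\calA_{m\eps}f)'\le0$ there; thus $Z_2(\eps^\mu\calA_{m\eps}f)$ is unimodal. Applying $\calA_{m\eps}$ again preserves unimodality, the partial sums of the series are unimodal (a positively weighted sum of even functions nonincreasing on $(0,\infty)$ has the same properties), and the $H^1$ limit of unimodal functions is unimodal --- the limit is even and, being a uniform limit by the one-dimensional Sobolev embedding, nonincreasing on $(0,\infty)$. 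So $\calQ_\eps(f)+\calZ_\eps(f)$ is unimodal, and Lemma~\ref{lem:positive1} gives that $\calG_\eps(f)$ is unimodal.

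The only nonroutine point is the recombination in the first paragraph: neither $\calZ_\eps$ alone --- built from $Z_3(r)=\sum_{k\ge3}\alpha_k r^k$, which changes sign near $r=0$ --- nor the individual terms of \eqref{e:evp3} have a fixed sign on $(-1,1)$, so the splitting introduced for the perturbation analysis must be reversed. Once $\calQ_\eps+\calZ_\eps$ is written through the single nonlinearity $Z_2(r)=\alpha(1-r)^{-\alpha-1}-\alpha-\alpha_1 r$, nonnegativity and monotonicity of $Z_2$ on $[0,1)$ follow in one line from convexity, and the remainder is a direct appeal to the mapping properties of $\calA_{m\eps}$ and $\calB_\eps\inv$ established earlier.
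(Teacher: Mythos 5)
Your proof is correct and follows essentially the same route as the paper's: recombine $\calQ_\eps+\calZ_\eps$ into the single series $\eps^{-2\mu}\sum_m m^{-\alpha}\calA_{m\eps}Z_2(\eps^\mu\calA_{m\eps}f)$ with $Z_2(r)=\alpha(1-r)^{-\alpha-1}-\alpha-\alpha_1 r$, use strict convexity with $Z_2(0)=Z_2'(0)=0$ to get nonnegativity, and then invoke the positivity/unimodality preservation of $\calA_{m\eps}$ and $\calB_\eps\inv$ from Lemma~\ref{lem:positive1}. The only difference is that you spell out the unimodality step (monotonicity of $Z_2$ on $[0,1)$ composed with a unimodal $\calA_{m\eps}f$), which the paper dismisses with ``a similar argument applies.''
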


\begin{proof} From the definitions \eqref{d:Qeps}--\eqref{d:Zeps}, we find
\begin{equation}
\eps^{2\mu}(\calQ_\eps(f)+\calZ_\eps(f)) =  
\sum_{m=1}^\infty \frac1{m^\alpha} \calA_{m\eps} Z_2(\eps^\mu \calA_{m\eps} f)\,,
\end{equation}
where $Z_2(r) = \alpha(1-r)^{-\alpha-1} -\alpha-\alpha_1 r$. Because $Z_2$ is strictly convex
with $Z_2(0)=Z_2'(0)=0$ we have $Z_2(r)>0$ for $0<|r|<1$. 
Because $\calA_{m\eps}$ preserves positivity, 
by Lemma~\ref{lem:positive1} it follows $\calG_\eps(f)$ is positive.
A similar argument applies to the unimodality statement.
\end{proof}

\begin{proof}[Proof of Theorem \ref{t.main}(positivity)]
The positivity of the fixed points $W_\eps$ of $\calG_\eps$ proved to exist in Section~\ref{ss:exist}
follows immediately from Lemma~\ref{lem:positive2}.
\end{proof}

\noindent
{\it Remarks on unimodality.}
Regarding the question of whether $W_\eps$ is unimodal if $W_0$ is, 
we can only reiterate what was said on this subject by 
Herrmann and Mikikits-Leitner~\cite[p.~2065]{HML16}. 
Unimodality would follow, if, starting from $W_0$, 
one could show that $W_\eps$ arose as a fixed-point limit
of a suitable variant of the (unstable) iteration scheme
\[
W \mapsto \calG_\eps(W) = \calB_\eps\inv(\calQ_\eps(W)+\calZ_\eps(W)),
\]
Perhaps for this one could use Petviashvili iteration \cite{petviashvili1976equation,PelinovskyEA.2016}, say, or compactness arguments
similar to those Herrmann used in \cite{Herrmann.10} for nearest-neighbor forces.
Also see \cite{HerrmannMatthies.20}.
The analysis involved is outside the scope of the present paper, however. 

\subsection{Regularity of velocity}\label{ss:regularityW}
We will prove that the fixed points $W_\eps$ are in $H^\infty$
by a bootstrap argument based on equation \eqref{e:nevp}. 
We provide details since the terms in the infinite series depend on $m$ (though weakly).

\begin{proof}[Proof of Theorem~\ref{t.main}(regularity)]
Throughout the proof we keep $\eps\in(0,\eps_0)$ fixed and write 
$W=W_\eps$ and $a_m = \eps^\mu \calA_{m\eps}W$.  By the choice of $\eps_0$
in the existence proof we have that $C_{H^1}\|a_m\|_{H^1}\le \eps^\mu R\le \rho$
where $\rho<1$.  
 We note that for every $m\ge 1$ and every $k\ge1$, 
we have $Z^{(k)}\circ a_m-Z^{(k)}(0)\in H^1$  with 
\begin{equation}\label{e:Zkbd}
\| Z^{(k)}\circ a_m-Z^{(k)}(0)\|_{H^1} \le Z^{(k)}(\rho) - Z^{(k)}(0) \,,
\end{equation}
due to the fact that the Maclaurin series for $Z(r)$
has positive coefficients and unit radius of convergence. 

We will prove by induction that for every integer $n\ge0$, $W\in H^{n+1}$ and 
\begin{equation} \label{e:induct}
\eps^\mu c^2 W^{(n)} = 
\sum_{m=1}^\infty m^{-\alpha}
\calA_{m\eps}(Z_1\circ a_m)^{(n)} \,,
\end{equation}
where $Z_1(r)=Z(r)-\alpha$, with the series converging in $H^1$.
This holds for $n=0$, since $W\in H^1$
and  \eqref{e:nevp} holds in $H^1$.
%\hookrightarrow C^n$ 

Now fix $n\in \N$ and suppose $W\in H^{n+1}$ with \eqref{e:induct} holding in $H^1$.
Then $W\in C^n$,  $(Z_1\circ a_m)^{(n)}=(Z\circ a_m)^{(n)}$,
and by the Fa\`a di Bruno formula, 
\begin{equation}\label{e:Faa}
(Z\circ a_m)^{(n)} = \sum_{\boldsymbol{k}\in \Lambda_n}
\binom{n}{\boldsymbol{k}}
(Z^{(|\boldsymbol{k}|)}\circ a_m)\cdot 
\prod_{j=1}^n \left(\frac{a_m^{(j)}}{j!}\right)^{k_j}  \,,
\end{equation}
where 
\[
\Lambda_n = \left\{{\boldsymbol{k}=(k_1,\ldots,k_n)}\in \N^n: \sum_{j=1}^n j k_j=n\right\},
\quad 
\binom{n}{\boldsymbol{k}} = \frac{n!}{k_1!\cdots k_n!}, \quad 
\]
and $|\boldsymbol{k}|= k_1+\ldots+k_n$.
From \eqref{e:Faa} and \eqref{e:Zkbd},
it follows easily that $(Z\circ a_m)^{(n)}$ is 
bounded in $H^1$ uniformly in $m$,
by writing 
\[
Z^{(|\boldsymbol{k}|)}\circ a_m 
= Z^{(|\boldsymbol{k}|)}\circ a_m  - Z^{(|\boldsymbol{k}|)}(0)
+ Z^{(|\boldsymbol{k}|)}(0)\,,
\]
and using the Banach algebra property of $H^1$
together with the fact that 
$\|a_m^{(j)}|_{H^1}\le \|\eps^\mu W^{(j)}\|_{H^1}$ for all $m$.
%(Note $ Z^{(|\boldsymbol{k}|)}(0) $ is a constant independent of $m$.)

The map $\calA_{m\eps}$ is bounded from $H^1$ into $H^2$ with bound
independent of $m$. (The bound depends on $\eps$ but it does not matter here.)
We infer therefore that the series \eqref{e:induct} converges in $H^2$.
Hence $W\in H^{n+2}$ and \eqref{e:induct} holds in $H^1$ with $n$ replaced by $n+1$.
This completes the induction step, and finishes the proof that $W\in H^\infty$.
\end{proof}

\subsection{Regularity in wave speed}
\label{ss:regularity_c}

In this subsection we prove the part of Theorem~\ref{t.main} 
stating that the unscaled velocity profile is analytic as a function of wave speed.
Similar to what was done in \cite{Friesecke.Pego.99}, we look at a fixed scaling, 
and apply the analytic implicit function theorem in complexified Banach spaces,
as provided by Berger~\cite[Section 3.3B]{Berger}.

Let $W_\eps$ be the wave profiles provided by the existence proof in Subsection~\ref{ss:exist} for $0<\eps<\eps_0$.
Fixing some such $\eps$, define
\begin{equation}\label{d:Wepsbeta}
        W_{\eps,\beta}(\xx) := \eta^{\mu}W_{\eta\eps} (\eta \xx)\, ,\quad \beta = \eta^\mu\,,
\end{equation}
whenever $0<\eta\eps<\eps_0$.
This function is related to the unscaled velocity profiles described in \eqref{d:vc} by
\begin{equation}\label{e:vc_beta}
    v_c(z) = c\eps^\mu W_{\eps,\beta}(\eps z) \quad\text{with}\quad  c^2 = c_\alpha^2 + \beta \eps^\mu\,.
\end{equation}
Thus, to study the regularity of $v_c$ as a function of $c$, it suffices to fix $\eps$ and study
$W_{\eps,\beta}$ as a function of $\beta$ in an interval around $\beta=1$.
Define 
\begin{equation}\label{e:tw_beta}
        \calB_{\eps,\beta} = \beta I  + \alpha_1 \sum_{m \geq 1} \frac{\eps^{-\mu}}{m^{\alpha}} (I - \calA_{m\eps}^2)\,.
\end{equation}

\begin{proposition}\label{prop:betaAnalyt}
        (1) For $0<\eta\eps <\eps_0$, $W_{\eps,\beta}$ satisfies the traveling wave equation
\begin{equation}\label{e:beta_evp3}
  \calB_{\eps,\beta} V = \calQ_\eps (V) + \calZ_\eps(V)\,.  
\end{equation}
(2) Moreover, there exists an interval $(\beta_-,\beta_+)$, which contains $1$ and depends upon $\eps$, 
on which the map  $\beta \mapsto W_{\eps,\beta} \in \Heven^{1}$ is analytic.  
\end{proposition}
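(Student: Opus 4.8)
For part (1) I would view \eqref{e:beta_evp3} as a direct expression of the scaling invariance of \eqref{e:sys1} recorded in the Introduction. By Theorem~\ref{t.main}, $W_{\eta\eps}$ is the velocity profile --- scaled with parameter $\eta\eps$ --- of an honest solitary wave of \eqref{e:sys1} of speed $c$ with $c^2=c_\alpha^2+(\eta\eps)^\mu=c_\alpha^2+\beta\eps^\mu$, and the function $W_{\eps,\beta}$ of \eqref{d:Wepsbeta} is the profile of the \emph{same} wave re-expressed with scaling parameter $\eps$. Rerunning the computation of Section~\ref{s:equations} with the supersonic speed written as $c^2=c_\alpha^2+\beta\eps^\mu$ rather than $c^2=c_\alpha^2+\eps^\mu$ changes only the grouping of linear terms: the left side of \eqref{e:evp3} becomes $\calB_{\eps,\beta}W$, with $\calB_{\eps,\beta}$ as in \eqref{e:tw_beta}, while $\calQ_\eps$ and $\calZ_\eps$ are untouched; this gives \eqref{e:beta_evp3}. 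Equivalently one substitutes $W_{\eps,\beta}(\xi)=\eta^\mu W_{\eta\eps}(\eta\xi)$ ($\beta=\eta^\mu$) directly into \eqref{e:beta_evp3} and uses the commutation relation $\calA_{m\eps}\bigl(g(\eta\,\cdot)\bigr)=(\calA_{\eta m\eps}g)(\eta\,\cdot)$, the homogeneities $(fg)(\eta\,\cdot)=f(\eta\,\cdot)\,g(\eta\,\cdot)$ and $Z_3\bigl(g(\eta\,\cdot)\bigr)=(Z_3\circ g)(\eta\,\cdot)$, and the bookkeeping $\eps^{-\mu}=\beta(\eta\eps)^{-\mu}$, $\eps^{-2\mu}=\beta^2(\eta\eps)^{-2\mu}$; both sides then acquire the common factor $\beta^2$ (and the dilation by $\eta$), and the identity collapses to the profile equation \eqref{e:evp3} for $W_{\eta\eps}$, i.e.\ with $\eps$ replaced by $\eta\eps$.

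For part (2) the plan is to fix $\eps\in(0,\eps_0)$ and apply the analytic implicit function theorem in complexified Banach spaces \cite[Section~3.3B]{Berger} to
\[
\Psi(\beta,V):=V-\calB_{\eps,\beta}\inv\bigl(\calQ_\eps(V)+\calZ_\eps(V)\bigr)
\]
near $(\beta,V)=(1,W_\eps)$. Since $\calB_{\eps,\beta}=\calB_\eps+(\beta-1)I$ and $\|\calB_\eps\inv\|\le1$, the operator power series $\calB_{\eps,\beta}\inv=\sum_{n\ge0}(1-\beta)^n(\calB_\eps\inv)^{n+1}$ converges for $|\beta-1|<1$, exhibiting $\beta\mapsto\calB_{\eps,\beta}\inv$ as $\calL(\Heven^1)$-valued analytic; together with $\calQ_\eps$ being a continuous quadratic map and $\calZ_\eps$ being analytic on the relevant ball by Lemma~\ref{lem:Zeps} ($\eps$ fixed and small, so $\eps^\mu R<1$), this makes $\Psi$ jointly analytic near $(1,W_\eps)$. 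One has $\Psi(1,W_\eps)=W_\eps-\calG_\eps(W_\eps)=0$ by Theorem~\ref{t.main}(i), and $D_V\Psi(1,W_\eps)=I-D\calG_\eps(W_\eps)$. The latter is invertible: in the existence proof $\|D\calG_\eps(W)-D\calF(W_0)\|\le C_1+C_2$ holds for every $W$ in the $\delta$-ball about $W_0$, with $C_0(C_1+C_2)<1$ and $C_0=\|\calL_0\inv\|$, so $I-D\calG_\eps(W)=\calL_0-\bigl(D\calG_\eps(W)-D\calF(W_0)\bigr)$ is boundedly invertible by a Neumann series, uniformly, with $\|(I-D\calG_\eps(W))\inv\|\le C_0/\bigl(1-C_0(C_1+C_2)\bigr)$. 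The analytic implicit function theorem then yields a connected interval $(\beta_-,\beta_+)\ni1$ (depending on $\eps$) and a real-analytic map $\beta\mapsto\widetilde V_\beta\in\Heven^1$ with $\widetilde V_1=W_\eps$ and $\Psi(\beta,\widetilde V_\beta)=0$, whose graph exhausts the solution set of $\Psi=0$ in some neighborhood $\calM$ of $(1,W_\eps)$ in $\R\times\Heven^1$; realness of $\widetilde V_\beta$ for real $\beta$ follows from this local uniqueness and the conjugation symmetry $\overline{\Psi(\beta,V)}=\Psi(\beta,\bar V)$, valid because all the operators in $\Psi$ have real symbols or coefficients.

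To identify this analytic branch with $W_{\eps,\beta}$, note that by part (1) the function $W_{\eps,\beta}$ also solves $\Psi(\beta,\cdot)=0$ with $W_{\eps,1}=W_\eps$; hence it is enough to verify $\|W_{\eps,\beta}-W_\eps\|_{H^1}\to0$ as $\beta\to1$, which places $(\beta,W_{\eps,\beta})$ into $\calM$ and forces $W_{\eps,\beta}=\widetilde V_\beta$ for $\beta$ near $1$ (after shrinking $(\beta_-,\beta_+)$). Writing $\beta=\eta^\mu$ and decomposing
\[
W_{\eps,\beta}(\xi)-W_\eps(\xi)=\eta^\mu\bigl(W_{\eta\eps}(\eta\xi)-W_\eps(\eta\xi)\bigr)+\bigl(\eta^\mu W_\eps(\eta\xi)-W_\eps(\xi)\bigr),
\]
the second term tends to $0$ in $H^1$ by strong continuity of the dilation group on $H^1$ applied to the fixed function $W_\eps$, and --- since $g\mapsto g(\eta\,\cdot)$ has $\calL(H^1)$-norm bounded near $\eta=1$ --- the first term is at most a constant times $\|W_{\eta\eps}-W_\eps\|_{H^1}$. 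So everything comes down to continuity of $\eps'\mapsto W_{\eps'}$ at $\eps'=\eps$, which I expect to be the main obstacle.

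To prove that continuity, for $\eps',\eps''$ near a common value in $(0,\eps_0)$ I would subtract the fixed-point relations $W_{\eps'}=\calG_{\eps'}(W_{\eps'})$ and $W_{\eps''}=\calG_{\eps''}(W_{\eps''})$ (both unique in the $\delta$-ball about $W_0$ by Theorem~\ref{t.main}(i)) to obtain $(I-M_{\eps'})(W_{\eps'}-W_{\eps''})=\calG_{\eps'}(W_{\eps''})-\calG_{\eps''}(W_{\eps''})$, where $M_{\eps'}=\int_0^1 D\calG_{\eps'}\bigl((1-t)W_{\eps''}+tW_{\eps'}\bigr)\,dt$. Since the integration segment lies in the $\delta$-ball about $W_0$, the uniform Neumann-series bound above gives $\|(I-M_{\eps'})\inv\|\le C_0/\bigl(1-C_0(C_1+C_2)\bigr)$, whence $\|W_{\eps'}-W_{\eps''}\|_{H^1}\le C\,\|\calG_{\eps'}(W_{\eps''})-\calG_{\eps''}(W_{\eps''})\|_{H^1}$. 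It then remains to check that $\eps\mapsto\calG_\eps(g)=\calB_\eps\inv\bigl(\calQ_\eps(g)+\calZ_\eps(g)\bigr)$ is $H^1$-continuous for fixed $g\in H^1$, which is elementary by dominated convergence: $\calB_\eps\inv\to\calB_{\eps''}\inv$ \emph{strongly} on $H^1$ (the symbols $b_\eps(k)$ are continuous in $\eps$ for each $k$, with $b_\eps(k)\ge1$), while $\calQ_\eps(g)\to\calQ_{\eps''}(g)$ and $\calZ_\eps(g)\to\calZ_{\eps''}(g)$ in $H^1$ (each $\calA_{m\eps}g$ is $H^1$-continuous in $\eps$ by \eqref{e:AetaFourier} with $|\sinc|\le1$, and the defining series converge uniformly in $\eps$ near $\eps''$ by the bounds already established), and a uniformly bounded, strongly convergent family of operators carries a convergent sequence of inputs to the expected limit. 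The only place I anticipate needing genuine care is making these $m$-sum convergences uniform in $\eps$; everything else is scaling bookkeeping together with invertibility facts already in hand from the existence proof.
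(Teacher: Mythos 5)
Your proposal is correct and follows essentially the same route as the paper: part (1) by the scaling identities for $\calA_{m\eps}$, powers, and $Z_3$ under dilation (your commutation relation $\calA_{m\eps}(g(\eta\,\cdot))=(\calA_{\eta m\eps}g)(\eta\,\cdot)$ is exactly the content of the paper's scaling lemma), and part (2) by the analytic implicit function theorem applied to $V-\calB_{\eps,\beta}\inv(\calQ_\eps(V)+\calZ_\eps(V))$ at $(1,W_\eps)$, with $\beta\mapsto\calB_{\eps,\beta}\inv$ analytic via a Neumann series and $I-D\calG_\eps(W_\eps)$ invertible by the uniform bounds from the existence proof. The one place you go beyond the paper is the final identification $\tilde V_\beta=W_{\eps,\beta}$: the paper invokes ``local uniqueness'' without verifying that $(\beta,W_{\eps,\beta})$ actually lies in the uniqueness neighborhood, whereas you supply the needed continuity $\|W_{\eps,\beta}-W_\eps\|_{H^1}\to0$ as $\beta\to1$ via strong continuity of dilations together with continuity of $\eps'\mapsto W_{\eps'}$ extracted from the uniformly invertible linearization $(I-M_{\eps'})$ --- a legitimate and worthwhile completion of the argument.
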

\begin{proof}[Proof of (1)]
Using the scaling formulas in the following lemma, we get that $W_{\eps,\beta }$ 
solves the traveling wave equation \eqref{e:beta_evp3} after setting $\hat\eps=\eta\eps$ and multiplying
\begin{equation}
  (\calB_{\hat{\eps}} W_{\hat{\eps}})(\eta \xx) = (\calQ_{\hat{\eps}}(W_{\hat{\eps}}) + \calZ_{\hat{\eps}} (W_{\hat{\eps}}))(\eta \xx)
\end{equation}
by $\eta^{2\mu}$.  
\end{proof}
\begin{lemma}
We have
\begin{align}
  \eta^{\mu}(\calA_{m\hat{\eps}}W_{\hat\eps})(\eta \xx) &= (\calA_{m\eps} W_{\eps,\beta})(\xx)\,,\\
  \eta^{k\mu} \calA_{m\hat{\eps}}[ (\calA_{m\hat{\eps}} W_{\hat{\eps}})]^{k}(\eta \xx) &= \calA_{m\eps} [ (\calA_{m\eps} W_{\eps,\beta})]^{k}(\xx)\,,\\
  \calA_{m\hat{\eps}} Z_3 (\hat{\eps}^{\mu}(\calA_{m\hat{\eps}}W_{\hat{\eps}}))(\eta \xx) &= \calA_{m\eps} Z_3 (\eps^{\mu}(\calA_{m\eps} W_{\eps,\beta}))(\xx)\,.
\end{align}
\end{lemma}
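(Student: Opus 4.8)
The plan is to prove the three scaling identities by tracking how the averaging operator $\calA_\eta$ transforms under dilation. The key observation is that if $g(\xx) = f(\eta\xx)$ is the $\eta$-dilate of $f$, then $\calA_{\eta'}g$ at $\xx$ involves averaging $f$ over an interval of length $\eta\eta'$ centered at $\eta\xx$; precisely, a change of variables $z \mapsto \eta z$ in the definition of $\calA$ gives $(\calA_{\eta'}g)(\xx) = (\calA_{\eta\eta'}f)(\eta\xx)$. I would state and prove this single dilation lemma first, then derive all three formulas from it.

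First I would establish the base identity. Write $D_\eta$ for the dilation $(D_\eta f)(\xx) = f(\eta\xx)$. From the definition $\calA_{\eta'}h(\xx) = \frac{1}{\eta'}\int_{-\eta'/2}^{\eta'/2} h(\xx+z)\,dz$, substituting $h = D_\eta f$ and then $z = w/\eta$ (so $dz = dw/\eta$, and the limits $\pm\eta'/2$ become $\pm\eta\eta'/2$) yields
\[
\calA_{\eta'}(D_\eta f)(\xx) = \frac{1}{\eta\eta'}\int_{-\eta\eta'/2}^{\eta\eta'/2} f(\eta\xx + w)\,dw = (\calA_{\eta\eta'}f)(\eta\xx) = D_\eta(\calA_{\eta\eta'}f)(\xx).
\]
In other words $\calA_{\eta'}\circ D_\eta = D_\eta \circ \calA_{\eta\eta'}$ as operators. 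Alternatively one could verify this on the Fourier side using \eqref{e:AetaFourier} and the fact that the Fourier transform conjugates $D_\eta$ to $D_{1/\eta}$ (up to a scalar), since $\sinc(\tfrac12\eta'\cdot(k/\eta)) = \sinc(\tfrac12\eta\eta' k/\eta)$ — wait, more carefully, $\widehat{D_\eta f}(k) = \eta^{-1}\hat f(k/\eta)$, and the multiplier $\sinc(\tfrac12\eta' k)$ composed appropriately reproduces $\sinc(\tfrac12\eta\eta'\cdot)$ applied to $f$; either route works.

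Now apply this with $\eta' = m\eps$, so $\eta\eta' = m(\eta\eps) = m\hat\eps$, and recall $W_{\eps,\beta} = \eta^\mu D_\eta W_{\hat\eps}$ from \eqref{d:Wepsbeta}. For the first identity: $(\calA_{m\eps}W_{\eps,\beta})(\xx) = \eta^\mu \calA_{m\eps}(D_\eta W_{\hat\eps})(\xx) = \eta^\mu D_\eta(\calA_{m\hat\eps}W_{\hat\eps})(\xx) = \eta^\mu (\calA_{m\hat\eps}W_{\hat\eps})(\eta\xx)$, which is exactly the claimed formula. For the second identity, apply the dilation lemma once to move $\calA_{m\eps}$ past $D_\eta$, then use that $D_\eta$ commutes with taking pointwise $k$-th powers: $\calA_{m\eps}[(\calA_{m\eps}W_{\eps,\beta})]^k = \calA_{m\eps}[\eta^{k\mu}D_\eta((\calA_{m\hat\eps}W_{\hat\eps})^k)] = \eta^{k\mu}D_\eta(\calA_{m\hat\eps}[(\calA_{m\hat\eps}W_{\hat\eps})^k])$, giving the stated relation after multiplying by $\eta^{-k\mu}$... here one uses the first identity to get $(\calA_{m\eps}W_{\eps,\beta})^k = \eta^{k\mu}D_\eta((\calA_{m\hat\eps}W_{\hat\eps})^k)$ and then applies the dilation lemma to the outer $\calA_{m\eps}$. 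For the third, the same mechanism applies with the function $r \mapsto Z_3(r)$ replacing the monomial $r \mapsto r^k$: since $\eps^\mu(\calA_{m\eps}W_{\eps,\beta})(\xx) = \eps^\mu\eta^\mu(\calA_{m\hat\eps}W_{\hat\eps})(\eta\xx) = \hat\eps^\mu(\calA_{m\hat\eps}W_{\hat\eps})(\eta\xx)$ (using $\hat\eps = \eta\eps$, so $\hat\eps^\mu = \eps^\mu\eta^\mu$), applying $Z_3$ pointwise and then $\calA_{m\eps}$ and using the dilation lemma once more gives $\calA_{m\eps}Z_3(\eps^\mu\calA_{m\eps}W_{\eps,\beta})(\xx) = D_\eta(\calA_{m\hat\eps}Z_3(\hat\eps^\mu\calA_{m\hat\eps}W_{\hat\eps}))(\xx) = \calA_{m\hat\eps}Z_3(\hat\eps^\mu\calA_{m\hat\eps}W_{\hat\eps})(\eta\xx)$. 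There is no serious obstacle here — the only thing to watch is bookkeeping of the powers of $\eta$ and keeping the convergence of the $Z_3$ series (valid since $\hat\eps^\mu C_{H^1}\|\calA_{m\hat\eps}W_{\hat\eps}\|_{H^1} \le \hat\eps^\mu C_{H^1}\|W_{\hat\eps}\|_{H^1} < 1$ by the nonexpansivity of $\calA_{m\hat\eps}$ and the standing smallness hypothesis $0 < \eta\eps = \hat\eps < \eps_0$). The main "obstacle," if any, is simply being careful that the dilation lemma is applied in the correct direction ($\calA_{\eta'}D_\eta = D_\eta\calA_{\eta\eta'}$, not $D_\eta\calA_{\eta'/\eta}$) and that the interval rescaling in the base computation is done correctly.
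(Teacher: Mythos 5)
Your proof is correct and follows essentially the same route as the paper's: both rest on the change of variables $z\mapsto \eta z$ in the averaging integral (which you package as the operator identity $\calA_{\eta'}\circ D_\eta = D_\eta\circ\calA_{\eta\eta'}$), then propagate it through powers and the $Z_3$ series. The only cosmetic difference is that the paper verifies each of the three identities by direct substitution while you factor out the dilation lemma once and reuse it; the bookkeeping of $\eta^{k\mu}$ and $\hat\eps^\mu=\eta^\mu\eps^\mu$ is handled correctly in both.
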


\begin{proof}
Through the change of variables $z=\eta y$, we get
\begin{align*}
         \eta^\mu (\calA_{m\hat{\eps}}W_{\hat{\eps}} )(\eta \xx) 
    &= \frac{1}{m\hat{\eps}} \int_{-m\hat{\eps}/2}^{m\hat{\eps}/2} \eta^\mu W_{\hat{\eps}}(\eta \xx + z)\, dz \\
    &= \frac{1}{m\eps} \int_{-m\eps/2}^{m\eps/2} W_{\eps,\beta}(\xx+y)\, dy
   % \\ &
    = (\calA_{m\eps}W_{\eps,\beta})(\xx)\,.
\end{align*}
Similarly, for all $k\ge1$,
\begin{align*}
      \eta^{k\mu} \calA_{m\hat{\eps}}[(\calA_{m\hat{\eps}} W_{\hat{\eps}})]^{k}(\eta \xx) 
      &= \frac{1}{m\hat{\eps}} \int_{-m\hat{\eps}/2}^{m\hat{\eps}/2} 
      \left[\eta^\mu (\calA_{m\hat{\eps}} W_{\hat{\eps}}) (\eta \xx + z)\right]^{k}\, dz\\
      &= \frac{1}{m\eps} \int_{-m\eps/2}^{m\eps /2} [ (\calA_{m\eps} W_{\eps,\beta})(\xx+y)]^{k}\, dy\\
      &= \calA_{m\eps}[(\calA_{m\eps} W_{\eps,\beta})]^{k}(\xx)\,.
\end{align*}
Finally, 
\begin{align*}
     \calA_{m\hat{\eps}} Z_3 (\hat{\eps}^\mu (\calA_{m\hat{\eps}} W_{\hat{\eps}}))(\eta \xx) 
     &= \sum_{k \geq 3} \alpha_k \calA_{m\hat{\eps}} [ \eps^\mu\eta^\mu (\calA_{m\hat{\eps}} W_{\hat{\eps}} )]^{k}(\eta \xx)\\
    &= \sum_{k \geq 3} \alpha_k \calA_{m\eps} [\eps^\mu(\calA_{m\eps} W_{\eps,\beta})]^{k}(\xx)\\
    &= \calA_{m\eps} Z_3(\eps^{\mu}(\calA_{m\eps}W_{\eps,\beta}))(\xx)\,.
    \qedhere
        \end{align*}
\end{proof}

\begin{lemma}\label{lem:Bepsbeta}
The mapping $ \beta \mapsto \calB_{\eps,\beta}^{-1} \in \calL(H^{1})$ is analytic on $(0,\infty)$.       
\end{lemma}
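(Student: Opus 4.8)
The plan is to display $\calB_{\eps,\beta}^{-1}$ as a convergent power series in $\beta$ with coefficients in $\calL(H^1)$, so that analyticity is immediate; equivalently, to write $\beta\mapsto\calB_{\eps,\beta}^{-1}$ as the composition of an affine map with operator inversion. First I would record that, like $\calB_\eps$ and $\calB_0$, the operator $\calB_{\eps,\beta}$ in \eqref{e:tw_beta} is a Fourier multiplier, with symbol $b_{\eps,\beta}(k)=\beta-1+b_\eps(k)$, where $b_\eps$ is the symbol from \eqref{d:beps}; in operator form $\calB_{\eps,\beta}=\beta I+(\calB_\eps-I)$, and by \eqref{e:beps_bound1} the symbol $b_\eps-1$ is bounded, so $\calB_\eps-I\in\calL(H^1)$ and $\beta\mapsto\calB_{\eps,\beta}$ is an affine — hence analytic — map from $\R$ (or $\C$, after complexification) into $\calL(H^1)$.

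The key quantitative point, which I would verify next, is a locally uniform bound on the inverse. Since $b_\eps(k)\ge1$ for all $k$ by \eqref{e:beps_bound1}, we get $b_{\eps,\beta}(k)\ge\beta$ for every $k$ and every $\beta>0$, so $\calB_{\eps,\beta}$ is invertible on $H^1$ with $\|\calB_{\eps,\beta}^{-1}\|_{\calL(H^1)}\le 1/\beta$ by Plancherel, exactly as in \eqref{d:omegab}. Fixing $\beta_0>0$ and writing $\calB_{\eps,\beta}=\calB_{\eps,\beta_0}\bigl(I+(\beta-\beta_0)\calB_{\eps,\beta_0}^{-1}\bigr)$, a Neumann expansion then yields
\[
\calB_{\eps,\beta}^{-1}=\sum_{n=0}^{\infty}(-1)^{n}(\beta-\beta_0)^{n}\,\calB_{\eps,\beta_0}^{-(n+1)},
\]
a power series in $\beta-\beta_0$ with coefficients in $\calL(H^1)$, convergent for $|\beta-\beta_0|<\|\calB_{\eps,\beta_0}^{-1}\|^{-1}$, hence for $|\beta-\beta_0|<\beta_0$. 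This directly exhibits analyticity at $\beta_0$, and since $\beta_0\in(0,\infty)$ is arbitrary the lemma follows. Alternatively, one may invoke the standard fact that operator inversion is analytic on the open set of invertible elements of $\calL(H^1)$ (cf.\ \cite{Berger}) and compose it with the affine map $\beta\mapsto\calB_{\eps,\beta}$.

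There is no genuine obstacle here: the only step requiring any care is the uniform-in-$k$ lower bound $b_{\eps,\beta}(k)\ge\beta$, which simultaneously guarantees invertibility of $\calB_{\eps,\beta}$ for all $\beta>0$ and supplies the estimate $\|\calB_{\eps,\beta_0}^{-1}\|\le 1/\beta_0$ needed to control the radius of convergence of the Neumann series. (The dependence of everything on the fixed parameter $\eps$ is harmless, since $\eps$ is held fixed throughout Subsection~\ref{ss:regularity_c}.)
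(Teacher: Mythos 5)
Your proof is correct and follows essentially the same route as the paper: both rest on the lower bound $b_{\eps,\beta}(k)=\beta+f(k)\ge\beta$ (with $f=b_\eps-1\ge0$) and a geometric series in $\beta-\beta_0$ with radius $\beta_0$. The only difference is cosmetic — you phrase the expansion as an operator-level Neumann series, while the paper writes the identical expansion at the level of the symbol $b_{\eps,\beta}^{-1}(k)=\sum_{n\ge0}(-1)^n(\beta-\beta_0)^n(\beta_0+f(k))^{-(n+1)}$.
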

\begin{proof}
First, by linearity of the inverse Fourier transform, it suffices to show that $\beta \mapsto b_{\eps,\beta}^{-1} \in L^{\infty}$ is analytic. Let $\beta_0 \in (0,\infty)$. Naming $f(k) = |k|^{\alpha -1} S_\alpha(\eps|k|)$, we get
\begin{align*}
    b_{\eps,\beta}^{-1}(k) &= \frac{1}{\beta  + f(k)}
    = \frac{1}{\beta_0 + f(k)} \frac{1}{\frac{\beta - \beta_0}{\beta_0 + f(k)} + 1}\\
    &= \sum_{n\geq 0} \frac{(-1)^{n}}{(\beta_0 + f(k))^{n+1}} (\beta - \beta_0)^{n}\, ,
\end{align*}
granted that $|\beta - \beta_0| \leq \beta_0$, since $f(k)\ge0$. Hence, the mapping is analytic. 
\end{proof}

\begin{lemma}
Let $\eps > 0$. The operator $\tilde\calG_\eps:(0,\infty) \times (\tilde{B}_R\cap \Heven^{1}) \to  \Heven^{1}$ defined by 
\begin{equation*}
                \tilde\calG_\eps (\beta, V) = \calB_{\eps,\beta}^{-1}(\calQ_{\eps}(V) + \calZ_\eps(V))
\end{equation*}
is analytic, jointly in $\beta$ and $V$.
\end{lemma}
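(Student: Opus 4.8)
The plan is to establish joint analyticity by exhibiting $\tilde\calG_\eps$ as a composition and product of maps already known (or easily seen) to be analytic, invoking the standard facts from \cite[Ch.~2.3]{Berger} that compositions, products, and uniform limits of analytic maps between (complexified) Banach spaces are analytic, and that a map jointly analytic in its arguments is obtained from separately analytic pieces assembled by bounded bilinear operations. Concretely, write $\tilde\calG_\eps(\beta,V) = \calB_{\eps,\beta}^{-1}\bigl(\calQ_\eps(V)+\calZ_\eps(V)\bigr)$ and treat the two factors separately: the map $\beta\mapsto \calB_{\eps,\beta}^{-1}\in\calL(\Heven^1)$ is analytic on $(0,\infty)$ by Lemma~\ref{lem:Bepsbeta} (noting $\calB_{\eps,\beta}$ preserves evenness since its symbol is real and even), and the map $V\mapsto \calQ_\eps(V)+\calZ_\eps(V)\in\Heven^1$ is analytic on $\tilde B_R\cap\Heven^1$ by Proposition~\ref{p:Fanalytic} (for $\calQ_\eps$, a bounded quadratic map, and for $\calZ_\eps$ via Lemma~\ref{lem:Zeps}, using $\eps^\mu R<1$).

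First I would record that each factor, viewed as a function of the single relevant variable with the other held fixed, is analytic with values in the Banach space $\calL(\Heven^1)$ and $\Heven^1$ respectively; this is immediate from the cited results. Next I would observe that the evaluation/composition map $\calL(\Heven^1)\times\Heven^1\to\Heven^1$, $(T,W)\mapsto TW$, is a bounded bilinear map, hence analytic jointly in $(T,W)$. Then $\tilde\calG_\eps$ is the composition of the product map $(\beta,V)\mapsto\bigl(\calB_{\eps,\beta}^{-1},\,\calQ_\eps(V)+\calZ_\eps(V)\bigr)$ — analytic because each component depends analytically on one of the two independent variables, and a map into a product space is analytic iff its components are — followed by the bilinear evaluation map. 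A composition of analytic maps is analytic, so $\tilde\calG_\eps$ is jointly analytic on $(0,\infty)\times(\tilde B_R\cap\Heven^1)$.

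The only point requiring a little care is the passage from \emph{separate} analyticity in $\beta$ and in $V$ to \emph{joint} analyticity. Rather than appealing to a Hartogs-type theorem, I would avoid it entirely by noting that the two variables enter through genuinely decoupled factors: $\calB_{\eps,\beta}^{-1}$ depends only on $\beta$ and $\calQ_\eps(V)+\calZ_\eps(V)$ depends only on $V$, so the product map $(\beta,V)\mapsto\bigl(\calB_{\eps,\beta}^{-1},\calQ_\eps(V)+\calZ_\eps(V)\bigr)$ into $\calL(\Heven^1)\times\Heven^1$ is manifestly analytic (being, in each slot, an analytic function of a single Banach-space variable, trivially extended to the product domain, and the local power-series expansions combine). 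Joint analyticity of $\tilde\calG_\eps$ then follows by composing with the bounded bilinear evaluation map. I expect this decoupling observation to be the main — indeed essentially the only — conceptual step; the remainder is the bookkeeping already done in Lemma~\ref{lem:Zeps}, Lemma~\ref{lem:Bepsbeta}, and Proposition~\ref{p:Fanalytic}.
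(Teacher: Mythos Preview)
Your proposal is correct and follows essentially the same approach as the paper: the paper omits the proof entirely, remarking only that it is straightforward to justify local convergence of power series expansions given Lemmas~\ref{lem:Zeps} and~\ref{lem:Bepsbeta}. Your version makes this explicit by factoring through the bounded bilinear evaluation map and noting the decoupling of the $\beta$- and $V$-dependence, which is a clean way to handle the passage from separate to joint analyticity that the paper leaves implicit.
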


We omit the proof of this lemma, as it is straightforward to justify local convergence of power series expansions
given the results of Lemmas~\ref{lem:Zeps} and \ref{lem:Bepsbeta}.

% \begin{proof} \fix{Clarify, possibly omit}
% \note{
% Let $(\beta_0, V_0) \in (0,\infty) \times \tilde{B}_R \cap \Heven^{1}$. From analyticity of $\calQ_\eps + \calZ_\eps$, we have 
% \begin{equation}
%   \calQ_\eps(V) + \calZ_\eps(V) = \sum_{n \geq 0} T_n(V - V_0)^{n}\,,
% \end{equation}
% where $T_n \in \calL_n(H^{1})$. From the previous lemma, we have for  $\beta$ sufficiently close to $\beta_0$
% \begin{equation}
%   \calB_{\eps,\beta}^{-1}(V) = \sum_{k \geq 0} U_k(V)(\beta - \beta_0)^{k}\,,
% \end{equation}
% where $U_k$ is linear. Hence
% \begin{align*}
%                 \tilde\calG_{\eps}(\beta,V) &= \sum_{n \geq 0}\calB_{\eps,\beta}^{-1} T_n (V - V_0)^{n}\\
%                                       &= \sum_{n\geq 0} \sum_{k \geq 0} U_k (T_n(V- V_0)^{n}) (\beta - \beta_0)^{k}\\
%                                       &= \sum_{n,k \geq 0} U_k T_n [ (\beta - \beta_0)^{k}(V - V_0)^{n}]\,.
% \end{align*}
% Hence $\tilde\calG_{\eps}$ is analytic over $(0,\infty) \times \tilde{B}_R \cap \Heven^{1}$. 
% }
% \end{proof}

\begin{proof}[Proof of Proposition~\ref{prop:betaAnalyt} (2)]
For small enough $\eps$, it follows from estimates in Theorem~\ref{t.main} that  
$I - D\calG_\eps(W_\eps)$ is invertible. This is the partial derivative 
of the function $f(\beta,V):=V-\tilde\calG(\beta,V)$ with respect to $V$, 
at the point $(1,W_\eps)$  where $f$ vanishes.
Using the joint analyticity to develop a power series expansion at the point $(1,W_\eps)$, 
we can extend $f$ to be analytic in a ball around $(1,W_\eps)$ in the complexification 
of the real Hilbert space $\R\times\Heven^1$. The Frech\`et derivative $D_Vf$ at this point
is the natural extension of the real operator $I-D\calG_\eps(W_\eps)$ and remains invertible.
We can deduce then from the analytic implicit function theorem (see \cite[Theorem 3.3.2]{Berger}) that 
for some interval $(\beta_-,\beta_+)$ containing $1$, there exists an analytic mapping $\beta \mapsto \tilde W_{\eps,\beta}$ 
taking values in $\Heven^1$ (complexified) such that $\tilde W_{\eps,\beta}$ is a solution of \eqref{e:beta_evp3}. 
But by local uniqueness, we deduce that $\tilde W_{\eps,\beta} = W_{\eps,\beta}$.
\end{proof}

\section{Hamiltonian energy and wave speed}\label{s:hamiltonian}
Let us study the behavior of the Hamiltonian as a function of wave speed.
We have not yet written a Hamiltonian for system~\eqref{e:sys1},
due to complications over convergence of the double sums that appear. 
To proceed we describe a potential function related to the 
force function $Z(r)=\alpha(1-r)^{-\alpha-1}=\sum_{k=0}^\infty \alpha_k r^k$,
defined so that $\vp_2(0)=0$ and $\vp_2'(r)=Z(r)-\alpha$, 
whence
\begin{equation}\label{d:phi}
\vp_2(r) = (1-r)^{-\alpha} - 1 -\alpha r =
\sum_{k=2}^\infty \frac{\alpha_{k-1} r^k}{k}
\,. 
\end{equation}
The lattice Hamiltonian, kinetic, and potential energies are
regarded as functions of the particle positions $x_j$ and momenta $p_j=\dot x_j$
and are given by 
\begin{align}
\calH = \calK + \calP, 
\quad
\calK = \sum_{j\in\Z} \tfrac12 p_j^2 \,,  \quad
\calP = \sum_{j\in\Z} \sum_{m=1}^\infty m^{-\alpha}\vp_2(r_{j+m,j}) \,,
\end{align}
where the quantities $r_{k,j}$, representing normalized relative compressions,
are defined via 
\[
1- r_{k,j} = \frac{x_k-x_j}{k-j}\,.
\]
In particular, note
\[
(x_{j+m}-x_j)^{-\alpha} = m^{-\alpha}(1-r_{j+m,j})^{-\alpha} \,.
\]
It is straightforward to check that the canonical Hamiltonian equations
for $\calH$ yield \eqref{e:sys1} and that $\calH$ is finite and
constant in time for solitary wave solutions.
The main result in this section is the following result which links the value
of the Hamiltonian to an approximation of
the squared $L^2$ norm of the unscaled velocity profile 
\[
\dot x_j(t) = c\eps^\mu W_\eps(\eps z), \quad z =j-ct.
\]
\begin{theorem}\label{t.ham}
For the Hamiltonian $\calH$ evaluated along the family of solitary waves
given by Theorem~\ref{t.main}, we have  
\begin{align}
\label{e:Hc}
\calH &= 
\eps^{2\mu-1}\left(\int_\R c_\alpha^2 W_0(\xx)^2\,d\xx + O(\eps^\gamma)\right)\,,\\ 
\label{e:dHdeps}
\frac{d\calH}{d\eps} &=
(2\mu-1)\eps^{2\mu-2}\left(\int_\R c_\alpha^2 W_0(\xx)^2\,d\xx + o(1)\right)\,.
\end{align}
Thus for $\alpha\ne\frac32$, $\sgn d\calH/dc$ agrees with $\sgn(\alpha-\frac32)$
for small enough $\eps>0$.
\end{theorem}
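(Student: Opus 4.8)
The plan is to compute $\calH$ along the solitary-wave family by splitting it as $\calK + \calP$ and expanding each in powers of $\eps$. First I would insert the ansatz \eqref{d:xj} to get $p_j = \dot x_j = c\eps^\mu W_\eps(\xx)$ with $\xx = \eps(j-ct)$, so that $\calK = \tfrac12 c^2\eps^{2\mu}\sum_{j}W_\eps(\xx_j)^2$. Treating this sum as a Riemann sum with node spacing $\eps$, one has $\sum_j W_\eps(\eps j')^2 = \eps^{-1}\int_\R W_\eps(\xx)^2\,d\xx + (\text{corrections})$; since $W_\eps\in H^\infty$ with uniformly bounded high Sobolev norms (from the regularity part of Theorem~\ref{t.main}, combined with the $H^1$-closeness $\|W_\eps - W_0\|_{H^1}\le C\eps^\gamma$ and the bootstrap of Subsection~\ref{ss:regularityW}) the Euler--Maclaurin correction is $O(\eps^\infty)$ — negligible compared to everything else. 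Using $c^2 = c_\alpha^2 + \eps^\mu$ and $\|W_\eps-W_0\|_{H^1}^2\le C\eps^{2\gamma}$, this gives $\calK = \tfrac12 c_\alpha^2 \eps^{2\mu-1}\bigl(\int_\R W_0^2 + O(\eps^\gamma)\bigr)$.

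Second I would handle $\calP = \sum_j\sum_{m\ge1}m^{-\alpha}\vp_2(r_{j+m,j})$. From the computation in Section~\ref{s:equations}, $1-r_{j+m,j} = (x_{j+m}-x_j)/m = 1 - \eps^\mu\calA_{m\eps}W_\eps(\xx+\tfrac12 m\eps)$, so $r_{j+m,j} = \eps^\mu\calA_{m\eps}W_\eps(\xx+\tfrac12 m\eps)$, which is small uniformly in $j,m$ by \eqref{c:eWlt1}. Since $\vp_2(r) = \tfrac12\alpha_1 r^2 + O(r^3)$, the leading term of $\calP$ is $\tfrac12\alpha_1\eps^{2\mu}\sum_j\sum_{m\ge1}m^{-\alpha}(\calA_{m\eps}W_\eps)^2(\xx+\tfrac12 m\eps)$, with the cubic-and-higher remainder contributing $O(\eps^{3\mu})\cdot O(\eps^{-1})$ after the $j$-sum — lower order, since $3\mu - 1 > 2\mu - 1$. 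For the quadratic part, interchanging the $j$- and $m$-sums and converting the $j$-sum to an integral (again with $O(\eps^\infty)$ error), $\sum_j (\calA_{m\eps}W_\eps)^2(\eps j' + \tfrac12 m\eps) = \eps^{-1}\|\calA_{m\eps}W_\eps\|_{L^2}^2 \le \eps^{-1}\|W_\eps\|_{L^2}^2$ by nonexpansivity of $\calA_{m\eps}$; summing $m^{-\alpha}$ and using $\calA_{m\eps}W_\eps\to W_\eps$ plus $\|W_\eps-W_0\|\le C\eps^\gamma$ gives $\calP = \tfrac12\alpha_1\zeta_\alpha\eps^{2\mu-1}\bigl(\int_\R W_0^2 + O(\eps^\gamma)\bigr)$. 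Recalling $c_\alpha^2 = \alpha(\alpha+1)\zeta_\alpha = \alpha_1\cdot\alpha\zeta_\alpha$... here I must be careful: the identity that makes $\calK$ and $\calP$ combine cleanly is precisely the sonic relation, and I expect $\calK + \calP$ to give $c_\alpha^2\eps^{2\mu-1}(\int W_0^2 + O(\eps^\gamma))$ — i.e., the two halves each contribute $\tfrac12 c_\alpha^2$ — as in the FPUT/virial computations of \cite{Friesecke.Pego.2002}. This yields \eqref{e:Hc}.

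Third, for \eqref{e:dHdeps} I would differentiate the expansion \eqref{e:Hc} in $\eps$. The only subtlety is that $\eps\mapsto W_\eps$ is not known to be differentiable (as the authors flag after Theorem~\ref{t.main}). To get around this I would pass through the fixed scaling: by \eqref{e:vc_beta}, $\calH$ as a function of $c$ (equivalently of $\beta$ at fixed $\eps$) is built from $W_{\eps,\beta}$, which \emph{is} analytic in $\beta$ by Proposition~\ref{prop:betaAnalyt}(2); I would express $\calH$ via the unscaled profile $v_c$ (whose $L^2$ norm is scaling-covariant) and differentiate that analytic dependence, or alternatively observe that $\calH$ is a genuine function of $c$ along the wave family and invoke the analyticity of $c\mapsto v_c$ from Theorem~\ref{t.main}. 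Either way, $d\calH/d\eps = (2\mu-1)\eps^{2\mu-2}(c_\alpha^2\int W_0^2 + o(1))$, and since $\mu = \alpha-1$, $2\mu-1 = 2\alpha - 3 = 2(\alpha-\tfrac32)$, so $\sgn(d\calH/d\eps)$, hence $\sgn(d\calH/dc)$ (as $c$ increases with $\eps$), equals $\sgn(\alpha-\tfrac32)$ for small $\eps$ when $\alpha\ne\tfrac32$.

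\textbf{Main obstacle.} The delicate point is controlling the $m$-sum uniformly in $\eps$ while converting $j$-sums to integrals: the averaging operators $\calA_{m\eps}$ spread over windows of width $m\eps$ that become large once $m\gtrsim 1/\eps$, exactly the regime where the earlier residual estimates split their sums. I expect the clean bound to come from nonexpansivity of $\calA_{m\eps}$ on $L^2$ (giving an $m$-independent $\|W_\eps\|_{L^2}^2$ after the $j$-integration) combined with $\sum m^{-\alpha} = \zeta_\alpha < \infty$ for $\alpha > 1$; the $O(\eps^\gamma)$ error then tracks $\|\calA_{m\eps}W_\eps - W_\eps\|$ summed against $m^{-\alpha}$, handled by dominated convergence exactly as in \eqref{e:Qeps_slim}. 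The second obstacle is bookkeeping the error exponents to confirm that cubic terms in $\vp_2$ and the Euler--Maclaurin tails are genuinely subleading relative to $\eps^{2\mu-1}$; this is routine given $\mu > 0$ and the uniform $H^\infty$ bounds on $W_\eps$.
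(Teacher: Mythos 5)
There are two genuine gaps here, one in each half of the theorem.

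First, your reduction of the lattice sums $\sum_j$ to integrals via Riemann sums with Euler--Maclaurin corrections is not supported by the estimates available. You invoke ``uniformly bounded high Sobolev norms'' of $W_\eps$, but the bootstrap in Subsection~\ref{ss:regularityW} only gives $W_\eps\in H^\infty$ for each \emph{fixed} $\eps$: the smoothing bound for $\calA_{m\eps}$ from $H^1$ to $H^2$ used there degenerates as $\eps\to0$ (the paper explicitly notes the bound depends on $\eps$), so no uniform-in-$\eps$ control of high derivatives is established, and the Poisson-summation/Euler--Maclaurin tails are not known to be $O(\eps^\infty)$, or even subleading. The paper sidesteps this entirely: since $\calH$ is constant in time along the wave, $\calH = c\int_0^{1/c}\calH\,dt$, and interchanging the $j$-sum with the $t$-integral converts $\sum_j f(j-ct)$ \emph{exactly} into $\int_\R f(z)\,dz$ with no smoothness needed (see \eqref{e:hamreduce}). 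After that exact reduction, your handling of the quadratic and cubic potential terms (nonexpansivity of $\calA_{m\eps}$, dominated convergence over $m$, the identity $c_\alpha^2=\alpha_1\zeta_\alpha$ making the kinetic and quadratic-potential halves each contribute $\tfrac12 c_\alpha^2\int W_0^2$) matches the paper's Proposition~\ref{p:Ham}; note though that your intermediate identity ``$c_\alpha^2=\alpha_1\cdot\alpha\zeta_\alpha$'' is wrong --- with the paper's convention $\alpha_1=\alpha(\alpha+1)$, so $c_\alpha^2=\alpha_1\zeta_\alpha$ directly.

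Second, for \eqref{e:dHdeps} you cannot ``differentiate the expansion \eqref{e:Hc} in $\eps$'': an asymptotic statement $\calH=\eps^{2\mu-1}(A+O(\eps^\gamma))$ says nothing about $d\calH/d\eps$. You correctly identify that one must differentiate in $\beta$ at fixed $\eps$ using the analyticity of $\beta\mapsto W_{\eps,\beta}$, but the missing key idea is the variational identity $\delta\calH=-c\,\delta\calI$ on solitary-wave profiles, where $\calI=\int p\,\D_z q\,dz = -c\eps^{2\mu-1}\int W_{\eps,\beta}^2$ is the Noether functional. This reduces $d\calH/dc$ to differentiating the single simple quantity $\int W_{\eps,\beta}^2$, for which the paper computes $\int 2W_0\,\partial_\beta W_{0,\beta}|_{\beta=1}=\frac{2\mu-1}{\mu}\int W_0^2$ by scaling and shows $\partial_\beta W_{\eps,\beta}\to\partial_\beta W_{0,\beta}$ in $H^1$ via the linearized fixed-point equations. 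Without that identity you would have to differentiate every term of \eqref{e:H_Weps} --- including $\calP_{2,\eps}$ and $\calP_{3,\eps}$ --- in $\beta$ and show the derivatives of all error terms are $o(\eps^{\mu-1})$ relative to the main term, which your plan does not address and which is substantially harder than the route the paper takes.
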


For the Calogero-Moser case $\alpha=2$,
when the solitary waves are determined through \eqref{e:explicit_eq}
by \cite[Theorem~1.1]{IP24}, we can be more explicit.
\begin{theorem}\label{t.CMH}
   In the case $\alpha=2$, for the solitary waves determined by \eqref{e:explicit_eq},
   for every wave speed $c>\pi$ we have 
   \[
   \calH = \tfrac12(c^2-\pi^2)\,.
   \]
\end{theorem}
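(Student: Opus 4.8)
The plan is to exploit the explicit implicit equation \eqref{e:explicit_eq} together with the fact that, for $\alpha=2$, the potential $\vp_2$ from \eqref{d:phi} has the closed form $\vp_2(r) = (1-r)^{-2} - 1 - 2r$, so that the series defining $\calP$ can in principle be summed explicitly using $\sum_{m\ge1} m^{-2}(1-r_{j+m,j})^{-2} = \sum_{m\ge1} (x_{j+m}-x_j)^{-2}$. First I would compute the kinetic energy $\calK = \tfrac12\sum_j v_c(j-ct)^2$ where $v_c = c\vp'$; translation invariance of the sum (as a function of $t$) means we may set $t=0$, and the natural move is to evaluate $\calK$ by converting the sum over $j$ into an integral or residue computation using the rational structure $\vp'(z) = \re\bigl(i/(z-\vp(z)+i\pi)\bigr)\cdot$(appropriate factor), as in the $\eps\to0$ computation done in Subsection~\ref{ss:CM}. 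In fact the cleanest route is probably to differentiate the identity \eqref{e:explicit_eq} to get $\vp'$ in closed form and then recognize the resulting sum.

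The key simplification I expect is that, because the waves are exact solitary waves, $\calH$ is constant in time, so it can be evaluated in whatever way is most convenient; and because \cite[Theorem~1.1]{IP24} presumably records conserved quantities or explicit formulas for the Calogero–Moser chain, one may be able to read off $\calK$ and $\calP$ (or their sum) directly rather than resumming by hand. The Galilean and scaling structure noted after \eqref{e:sys1}, together with the fact that $c^2 - \pi^2 = \eps$ under the scaling \eqref{d:vc}, strongly suggests the answer $\tfrac12(c^2-\pi^2)$ should emerge as the coefficient of a single pole-type term. Concretely, I would (i) write $\calH = \calK + \calP$ with $x_j = j - \vp(j)$ at $t=0$; (ii) use \eqref{e:explicit_eq} in the form $(c^2-\pi^2)(z-\vp) = \pi\tan\pi\vp$ to parametrize the lattice by $\theta_j := \pi\vp(j)\in(-\tfrac\pi2,\tfrac\pi2)$, so that $j - \vp(j) = \frac{\pi\tan\theta_j}{c^2-\pi^2} + \frac{\theta_j}{\pi}$; (iii) substitute into $\calK$ and $\calP$ and sum using partial fractions / telescoping, checking that the infinite sums converge and that all but one contribution cancels.

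The main obstacle will be the resummation of the double sum in $\calP$: unlike $\calK$, the potential energy involves $\sum_j\sum_m (x_{j+m}-x_j)^{-2}$, which does not obviously telescope, and one must handle convergence of the double series carefully (the inner sum over $m$ converges like $\zeta_2$ at the rest state but must be renormalized by subtracting the $\vp_2$ linearization). I anticipate that the trick is to use the equation of motion \eqref{e:sys1} itself — i.e., $\ddot x_j = 0$ is false but $\ddot x_j = c^2\vp''(j-ct)$, and a summation-by-parts / virial-type identity relating $\sum_j x_j \ddot x_j$, $\calK$, and a homogeneous-degree count of $\calP$ (Euler's identity applied to the degree $-2$ homogeneity of the pair interaction, i.e. a Pohozaev/virial relation for the lattice) — to pin down $\calP$ in terms of $\calK$ and hence in terms of $c^2-\pi^2$ without summing the double series directly. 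If that virial route works it also explains the clean final formula; if it does not close, the fallback is a direct but delicate contour-integral evaluation of both sums using the meromorphic structure of $z\mapsto\vp(z)$ implied by \eqref{e:explicit_eq}.
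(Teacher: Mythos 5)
Your plan and the paper's proof diverge fundamentally, and as written your plan has a genuine gap: the step that would actually carry the argument is never established. You correctly identify the double sum $\sum_j\sum_m m^{-\alpha}\vp_2(r_{j+m,j})$ as the main obstacle, but neither of your two proposed resolutions is carried out, and both face concrete obstructions. The virial/Euler-homogeneity route founders on the fact that $\vp_2(r)=(1-r)^{-2}-1-2r$ is \emph{not} homogeneous: the renormalization terms $-1-2r$ are exactly what make $\calP$ converge, and they cannot be split off (the sums $\sum_j\sum_m m^{-2}\cdot 1$ and $\sum_j\sum_m m^{-2}r_{j+m,j}$ diverge separately), so a clean degree-count identity does not close without substantial extra work that you have not supplied. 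The fallback contour-integral route is also harder than you suggest because $\calK$ and $\calP$ are lattice sums over $j$, not integrals; your claim that ``translation invariance'' lets you set $t=0$ is false for $\calK$ and $\calP$ individually (only $\calH=\calK+\calP$ is conserved; $\sum_j f(j-ct)$ is a nonconstant periodic function of $t$). The paper handles this by averaging the conserved quantity $\calH$ over one period $[0,1/c]$, which is what converts $\sum_j$ into $\int_\R dz$ in \eqref{e:hamreduce}; without that device your residue computation has no integral to act on.

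The mechanism the paper actually uses is entirely different and avoids evaluating $\calK$ or $\calP$ at all. Solitary-wave profiles are critical points of the energy–momentum functional $\calH+c\calI$ with $\calI=\int_\R p\,\D_z q\,dz$, which yields $\delta\calH=-c\,\delta\calI$ along the family and hence $d\calH/dc=-c\,d\calI/dc$. For $\alpha=2$ one computes the \emph{single} convergent integral $\calI(u_c)=-c\int_\R(\vp')^2\,dz=\pi-c$ explicitly by differentiating \eqref{e:explicit_eq} and substituting $cy=\pi\tan\pi\vp$; this gives $d\calH/dc=c$. The constant of integration is fixed by the boundary condition $\calH(u_c)\to0$ as $c\to\pi^+$, which comes from the general asymptotic expansion in Proposition~\ref{p:Ham} (for $\alpha=2$, $\calH=O(\eps)$). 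Integrating then gives $\calH=\tfrac12(c^2-\pi^2)$. Your instinct to ``pin down $\calP$ in terms of $\calK$ via an identity rather than resum'' is in the right spirit, but the identity you need is the momentum relation $d\calH/dc=-c\,d\calI/dc$ together with the near-sonic limit of $\calH$, neither of which appears in your plan.
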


To study the Hamiltonian on solitary waves, we write the waves 
provided by Theorem~\ref{t.main} in the form
\[
x_j(t) = j- q(j-ct), \quad \dot x_j(t) = -p(j-ct) \,,
\]
temporarily suppressing dependence on wave speed 
(and with apologies for the sign reversals but noting $\D_t q = p = -cq'$).
With $z=j-ct$ we then have that 
\[
r_{j+m,j} =  \delta^+_m q(z) := \frac{q(z+m)-q(z)}{m} \,.
\]
As in \cite{Friesecke.Pego.99},
we average the Hamiltonian over a time interval $[0,1/c]$ to reduce
it to an integral over $\R$. Because $dz=-c\,dt$, we find the expressions
\begin{align}
\calH = c \int_0^{1/c} \calH \,dt 
 & = c \int_{-\infty}^\infty \Bigl( \tfrac12 p(z)^2 + 
\sum_{m=1}^\infty m^{-\alpha}\vp_2( \delta^+_mq(z) ) \Bigr)\,dt
\nonumber
\\ & = \int_{-\infty}^\infty \Bigl( \tfrac12 p(z)^2 + 
\sum_{m=1}^\infty m^{-\alpha}\vp_2( \delta^+_mq(z) ) \Bigr)\,dz \,.
\label{e:hamreduce}
\end{align}

Although the lattice system~\eqref{e:sys1} does not admit a continuous
spatial symmetry, we note that traveling wave profiles are nevertheless
formally critical points of an ``energy-momentum'' functional 
$\calH+c\calI$, where $\calI$ is the Noether functional 
associated with the (Lagrangian) translation invariance of \eqref{e:hamreduce} and is given by 
\begin{equation}\label{d:calI}
\calI = \int_{-\infty}^\infty p(z)\D_z q(z)\,dz.
\end{equation}
Indeed, setting to zero the variations of $c\calI+\calH$ with respect
to $p$ and $q$ yields 
\begin{align}
0 &= c\D_z q(z)+p(z) \,,
\label{e:qp1}
\\
0 &= -c\D_z p + \sum_{m=1}^\infty
m^{-\alpha-1}\Bigl( Z(\delta^+_m q(z))- Z(\delta^+_m q(z-m)) \Bigr) \,,
\end{align}
which are the correct equations for solitary wave profiles.
In other words, on solitary wave profiles we have
$\delta\calH=-c\delta\calI$, a fact which will simplify 
a monotonicity calculation below.
(This functional $\calI$ differs from the physical momentum $\sum_j p_j$ generated by the
translational symmetry $x_j\mapsto x_j+h$, however.)
%\fix{(double checks ok}

\begin{proof}[Proof of Theorem~\ref{t.ham}]
With these relations established, let us now insert the scaled
form of solitary wave profiles provided by our main theorem. 
We indicate by subscript the dependence of the profile tuple
$u_c=(q_c,p_c)$ upon wave speed $c$.
In particular, our ansatz \eqref{d:xj} and 
the relation $U'=W_\eps$ yields
\begin{equation}
q(z) = \eps^\nu U(\xx)\,,\quad 
\quad
p(z) = -c\D_zq(z)=
-c\eps^\mu W_\eps(\xx) \,,
\end{equation}
with $\xx=\eps(j-ct)=\eps z$. Then,
we have the relation
\[
\delta^+_m q(z) = \eps^\mu\calA_{m\eps}W_\eps(\xx+\tfrac12m\eps) \,,
\]
and, using the facts that $\vp_2(r)= O(r^2)$ and $W_\eps\in H^1$, we obtain
\begin{align}
\calH(u_c) &= 
\eps^{2\mu-1}  \int_\R \Bigl( \frac12 c^2 W_\eps(\xx)^2 + 
\sum_{m=1}^\infty m^{-\alpha}\eps^{-2\mu}\vp_2(\eps^\mu\calA_{m\eps}W_\eps(\xx) ) \Bigr)\,d\xx \,,
\\
 \calI(u_c) &= -\eps^{2\mu-1}  \int_\R c\, W_\eps(\xx)^2\,d\xx \,.
\label{e:calIuc}
\end{align}
Write $\vp_3(r)=\sum_{k=3}^\infty \alpha_{k-1}r^k/k$,
so that $\vp_2(r) = \frac12 \alpha_1 r^2 + \vp_3(r)$, and define
\begin{align}
\label{d:P2}
\calP_{2,\eps}(W) &= \frac{\alpha_1}{2} \int_\R  
\sum_{m=1}^\infty m^{-\alpha} (\calA_{m\eps}W(\xx))^2 \,d\xx\,,
\\
\label{d:P3}
\calP_{3,\eps} &= 
\int_\R \sum_{m=1}^\infty m^{-\alpha}\eps^{-2\mu}\vp_3(\eps^\mu\calA_{m\eps}W_\eps(\xx) ) \,d\xx \,.
\end{align}
In terms of these expressions we have 
\begin{equation}
\label{e:H_Weps}
\calH(u_c) = \eps^{2\mu-1}\left(
\int_\R \frac12c^2 W_\eps^2\,d\xx 
+ \calP_{2,\eps}(W_\eps) 
+ \calP_{3,\eps}\right) \,.
\end{equation}
\begin{proposition}\label{p:Ham}
As $\eps\to 0$ we have 
\[
\calH(u_c) = \eps^{2\mu-1} \left( \int_\R c_\alpha^2 W_0^2\,d\xx + O(\eps^\gamma) \right)\,.
\]
\end{proposition}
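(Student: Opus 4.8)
\noindent\textbf{Proof plan for Proposition~\ref{p:Ham}.} The idea is to expand the bracket in \eqref{e:H_Weps} and compare each of its three terms with $c_\alpha^2\int_\R W_0^2\,d\xx$, absorbing every error into $O(\eps^\gamma)$. I will use freely that $c^2=c_\alpha^2+\eps^\mu$ by \eqref{d:xj}, that $c_\alpha^2=\alpha_1\zeta_\alpha$ by \eqref{d:calpha} and \eqref{e:Zseries} (since $\alpha_1=\alpha(\alpha+1)$), and that by the existence proof in Subsection~\ref{ss:exist} we have $C_{H^1}\|W_\eps\|_{H^1}\le R$ with $\eps^\mu R\le\rho<1$; in particular $M:=\sup_{0<\eps<\eps_0}\|W_\eps\|_{H^1}<\infty$, so $\|W_\eps\|_{L^\infty}$ is bounded uniformly by one-dimensional Sobolev embedding. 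Below, $C$ denotes a constant depending only on $M$ (hence on $W_0$).

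\noindent\textbf{Quadratic term.} First I would show $\calP_{2,\eps}(W_\eps)=\tfrac12 c_\alpha^2\int_\R W_\eps^2\,d\xx+O(\eps^{\alpha-1})$. By \eqref{d:P2},
\[
\calP_{2,\eps}(W_\eps)-\tfrac12\alpha_1\zeta_\alpha\|W_\eps\|_{L^2}^2
=-\tfrac{\alpha_1}{2}\sum_{m\ge1}m^{-\alpha}\bigl(\|W_\eps\|_{L^2}^2-\|\calA_{m\eps}W_\eps\|_{L^2}^2\bigr),
\]
and by Plancherel with \eqref{e:AetaFourier} and the elementary bound $0\le 1-\sinc^2 z\le\min\{1,\tfrac13 z^2\}$, each summand satisfies $0\le\|W_\eps\|_{L^2}^2-\|\calA_{m\eps}W_\eps\|_{L^2}^2\le\min\{\|W_\eps\|_{L^2}^2,\ \tfrac1{12}(m\eps)^2\|W_\eps'\|_{L^2}^2\}$. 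Splitting the sum at $m=\lfloor1/\eps\rfloor$ exactly as in the proof of Lemma~\ref{lem:quad_diff} — bounding the part $m\le1/\eps$ by $C\eps^2\sum_{m\le1/\eps}m^{2-\alpha}\le C\eps^{\alpha-1}$ and the part $m>1/\eps$ by $C\sum_{m>1/\eps}m^{-\alpha}\le C\eps^{\alpha-1}$ — gives the claim, and $\tfrac12\alpha_1\zeta_\alpha=\tfrac12c_\alpha^2$.

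\noindent\textbf{Higher-order term.} Next I would show $\calP_{3,\eps}=O(\eps^{\alpha-1})$. Writing $\vp_3(r)=\sum_{k\ge3}\tfrac{\alpha_{k-1}}{k}r^k$ and substituting into \eqref{d:P3} gives $\calP_{3,\eps}=\sum_{k\ge3}\tfrac{\alpha_{k-1}}{k}\eps^{(k-2)\mu}\sum_{m\ge1}m^{-\alpha}\int_\R(\calA_{m\eps}W_\eps)^k\,d\xx$; since $\|\calA_{m\eps}W_\eps\|_{L^\infty}\le\|W_\eps\|_{L^\infty}$ and $\|\calA_{m\eps}W_\eps\|_{L^2}\le\|W_\eps\|_{L^2}$ by \eqref{e:Aeta_bound}, the inner integral is $\le\|W_\eps\|_{L^\infty}^{k-2}\|W_\eps\|_{L^2}^2\le C^k$, so the $m$-sum is $\le\zeta_\alpha C^k$ and the $k$-series is $\eps^\mu$ times a power series in $\eps^\mu$ convergent for $\eps$ small (radius of convergence controlled by $\|W_\eps\|_{L^\infty}<\infty$), whence $|\calP_{3,\eps}|\le C\eps^\mu=C\eps^{\alpha-1}$.

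\noindent\textbf{Assembly and obstacle.} Combining the two estimates with $c^2+c_\alpha^2=2c_\alpha^2+\eps^\mu$ and the boundedness of $\int W_\eps^2$ gives $\tfrac12c^2\int W_\eps^2+\calP_{2,\eps}(W_\eps)+\calP_{3,\eps}=c_\alpha^2\int W_\eps^2+O(\eps^{\alpha-1})$. Then $\bigl|\int W_\eps^2-\int W_0^2\bigr|\le\|W_\eps-W_0\|_{L^2}\|W_\eps+W_0\|_{L^2}\le C\eps^\gamma$ by Theorem~\ref{t.main}(ii), so the bracket equals $c_\alpha^2\int W_0^2+O(\eps^{\alpha-1})+O(\eps^\gamma)$; since $\mu=\alpha-1\ge\gamma$ (with equality for $\alpha\le2$ and $\mu>1>\gamma$ for $\alpha>2$), this is $c_\alpha^2\int W_0^2+O(\eps^\gamma)$, and multiplying by $\eps^{2\mu-1}$ completes the proof. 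The only step requiring genuine care is the quadratic estimate: the crude bound on $\calP_{2,\eps}(W_\eps)-\tfrac12c_\alpha^2\int W_\eps^2$ produces the divergent tail $\sum_m m^{2-\alpha}$, so one must reuse Wright's small-$m$/large-$m$ truncation already packaged in Lemma~\ref{lem:quad_diff}; the rest is routine power-series bookkeeping.
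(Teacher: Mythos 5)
Your proposal is correct and follows essentially the paper's own proof: the same decomposition of \eqref{e:H_Weps} into the kinetic term, $\calP_{2,\eps}$, and $\calP_{3,\eps}$, the same Wright-style splitting of the $m$-sum at $\lfloor 1/\eps\rfloor$, the same power-series bound for $\calP_{3,\eps}$, and the same final assembly using $c^2=c_\alpha^2+\eps^\mu$ and $\gamma\le\mu$. The only (harmless) deviation is in the quadratic term, where you compare $\calP_{2,\eps}(W_\eps)$ with $\tfrac12 c_\alpha^2\int_\R W_\eps^2\,d\xx$ directly via Plancherel and the bound $1-\sinc^2 z\le\min\{1,\tfrac13 z^2\}$, using only the uniform $H^1$ bound on $W_\eps$, whereas the paper first replaces $W_\eps$ by $W_0$ at cost $O(\eps^\gamma)$ and then invokes \eqref{e:Aeta_est} together with the smoothness of $W_0$; both routes give the same estimates.
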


\begin{proof}
1. By Theorem~\ref{t.main} we have that $\|W_\eps-W_0\|_{H^1}=O(\eps^\gamma)$, hence
\[
\left|\int_\R W_\eps^2\,d\xx - \int_\R W_0^2 \,d\xx \right|\le 
C \|W_\eps-W_0\|_{L^2} \le C \eps^\gamma.
\]
2. Noting that 
\[
\left| \int_\R (A_{m\eps}W_\eps)^2\,d\xx - 
\int_\R (A_{m\eps}W_0)^2\,d\xx  \right| 
\le C\|\calA_{m\eps}(W_\eps-W_0)\|_{L^2} \le C\eps^\gamma,
\]
straightforward estimates imply
\[
|\calP_{2,\eps}(W_\eps)-\calP_{2,\eps}(W_0)| \le C\eps^\gamma.
\]
Furthermore, by using \eqref{e:Aeta_est} and the regularity of $W_0$ we get
\[
\left|\int_\R (A_{m\eps}W_0)^2\,d\xx -\int_\R W_0^2\,d\xx \right|  
\le C\|\calA_{m\eps}W_0-W_0\|_{L^2} \le C (m\eps)^2 \,,
\]
so by splitting the sum in \eqref{d:P2}
just as in the proof of Lemma~\ref{lem:quad_diff}, we find
\begin{equation}\label{e:P2diff}
\left|\calP_{2,\eps}(W_0) - \tfrac12\alpha_1\zeta_\alpha \int_\R W_0^2\,d\xx\right|
\le C\eps^\mu \,.
\end{equation}

3. By arguments nearly identical to those that establish the estimates for
$\calZ_3$ in Lemma~\ref{lem:Zeps}, we find that 
\[
|\calP_{3,\eps}|\le C\eps^\mu.
\]
4. Recalling that $c_\alpha^2 = \alpha_1\zeta_\alpha$ and $c^2=c_\alpha^2+\eps^\mu$
and $\mu\ge \gamma$, the proof is finished by using the estimates in steps 1-3
to estimate the terms in \eqref{e:H_Weps}.
\end{proof}

This proposition establishes \eqref{e:Hc}, and it remains
to discuss the monotonicity of solitary-wave energy as a function of wave speed.
Define 
\[
W_{0,\beta}(\xx) = \eta^{\mu}W_{0}(\eta \xx) \quad\text{where}\quad
\eta = \beta^{1/\mu}\,.
\]
Through scaling, we find $W_{0,\beta}$ to be a solution of the limiting equation 
\begin{equation}\label{e:beta_W0}
\calB_{0,\beta}V = \calQ_0(V), \quad \calB_{0,\beta} = \beta I + \kappa_3|D|^\mu \,, 
\end{equation}
which reduces to \eqref{e:W0eq-intro} when $\beta=1$.
\begin{lemma}
We have 
\begin{equation}
                \int_\mathbb{R} 2W_0 \frac{\partial}{\partial \beta} W_{0,\beta} \Big|_{\beta = 1} = \frac{2\mu - 1}{\mu} \int_\mathbb{R} W_0^2\,,
\end{equation}
and that as $\eps \to 0$,
\begin{equation}
  \left\| \frac{\partial}{\partial \beta} (W_{\eps,\beta} - W_{0,\beta})\Big|_{\beta = 1} \right\|_{H^{1}} \to 0\,.
\end{equation}
\end{lemma}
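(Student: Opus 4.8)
The plan is to treat the two claims separately: the first is an explicit scaling computation, and the second is a perturbation argument built on the $\beta$-analyticity already established in Proposition~\ref{prop:betaAnalyt}.

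For the first identity, I would differentiate the relation $W_{0,\beta}(\xx)=\eta^\mu W_0(\eta\xx)$ with $\eta=\beta^{1/\mu}$ directly in $\beta$. Since $\frac{d\eta}{d\beta}\big|_{\beta=1}=\frac1\mu$, the product and chain rules give
\[
\frac{\partial}{\partial\beta}W_{0,\beta}(\xx)\Big|_{\beta=1}
= \tfrac1\mu\bigl(\mu\,W_0(\xx)+\xx\,W_0'(\xx)\bigr)
= W_0(\xx)+\tfrac1\mu\,\xx\,W_0'(\xx).
\]
Multiplying by $2W_0$ and integrating, the first term contributes $2\int_\R W_0^2$, and for the second I integrate $\int_\R \xx\,W_0 W_0'\,d\xx = \tfrac12\int_\R \xx\,(W_0^2)'\,d\xx = -\tfrac12\int_\R W_0^2\,d\xx$ after one integration by parts (the boundary terms vanish since $W_0\in H^\infty$ decays). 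Collecting, $2\int_\R W_0\,\partial_\beta W_{0,\beta}|_{\beta=1} = (2-\tfrac1\mu)\int_\R W_0^2 = \tfrac{2\mu-1}{\mu}\int_\R W_0^2$, as claimed. (One must double-check $W_0\in H^\infty$ justifies the integration by parts and differentiation under the integral, which it does by the bootstrap result already proved.)

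For the second claim, I would set $V_\eps := \partial_\beta W_{\eps,\beta}|_{\beta=1}$ and $V_0 := \partial_\beta W_{0,\beta}|_{\beta=1}$, and show $\|V_\eps-V_0\|_{H^1}\to0$. The idea is to differentiate the fixed-point equations in $\beta$ at $\beta=1$. From Proposition~\ref{prop:betaAnalyt}(1), $W_{\eps,\beta}=\tilde\calG_\eps(\beta,W_{\eps,\beta})$, so differentiating and solving gives $V_\eps = (I-D_V\tilde\calG_\eps(1,W_\eps))\inv\,\partial_\beta\tilde\calG_\eps(1,W_\eps)$; similarly $V_0 = (I-D_V\tilde\calG_0(1,W_0))\inv\,\partial_\beta\tilde\calG_0(1,W_0)$ where $\tilde\calG_0(\beta,V)=\calB_{0,\beta}\inv\calQ_0(V)$. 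One then estimates each factor: (a) $\partial_\beta\tilde\calG_\eps(1,W_\eps) = -\calB_{\eps,1}\inv\calB_{\eps,1}'\,\calB_{\eps,1}\inv(\calQ_\eps(W_\eps)+\calZ_\eps(W_\eps))$, where $\partial_\beta\calB_{\eps,\beta}=I$, so this is just $-\calB_\eps\inv W_\eps$; likewise $\partial_\beta\tilde\calG_0(1,W_0) = -\calB_0\inv W_0$. By Proposition~\ref{p:Bepslim} and $\|W_\eps-W_0\|_{H^1}\to0$ from Theorem~\ref{t.main}, $\calB_\eps\inv W_\eps\to\calB_0\inv W_0$ in $H^1$. (b) For the inverses, $I-D_V\tilde\calG_\eps(1,W_\eps)=I-D\calG_\eps(W_\eps)$, which converges in $\calL(H^1)$ to $I-D\calF(W_0)=\calL_0$ by combining Propositions~\ref{p:DF} and~\ref{p:DFG} (with $W=W_\eps$) and Proposition~\ref{p:Bepslim}; since $\calL_0$ is invertible, the inverses are uniformly bounded for small $\eps$ and converge in operator norm to $\calL_0\inv$. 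Multiplying the convergent factors and noting $V_0=\calL_0\inv(-\calB_0\inv W_0)$ (since $D_V\tilde\calG_0(1,W_0)=D\calF(W_0)$) gives $\|V_\eps-V_0\|_{H^1}\to0$.

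The main obstacle is step (b): making precise that $I-D\calG_\eps(W_\eps)\to\calL_0$ in operator norm, \emph{evaluated at the moving point $W_\eps$ rather than at the fixed $W_0$}. This is handled by writing $\|(I-D\calG_\eps(W_\eps))-\calL_0\|\le \|D\calF(W_\eps)-D\calF(W_0)\| + \|D\calF(W_\eps)-D\calG_\eps(W_\eps)\|$, the first term going to $0$ by Proposition~\ref{p:DF}(i) since $\|W_\eps-W_0\|_{H^1}\to0$, and the second by Proposition~\ref{p:DFG} (whose hypothesis $\|W_\eps-W_0\|_{H^1}\le\delta$ holds for small $\eps$). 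Then invertibility of $\calL_0$ plus a Neumann-series argument gives uniform boundedness of $(I-D\calG_\eps(W_\eps))\inv$ and its norm convergence to $\calL_0\inv$. Everything else is routine given the machinery already assembled.
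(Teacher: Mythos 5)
Your proposal is correct and follows essentially the same route as the paper: part (1) is the same scaling computation (the paper differentiates $\int W_{0,\beta}^2 = \beta^{2-1/\mu}\int W_0^2$ directly rather than differentiating $W_{0,\beta}$ pointwise and integrating by parts, which sidesteps the small decay check needed to kill the boundary term $\xx W_0^2$), and part (2) matches the paper's argument exactly, down to the formulas $V_\eps = -(I-D\calG_\eps(W_\eps))\inv\calB_\eps\inv W_\eps$ and $V_0=-(I-D\calF(W_0))\inv\calB_0\inv W_0$ and the use of Propositions~\ref{p:Bepslim} and~\ref{p:DFG}. Your step (b), making explicit the operator-norm convergence at the moving point $W_\eps$ via the triangle inequality and a Neumann series, is precisely the detail the paper leaves implicit.
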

\begin{proof}
1. We have 
\begin{align}
  \int_\mathbb{R} W_{0,\beta}^{2}(\xx)\, d\xx &= \beta^{2} \int_\mathbb{R} W_0(\beta^{1/\mu}\xx)\,d\xx
  = \beta^{2-1/\mu}\int_\mathbb{R} W_0^2(z)\, dz\,.
\end{align}
Hence at $\beta =1$,
\begin{equation}
  \frac{d}{d \beta} \int_\mathbb{R} W_{0,\beta}^2\, d\xx = \int_\mathbb{R} 2 W_0 \frac{\partial}{\partial \beta} W_{0,\beta}\, d\xx = \frac{2\mu -1}{\mu} \int_\mathbb{R} W_0^2 \,d\xx\,.
\end{equation}
2. From differentiating the traveling wave equations, \eqref{e:beta_evp3}
for $W_{\eps,\beta}$ and \eqref{e:beta_W0} for $W_{0,\beta}$, against $\beta$, we get 
\begin{align}
      V_\eps:=           \frac{\partial}{\partial \beta} W_{\eps,\beta} \Big|_{\beta = 1} &= - (I - D\calG_\eps(W_\eps))^{-1}(\calB_\eps^{-1}W_\eps)\,,\\
 V_0:=\frac{\partial }{\partial \beta} W_{0,\beta} \Big|_{\beta =1} &= - (I - D\calF(W_0))^{-1}(\calB_0^{-1}W_0)\, .
\end{align}
The convergence $V_\eps \to V_0$ in $H^{1}$ is obtained through the operator norm convergence 
$\calB_\eps^{-1}\to \calB_0^{-1}$ and the estimates in Proposition~\ref{p:DFG}. 
One should note that we lack a rate of convergence due to our result for  
$\calB_{\eps}^{-1} \to \calB_0^{-1}$.
\end{proof}

\begin{lemma}
We have that as $\eps \to 0$,
\begin{equation}
  \frac{d}{dc}\calH(u_c) = 2c_\alpha^{3}\eps^{\mu-1}\frac{2\mu - 1}{\mu} \int_\mathbb{R} W_0^{2}\, d\xx + o(\eps^{\mu-1})\,.
\label{e:dhdc}
\end{equation}
\end{lemma}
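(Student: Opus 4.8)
The plan is to sidestep differentiating the Hamiltonian integral \eqref{e:H_Weps} in $c$ directly, and instead use the variational identity $\delta\calH=-c\,\delta\calI$ on solitary-wave profiles noted after \eqref{d:calI}. To make $\tfrac{d}{dc}$ meaningful despite the non-smoothness of $\eps\mapsto W_\eps$, I would work locally through the analytic $\beta$-parametrization of Proposition~\ref{prop:betaAnalyt}: fix a small $\eps>0$, set $c^2=c_\alpha^2+\beta\eps^\mu$, and regard the speed-$c$ wave $u_c=(q,p)$ with $\D_z q(z)=\eps^\mu W_{\eps,\beta}(\eps z)$ and $p=-c\,\D_z q$ as the smooth curve $\beta\mapsto u_{\eps,\beta}$ in the phase space underlying the reduced functionals \eqref{e:hamreduce} and \eqref{d:calI}. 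Since $D\calH(u_c)=-c\,D\calI(u_c)$ at every profile, the chain rule gives
\begin{equation*}
\frac{d}{dc}\calH(u_c)=\bigl\langle D\calH(u_c),\tfrac{d}{dc}u_c\bigr\rangle=-c\,\bigl\langle D\calI(u_c),\tfrac{d}{dc}u_c\bigr\rangle=-c\,\frac{d}{dc}\calI(u_c),
\end{equation*}
so that only $\calI$ must be differentiated, and the $\beta$-analogue of \eqref{e:calIuc} provides the closed form $\calI(u_c)=-c\,\eps^{2\mu-1}\!\int_\R W_{\eps,\beta}^2\,d\xx$.

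Next I would differentiate. With $\eps$ fixed one has $dc/d\beta=\eps^\mu/(2c)$; writing $g(\beta)=\int_\R W_{\eps,\beta}^2\,d\xx$ and recalling $V_\eps=\D_\beta W_{\eps,\beta}\big|_{\beta=1}$ and $W_\eps=W_{\eps,1}$, so that $g'(1)=2\int_\R W_\eps V_\eps\,d\xx$, a one-line computation yields
\begin{equation*}
\frac{d}{dc}\calH(u_c)\Big|_{\beta=1}=-c\,\frac{d}{dc}\calI(u_c)\Big|_{\beta=1}=c\,\eps^{2\mu-1}\!\int_\R W_\eps^2\,d\xx+4c^3\eps^{\mu-1}\!\int_\R W_\eps V_\eps\,d\xx.
\end{equation*}
Because $\mu=\alpha-1>0$, the first term is $O(\eps^{2\mu-1})=o(\eps^{\mu-1})$.

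Finally I would pass to the limit in the remaining term, using $c=(c_\alpha^2+\eps^\mu)^{1/2}\to c_\alpha$, the bound $\|W_\eps-W_0\|_{H^1}\to0$ from Theorem~\ref{t.main}, and the convergence $V_\eps\to V_0$ in $H^1$ established in the preceding lemma; together these give $\int_\R W_\eps V_\eps\,d\xx\to\int_\R W_0 V_0\,d\xx$. The same lemma evaluates $2\int_\R W_0 V_0\,d\xx=\tfrac{2\mu-1}{\mu}\int_\R W_0^2\,d\xx$, whence $4c^3\eps^{\mu-1}\!\int_\R W_\eps V_\eps\,d\xx=2c_\alpha^3\,\tfrac{2\mu-1}{\mu}\,\eps^{\mu-1}\!\int_\R W_0^2\,d\xx+o(\eps^{\mu-1})$, which is exactly \eqref{e:dhdc}.

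The step I expect to be the real work is the first one: justifying that $D\calH(u_c)=-c\,D\calI(u_c)$ holds rigorously (not merely formally) and that the chain rule applies. This calls for fixing the Banach space in which the profile pair $(q,p)$ varies --- note that $q$ itself is only bounded and monotone, so the correct variables are the finite differences $\delta^+_m q\in L^2$ together with $p$ --- and checking both that $\calH$ is $C^1$ there along the solitary-wave family and that $\beta\mapsto u_{\eps,\beta}$ is differentiable into it, which follows from Proposition~\ref{prop:betaAnalyt} and the scaling relations. The remaining estimates are routine; one minor point is that, since $\calB_\eps^{-1}\to\calB_0^{-1}$ comes without a convergence rate, only an $o(\eps^{\mu-1})$ error can be claimed --- which is all that \eqref{e:dhdc} asserts.
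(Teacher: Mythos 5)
Your proposal is correct and follows essentially the same route as the paper: both reduce $\tfrac{d}{dc}\calH$ to $-c\,\tfrac{d}{dc}\calI$ via the variational identity $\delta\calH=-c\,\delta\calI$, differentiate the closed form $\calI(u_c)=-c\,\eps^{2\mu-1}\int_\R W_{\eps,\beta}^2\,d\xx$ in $\beta$ at $\beta=1$, and then pass to the limit using $V_\eps\to V_0$ in $H^1$ and the scaling identity $2\int_\R W_0V_0\,d\xx=\tfrac{2\mu-1}{\mu}\int_\R W_0^2\,d\xx$ from the preceding lemma. Your added remarks on rigorously justifying $D\calH=-c\,D\calI$ and on the absence of a convergence rate for $\calB_\eps^{-1}\to\calB_0^{-1}$ are accurate observations about points the paper treats only briefly.
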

\begin{proof}
Using the definition \eqref{d:Wepsbeta} of $W_{\eps,\beta}$
and with the scaling $c^{2}= c_\alpha^{2}+\beta \eps^{\mu}$, 
from \eqref{e:calIuc} we get
\begin{align}
   \calI(u_c) 
    &= -c \int_\mathbb{R}  (\eps^{\mu}W_{\eps,\beta}(\eps z))^2\, dz
    = -c  \eps^{2\mu-1}\int_{\R}W_{\eps,\beta}^2(\xx)\, d\xx \, .
\end{align}
Then, fixing $\eps$ and differentiating in $\beta$ at $\beta=1$, 
since $d\beta/dc = 2c\eps^{-\mu}$ we find 
\begin{align}
    \frac{d}{dc} \calI(u_c)
     &= -\eps^{2\mu-1}\int_\mathbb{R} W_{\eps,\beta}^2(z)\, d\xx 
     - c \eps^{2\mu-1}
     \int_\mathbb{R} 2W_{\eps,\beta} \frac{\partial}{\partial \beta} W_{\eps,\beta} 
     \, d\xx\, \frac{d\beta}{dc}  \Big|_{\beta =1} 
     \nonumber \\
    &= O(\eps^{2\mu-1}) -2c^2 \eps^{\mu-1}\int_\mathbb{R} 2W_\eps\, V_\eps \, d\xx 
  \,.
\end{align}
Recalling that $\delta \calH = -c \delta \calI$, we have
\begin{equation}
               \frac{d}{dc} \calH(u_c) = 2c^{3}\eps^{\mu-1}\int_\mathbb{R} 
	       2 W_\eps\, V_\eps \,d\xx 
	       + O(\eps^{2\mu -1})\,. 
\end{equation}
Expanding $c^{2}=c_\alpha^2+\eps^\mu$ and using the previous lemma gives 
\begin{align}
    \frac{d}{dc} \calH (u_c) &= 
    2c_\alpha^{3}\eps^{\mu-1} \int_\mathbb{R} 
    2 W_0\,V_0\,d\xx
    + o(\eps^{\mu-1})
     \nonumber \\
   &= 2c_\alpha^{3}\eps^{\mu-1}\frac{2\mu-1}{\mu} \int_\mathbb{R} W_0^{2}\, d\xx + o(\eps^{\mu-1})\,.
\end{align}
This completes the proof. 
\end{proof}

Now, through multiplying \eqref{e:dhdc} by 
\[
\frac{dc}{d\eps} = 
\frac{\mu\eps^{\mu-1}}{2c} =  
\frac{\mu\eps^{\mu-1}}{2c_\alpha}   
+O(\eps^{2\mu-1}) \,,
\]
we deduce \eqref{e:dHdeps}.
This completes the proof of Theorem~\ref{t.ham}.
\end{proof}

We conclude by calculating the Hamilonian in the case  of the Calogero-Moser lattice when $\alpha=2$.
\begin{proof}[Proof of Theorem~\ref{t.CMH}]
We have $\calH(u_c)\to0$ as $c\to\pi^+$ by Proposition~\ref{p:Ham}, so the claimed formula
$\calH(u_c)=\frac12(c^2-\pi^2)$ follows by integration from the formula
\begin{equation}
    \frac{d\calH}{dc} = -c\frac{d\calI}{dc} = c\,,
\end{equation}
which holds due to the following computation.
\end{proof}
\begin{lemma}
When $\alpha=2$, for all $c>\pi$  we have $\calI(u_c) = \pi -c$.
\end{lemma}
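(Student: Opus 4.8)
The plan is to evaluate $\calI(u_c)$ directly from its definition \eqref{d:calI} using the explicit profile \eqref{e:explicit_eq}. In the Calogero--Moser case the solitary wave has the form $x_j(t)=j-\vp(j-ct)$, so in the notation of \eqref{e:hamreduce}--\eqref{d:calI} we have $q=\vp$ and, by \eqref{e:qp1}, $p=-c\,\D_z q=-c\vp'$. Hence from \eqref{d:calI},
\[
\calI(u_c)=\int_{-\infty}^\infty p(z)\,\D_z q(z)\,dz
= -c\int_{-\infty}^\infty \vp'(z)^2\,dz
\]
(in agreement with \eqref{e:calIuc}), and it remains only to compute the last integral.

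For this I would change variables from $z$ to $\vp$ using \eqref{e:explicit_eq}. Writing $\lambda=c^2-\pi^2>0$, equation \eqref{e:explicit_eq} reads $z=\vp+\tfrac{\pi}{\lambda}\tan(\pi\vp)$, which defines a smooth strictly increasing bijection from $\vp\in(-\tfrac12,\tfrac12)$ onto $z\in\R$, with derivative $\tfrac{dz}{d\vp}=1+\tfrac{\pi^2}{\lambda}\sec^2(\pi\vp)>0$; in particular $0<\vp'(z)\le(1+\pi^2/\lambda)^{-1}$, so $\vp'\in L^2(\R)$ and the substitution is legitimate. This gives
\[
\int_{-\infty}^\infty \vp'(z)^2\,dz
= \int_{-1/2}^{1/2}\Bigl(\frac{dz}{d\vp}\Bigr)^{-1}\!d\vp
= \int_{-1/2}^{1/2}\frac{\lambda\,d\vp}{\lambda+\pi^2\sec^2(\pi\vp)} .
\]
I would then write the integrand as $1-\dfrac{\pi^2\sec^2(\pi\vp)}{\lambda+\pi^2\sec^2(\pi\vp)}$, substitute $t=\tan(\pi\vp)$ so that $\pi\sec^2(\pi\vp)\,d\vp=dt$ and (using $\lambda+\pi^2=c^2$) $\lambda+\pi^2\sec^2(\pi\vp)=c^2+\pi^2t^2$, and evaluate the resulting elementary integral $\int_{-\infty}^\infty \frac{\pi\,dt}{c^2+\pi^2t^2}=\frac{\pi}{c}$. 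Thus $\int_{-\infty}^\infty \vp'(z)^2\,dz = 1-\pi/c$, whence $\calI(u_c)=-c(1-\pi/c)=\pi-c$.

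There is no real obstacle here beyond the bookkeeping of this elementary quadrature; the only point meriting a line of justification is the legitimacy of the change of variables and the integrability of $\vp'$, both immediate from the monotonicity and boundedness of $\vp'$ noted above. As a consistency check one may observe that $\calI(u_c)=\pi-c$ yields $d\calI/dc=-1$, hence $d\calH/dc=-c\,d\calI/dc=c$, which on integration (with $\calH\to0$ as $c\to\pi^+$ by Proposition~\ref{p:Ham}) reproduces Theorem~\ref{t.CMH}.
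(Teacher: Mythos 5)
Your proposal is correct and follows essentially the same route as the paper: both convert $-c\int\vp'^2\,dz$ into an integral over $\vp\in(-\tfrac12,\tfrac12)$ using the derivative relation from \eqref{e:explicit_eq}, split off the constant term, and evaluate the remaining piece by a tangent substitution (the paper's $cy=\pi\tan\pi\vp$ versus your $t=\tan\pi\vp$). Your added remarks on monotonicity and integrability of $\vp'$ are a harmless extra justification of the same computation.
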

\begin{proof} 
For $\alpha=2$, the solitary waves satisfy $x_j(t)=j-\vp(j-ct)$
with $\vp(z)$ satisfying \eqref{e:explicit_eq}. Hence $q(z)=\vp(z)$,
so from \eqref{d:calI} and \eqref{e:qp1} it follows
\[
\calI(u_c) = -c\int_\R \left(\frac{d\vp}{dz}\right)^2\,dz = 
-c\int_{-1/2}^{1/2} \frac{d\vp}{dz}\,d\vp \,,
\]
since $\vp\to \pm\frac12$ as $z\to\pm\infty$. Differentiating \eqref{e:explicit_eq}, we see
\[
(c^2-\pi^2) = \frac{d\vp}{dz}(c^2 +\pi^2 \tan^2\pi\vp) \,,
\]
whence
\[
\calI(u_c) = 
-c\int_{-1/2}^{1/2}  \left(1 - \frac{\pi^2\sec^2\pi \vp}{c^2+\pi^2\tan^2\pi\vp} \right)\,d\vp \,.
\]
Using the substitution $c y = \pi\tan\pi \vp$ one finds the claimed result.
\end{proof}

%%%%%%%%%
%%\fix{ack NSF grant}
\section*{Acknowledgements}
  This material is based upon work supported by the National Science Foundation under
  Grant No.~DMS 2106534. Thanks go to Doug Wright for very helpful discussions.

\bibliographystyle{siam}
\bibliography{CM2}
%-------
% % comment out these lines later
% \bigskip \hrule \bigskip
% \fix{Delete eventually:} All references in the bib file appear since  
% \begin{verbatim} \nocite{*} \end{verbatim} is enabled.
%% \nocite{*}  % This line prints all bibtex entries, cited or not
% \bigskip \hrule \bigskip

\end{document}